\tikzset{
  n/.style={shape=circle, draw=black, minimum size=20pt, font=\small},
  diredge/.style={->, >=stealth, thick},
  undiredge/.style={-, >=stealth, thick},
}
\newtheorem{thm}{Theorem}[section]
\newtheorem{cor}[thm]{Corollary}
\newtheorem{prop}[thm]{Proposition}
\newtheorem{lmm}[thm]{Lemma}
\theoremstyle{definition}
\newtheorem{defi}[thm]{Definition}
\newtheorem{rmk}{Remark}
\newtheorem{conj}{Conjecture}
\newcommand{\N}{\mathbb{N}}
\newcommand{\E}{\mathbb{E}}
\newcommand{\card}[1]{\left| #1 \right|}
\newcommand{\ind}[1]{\mathbbm{1}_{#1}}
\DeclareMathOperator{\ballop}{\mathcal{B}}
\newcommand{\ball}[2]{\ballop_{#1} \left(#2\right)}
\DeclarePairedDelimiter{\ios}{\{}{\}}
\newcommand{\s}{\ios*}
\newcommand{\nashe}{\textsc{Nash} equilibrium}
\newcommand{\nashes}{\textsc{Nash} equilibria}
\newcommand{\unG}{\tilde{G}}
\newcommand{\revE}{\hat{E}}
\title{Edge-dominance games on graphs}
\author{Farid Arthaud\\
{\footnotesize MIT CSAIL}\\
{\footnotesize Cambridge, Massachusetts}\\
{\footnotesize \href{mailto:farto@csail.mit.edu}{\texttt{farto@csail.mit.edu}}}
\and
Edan Orzech\\
{\footnotesize MIT CSAIL}\\
{\footnotesize Cambridge, Massachusetts}\\
{\footnotesize \href{mailto:iorzech@csail.mit.edu}{\texttt{iorzech@csail.mit.edu}}}
\and
Martin Rinard\\
{\footnotesize MIT CSAIL}\\
{\footnotesize Cambridge, Massachusetts}\\
{\footnotesize \href{mailto:rinard@csail.mit.edu}{\texttt{rinard@csail.mit.edu}}}}
\date{}
\begin{document}
\maketitle
\thispagestyle{empty}

\begin{abstract}
	We consider zero-sum games in which players move between adjacent states,
	where in each pair of adjacent states one state dominates the other.
	The states in our game can represent positional advantages in physical
	conflict such as high ground or camouflage, or product characteristics
	that lend an advantage over competing sellers in a duopoly.
	We study the equilibria of the game as a function of the topological and
	geometric properties of the underlying graph.
	Our main result characterizes the expected payoff of both players
	starting from any initial position, under the assumption that the graph
	does not contain certain types of small cycles.
	This characterization leverages the block-cut tree of the graph, a
	construction that describes the topology of the biconnected components of
	the graph.
	We identify three natural types of (on-path) pure equilibria, and
	characterize when these equilibria exist under the above assumptions.
	On the geometric side, we show that strongly connected outerplanar graphs
	with undirected girth at least \( 4 \) always support some of these
	types of on-path pure equilibria.
	Finally, we show that a data structure describing all pure equilibria can
	be efficiently computed for these games.
\end{abstract}
\newpage

\clearpage
\setcounter{page}{1}

\section{Introduction}
Consider two players moving among the vertices of a graph~\( G \), where at each
timestep both players simultaneously move to a neighbor of their current vertex,
or remain.
The game ends with some constant probability~\( (1-\delta) \in \left] 0; 1
\right[ \) at each timestep.
The edges of the graph are undirected for movement but directed for payoff:
at the conclusion of the game, each player receives a payoff of \( 1 \) if they
are at a parent of the other player, \( -1 \) if the other player is at one of
their parents, and \( 0 \) if their vertices are not neighbors (or if they are at
the same vertex).

This setting is a general way of constraining how players change actions between
rounds in a repeated symmetric zero-sum game: each vertex of the graph \( G \) is
a possible action, the edges of the graph describe the payoffs of the game (the
symmetric zero-sum game matrix is seen as an adjacency matrix) and players can
only move between \emph{similar} actions between rounds (i.e.\ actions that have
an edge between them).

These games model aspects of physical conflict in which adjacent vertices
represent positions where adversaries can engage.
For instance, terrain features such as high ground or easy camouflage may yield
an advantage to one of the adjacent positions.
These games also model aspects of duopoly markets in which each vertex of the
graph represents a set of features of a product manufactured by two competing
companies.
The companies can only incrementally change their product, by moving along the
edges of the graph.
Adjacent sets of features compete for the same consumers, with the direction of
the edge indicating which product is more profitable to sell when these two
products are on the market (as a function of consumer preference, production
prices and selling prices).
On the other hand, products further away from each other in the graph have
features that appeal to sufficiently different segments of consumers that they
have no discernible advantage over one another.
In turn, this can be seen as a dynamic variant of discrete \textsc{Hotelling}
models~\citep{hotelling} (recall location in \textsc{Hotelling} models is often
interpreted as product differentiation), an area of interest in economics and
algorithmic game
theory~\citep{RePEc:upf:upfgen:96,DBLP:conf/esa/DurrT07,DBLP:journals/ijgt/Fournier19}.

\bigskip{}
We focus on characterizing the payoff that each player can obtain from any given
initial position in a \nashe{} of the game.
We study the conditions under which one player has a \emph{winning strategy},
namely a strategy that gives the player strictly positive expected payoff, as
well as how much payoff this player can obtain in expectation.
We also study the conditions under which both players have \emph{safe
strategies}, namely strategies where they do not incur strictly negative payoff.

We identify three types of on-path pure \nashes{} (meaning players do not mix
unless a player has deviated) in safe strategies that are extremal equilibria in
cycle graphs~\( C_n \) (for \( n \geq 4 \)).
The first type is the \emph{$k$-chase} equilibrium, in which one player
follows \( k \)~steps behind the other player, without ever reaching them.
The second type is the \emph{walking together} equilibrium, in which both players
start at the same vertex and always deterministically move together to a same
vertex at each round, ensuring a stalemate.
The third type is the \emph{static} equilibrium, where both players remain static
at a distance from each other.
We provide characterizations under which $k$-chase, walking together and static
equilibria exist on certain classes of graphs.
These characterizations imply sufficient conditions for the existence of on-path
pure \nashes{} in safe strategies in general.

We also consider the existence of these types of equilibria in \emph{planar
graphs}.
These graphs are important as possible models of maps of geographical locations,
and because as our results will imply in Sections~\ref{sec:constructions}
and~\ref{sec:outer}, non-planarity plays a role in determining the existence of
the above equilibria.
We provide sufficient conditions on \emph{outerplanar graphs}, a particular class
of planar graphs, such that walking together and \( 2 \)-chase equilibria exist.

\subsection{Results and techniques}\label{subsec:results}
\paragraph{Main result}
Our main result characterizes the value of the game for any given initial
position, under the assumption that the graph does not contain certain types of
small cycles as a subgraph.
\begin{thm}[Informal version of Theorem~\ref{thm:char4}]\label{thm:char4inf}
	In any graph of undirected girth at least \( 4 \) that does not contain
	\emph{unbalanced} small cycles, a player is in a losing position if and
	only if their resulting connected component when cutting the graph
	halfway between the two players is always an (out)-directed rooted tree,
	rooted at the vertex where the graph was cut, and the two players are
	separated by this cut.
\end{thm}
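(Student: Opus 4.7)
The plan is to establish both directions by structural analysis of the losing player's side of the ``halfway cut,'' leveraging the block-cut tree decomposition of $G$. The hypotheses --- undirected girth at least $4$ and no unbalanced small cycles --- are precisely what rule out the local exchanges that would otherwise let a player either escape from a tree-like component by a short detour, or get trapped in a non-tree component by a bad orientation.

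For the forward direction, suppose the player's component is an out-directed rooted tree anchored at the cut vertex $c$, with the two players separated by $c$. I would exhibit an explicit pursuit strategy for the opponent: drive the opponent toward $c$, then follow the target along the tree's out-edges. Because every edge in the player's component is oriented away from $c$, any adjacency realized at termination places the opponent at a parent of the player, yielding the opponent payoff $+1$. The girth-at-least-$4$ hypothesis forbids the short cycles that could give the player an escape route across the cut, and the no-unbalanced-small-cycle hypothesis rules out four-cycles in which a single reversal could flip the local domination relation. With constant per-step termination probability $1-\delta$, discounting then yields a strictly negative expected payoff to the player.

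For the reverse direction, I would assume the player's component fails to be such an out-directed rooted tree at some reachable cut. The failure arises in one of two ways: (i) some edge in the component is oriented toward the cut, or (ii) the component contains an undirected cycle. In case (i), the in-edge furnishes the player with an ``upward'' move that inverts the pursuer's advantage; I would use it to reach a static or walking-together equilibrium near $c$. In case (ii), girth $\geq 4$ together with the absence of unbalanced small cycles forces every short cycle in the component to be balanced, so the $k$-chase and walking-together equilibrium constructions identified earlier in the paper apply directly and give the player a safe on-path strategy.

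The principal obstacle will be coordinating the structural hypothesis with the dynamics of the moving cut: as the players move, the halfway cut shifts and each player's component changes, so the tree condition must be tracked across all reachable configurations rather than only the starting one. I would manage this inductively on the block-cut tree, analyzing each block independently and stitching the analyses through cut vertices; the girth hypothesis then ensures that the opponent's pursuit strategy in the forward direction does not inadvertently allow the cut to slide past a bottleneck where the player could reach the opponent's side of $G$. A secondary delicate point is that in the reverse direction the constructed safe strategy must not drift into a configuration where the tree property \emph{does} hold --- the balanced-cycle hypothesis is what prevents such a drift by guaranteeing that the player's cycle-based counterplay never commits them to an out-tree region.
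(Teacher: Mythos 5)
Your skeleton (split into two directions, reduce to the block-cut tree, treat the losing side as the tree case of Theorem~\ref{thm:tree}) matches the paper's, but the reverse direction has a genuine gap at exactly the point where the theorem's hypotheses are actually consumed. In your case (ii) you argue that because every short cycle in the player's component is balanced, ``the $k$-chase and walking-together equilibrium constructions \ldots apply directly.'' This does not work: by Remark~\ref{rmk:wtcyc}, in an edge-decisive graph WT and $2$-chase equilibria live only on \emph{directed} cycles, and a balanced $4$- or $5$-cycle (let alone a general nontrivial biconnected component) need not contain any directed cycle at all --- the paper explicitly notes that its biconnected components are not assumed strongly connected. The paper's actual key lemma is local and does not invoke cycle-based equilibria: if a player sits in a nontrivial biconnected component at distance exactly $2$ from the opponent and \emph{none} of the three candidate moves (stay put, move to the shared neighbor if it is a parent, move to a second neighbor inside the component) is safe, then the forced edge orientations assemble into a $4$- or $5$-cycle that must be one of $C^4_{2,2}$, $C^4_{3,1}$, $C^5_{4,1}$, $C^5_{3,2}$, contradicting the forbidden-subgraph hypothesis. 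That one-step invariant (``I can always keep the opponent at distance $\geq 2$ or collide with them'') is what makes every nontrivial biconnected component safe, and it also disposes of your worry about the ``moving cut'': safety is maintained round by round, so only the initial configuration needs to be classified. Relatedly, you place the burden of the girth/unbalanced-cycle hypotheses on the forward direction, but there they do essentially nothing; the forward direction is the tree argument of Theorem~\ref{thm:tree}.

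A secondary issue in your forward direction: a pursuer who deterministically ``follows the target along the tree's out-edges'' cannot win, since the evader can mirror or dodge any pure strategy (Proposition~\ref{prop:algpure} shows all pure play yields payoff $0$). The winning strategy, as in Lemma~\ref{lmm:3path}, must mix between advancing and staying so that the evader cannot match the pursuer's move with probability $1$; the strictly positive payoff then comes from an induction on the depth of the out-directed subtree below the evader. You should also handle the case where the halfway point of the shortest path is \emph{not} an $(x,y)$-cut vertex: then both players enter the nontrivial biconnected component containing the midpoint while still at distance at least $2$, and the local safety lemma shows neither can lose --- this is why the statement requires the players to be separated by the cut.
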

More specifically, the cycles that we exclude are all \( 3 \)-cycles and
the four \emph{unbalanced} cycles of length~\( 4 \) or \( 5 \), which are cycles
with edges oriented such that there is exactly one maximal vertex (with no incoming
edges) and one minimal vertex (with no outgoing edges).
In particular, our characterization holds for trees and more generally all graphs
of (undirected) girth at least \( 6 \).

The key concept in this characterization's precise formulation is the position
of the players in the \emph{block-cut tree} of the (undirected) graph.
The block-cut tree of a graph is the tree whose vertices are the maximal biconnected
components and the cut vertices of the graph, with edges between each biconnected
component and all of its cut vertices.
The block-cut tree is important because we establish that players have a safe
strategy as long as they find themselves in a biconnected component.
Intuitively, if cutting the graph halfway between the players leaves a player
with biconnected components (or other types of safe positions) in their half,
they can reach this safe position before the other player reaches them.

\paragraph{Static and cycle-based equilibria}
We then characterize the existence of walking together, \( 2 \)-chase, and static
equilibria.
Under the same topological assumptions as above we show that walking together and
\( 2 \)-chase equilibria exist if and only if the graph contains a directed cycle
-- the weakest possible condition we could hope for.
We also provide an exact characterization of graphs with static equilibria under
these assumptions (Proposition~\ref{prop:nostatic}).
We then show the importance of small unbalanced cycles for these
characterizations by presenting constructions that contain some small unbalanced
cycles, verify our characterization above, and yet do not admit these equilibria
-- despite being strongly connected and of girth \( 5 \) (see
Section~\ref{sec:constructions} for the constructions).

We provide another sufficient condition using graph geometry rather than
topology, showing that any strongly connected \emph{outerplanar} graph with
undirected girth at least \( 4 \) supports a \( 2 \)-chase and a walking together
equilibrium (Theorem~\ref{thm:outerplanar}).
An outerplanar graph is a planar graph in which all of the vertices belong to the
outer face of the graph (for some planar embedding of the graph). Intuitively,
this result holds for two reasons. One, such a graph has a face whose edges form
a directed cycle, which gives a player some safety from the other player's
deviations when they remain on it.
Two, because of outerplanarity, a player who exits that directed cycle from some
vertex $v$ can only reenter it at a vertex that is either $v$ or its neighbors on
the cycle. Therefore, the player who does not deviate can remain at a safe
distance from the deviator at all times. This result falls into the larger
context of the existence of static and cycle-based equilibria in \emph{planar
graphs}.
We conjecture in Section~\ref{sec:outer} that all strongly connected planar
graphs with girth at least \( 4 \) have a static or cycle-based equilibrium,
given our result on outerplanar graphs and that our constructions without
cycle-based equilibria are non-planar.

\paragraph{Computational aspects}
We show that a data structure describing all pure equilibria can be computed
efficiently, despite the number of pure equilibria potentially being exponential.
Computing all mixed equilibria in the game (for all starting positions) results
from a straightforward application of \textsc{Bellman}'s equations and value
iteration algorithms.

\paragraph{Roadmap}
The remainder of this section contains related work.
In Section~\ref{sec:prelim} we define our game formally, and characterize
equilibria in cycle graphs to introduce important definitions and lemmas.
Section~\ref{sec:trees} characterizes winning positions in trees and
Section~\ref{sec:g6} characterizes all equilibria in graphs of girth at least \(
6 \).
The results from these two sections are combined to obtain our main results in
Section~\ref{sec:g4} for graphs with no small unbalanced cycles.
Section~\ref{sec:constructions} contains our strongly connected constructions
without cycle-based or static equilibria.
Section~\ref{sec:outer} contains our results for outerplanar graphs, and finally
Section~\ref{sec:comp} presents our algorithm to efficiently compute pure
equilibria.

\subsection{Related work}
\paragraph{Pursuit-evasion games}
The branch of research conceptually closest to ours is pursuit-evasion games on
graphs, also known as cops and robbers problems, originally introduced
by~\citet{quilliot1978these}.
In this game, a robber player chooses a vertex in a finite graph, after which \( k
\) cops choose positions in the graph.
Then, turn by turn, the robber and cops move along the edges in the graph, until
either a cop reaches the same vertex as the robber or until a configuration repeats
twice.
In the latter case, the robber manages to evade the cops and wins, whereas if a
cop reaches the robber, the robber loses.
In the original paper, \citet{quilliot1978these} characterizes the graphs in
which one cop is sufficient to capture the robber, and a long line of work
followed on finding bounds on the \emph{cop number} of a graph (the number of
cops necessary to capture the
robber)~\citep{bonato2011game,DBLP:journals/tcs/KonstantinidisK16,DBLP:journals/tcs/BonatoCP10}.

\citet{bonato2011game} provide a general survey of variants of cops and robbers.
Several variants have some similarities to our setting.
\citet{DBLP:journals/tcs/KonstantinidisK16} study simultaneous-move cops and
robbers and show that the (appropriately defined) cop number of a graph is
unchanged relative to the classical cops and robbers.
\citet{DBLP:journals/ejc/Hamidoune87} introduces a variant on directed graphs,
however the direction of edges has a different meaning than in our game: players
are constrained to follow edges only in one direction, and the goal of the cops
is still to reach the same vertex as the robber.
\citet{DBLP:journals/tcs/BonatoCP10} introduce capture from a distance, where
cops win if any one of them comes within a certain distance of the robber.
As cops and robbers has traditionally been studied from a graph-theoretical or
combinatorial point of view, some works investigate game-theoretic formulations,
such as the recent work of \citet{DBLP:journals/dga/KehagiasK21}, and the survey
by \citet{DBLP:journals/dga/Luckraz19} of game-theoretic formulations.

There are several important differences between the game of cops and robbers and
our model:
	(i)~in our model the players are symmetric whereas in cops and robbers
	there is a pursuer and an evader;
	(ii)~the winning condition in our model is to reach a parent vertex of the
	other player, whereas in standard cops and robbers it is to reach the
	same vertex (and edges are undirected);
	(iii)~our model is simultaneous move whereas standard cops and robbers
	take turns.
These points mean that results from cops and robbers do not apply in our setting.
In particular, players can find themselves mixing between situations where they
have strictly positive, zero and strictly negative payoff (whereas in usual cops
and robbers there are no draws -- one player can force a win by
\textsc{Zermelo}'s theorem -- see Remark~\ref{rmk:3cyckk} for a simple example).
Moreover, no player can win with probability~\( 1 \) in our
model and players randomize to evade capture (see Remark~\ref{rmk:pathkk} for a
simple example, and Proposition~\ref{prop:algpure} for a proof).
In contrast, even in simultaneous cops and robbers, the cop number is defined as
the minimum number of cops such that the robber is captured with probability~\( 1
\)~\citep{DBLP:journals/tcs/KonstantinidisK16}, significantly altering the
analysis.
The main idea in simultaneous cops and robbers is for the cop to guess the next
move of the robber, and play as if their guess is correct -- with probability~\(
1 \), they will eventually guess correctly for sufficiently long that they will
capture the robber.
Such a strategy is not viable in our game since it requires that for every pair
of positions, the cop can win against the robber in the usual turn-based cops and
robbers.
Since our players are symmetric, if one player has a superior position to the
other then the converse cannot be true, and a guessing strategy can lead the
player in a superior position to end up in an inferior position by misguessing.
The optimal strategies in our setting can therefore ensure positive expected
payoff at best, but never capture with probability~\( 1 \) as in cops and robbers
and its variants -- and they very well can lead a player with strictly positive
expected payoff to obtain strictly negative payoff with non-zero probability.

More importantly, beyond our characterization of winning positions in the game,
many of our results concern properties of \( 0 \)-payoff equilibria the players
can be in, characterizing player behavior when neither player has an advantage
over the other.
To the best of our knowledge this has not been explored for cops and robbers,
where most work only focuses on characterizing when one player has an advantage
over the other (in particular through the cop
number)~\citep{DBLP:journals/dga/Luckraz19}.

\paragraph{Stochastic games}
Our game is an instance of a stochastic game~\cite{shapley1953stochastic}, i.e.\ an extensive-form game with a
state that is affected by the actions of both players (and potentially also
external randomness, but not in our case) and which affects the players' payoffs.
However, it has much more added structure which makes our analysis possible: the
state space is a cartesian product (of the graph's vertices with itself), each
player affects only one component of the state, and the payoff is related to the
allowed transitions (through the graph edges).
As far as we are aware, there are no results in the area of stochastic games
concerning games with this structure.

\paragraph{Discrete \textsc{Hotelling} models}
Finally, as mentioned earlier, our model has similarities with discrete
\textsc{Hotelling} models, such as those presented in
\citet{RePEc:upf:upfgen:96}.
These games are sometimes also called \textsc{Voronoi} games on graphs or
\emph{competitive facility location} games when the players can only locate at
vertices (similarly to our model).
In these models, two players choose a vertex of a graph (or sometimes a
position along an edge of a graph) and then are rewarded as a function of the
quantity of vertices or edges (sometimes weighted) that are closer to them than
to the other player~\citep{DBLP:journals/ijgt/Fournier19}.
The fundamental difference between these models and ours is that only adjacent
vertices in the graph have an advantage over one another in our model, whereas in
\textsc{Voronoi} games it is likely that most pairs of distinct vertices have
unequal payoff.
Moreover, to the best of our knowledge, these models are static and do not model
dynamics of relocation like ours.
Some works analyze best response dynamics in these
games~\citep{DBLP:conf/esa/DurrT07}, but with no restrictions on where players
can move to.

\section{Preliminaries}\label{sec:prelim}
Let \( G = (V,E) \) be a connected oriented graph, and \( \delta \in ] 0; 1 [ \).
An oriented graph is a directed graph with no loops or parallel edges, they are
the graphs obtained by assigning an orientation to each edge in an undirected
graph.
If \( (u, v) \in E \), we often write \( (u \to v) \in E \) or simply \( u \to v
\) when \( E \) is clear from context.
We write $u-v$ to say that there is an edge \( u \to v \) or \( v \to u \) in the
graph.
For convenience, we denote \( \revE = \{ (v,u) \mid (u,v) \in E \} \) the
reversed set of edges.
\begin{defi}
	A \textbf{path} in \( G \) is a list of at least \( 2 \) distinct
	vertices \( u_0 \to u_1 \to \cdots u_k \) with directed edges, whereas an
	\textbf{undirected path} is a list of at least \( 2 \) distinct vertices
	\( u_0, u_1, \ldots, u_k \) such that for each \( i \in \llbracket 0, k-1
	\rrbracket \), \( u_i \to u_{i+1} \) or \( u_i \leftarrow u_{i+1} \).
	We denote \( \unG \) the \emph{undirected graph} underlying \( G \).
	An \textbf{undirected cycle} in \( G \) is a cycle in \( \unG \), i.e.\
	an undirected path of length at least \( 3 \) such that the last vertex is
	a neighbor of the first.
	The \textbf{ball} of radius~\( r \) centered at vertex~\( u \) is,
	\[
		\ball{r}{u} = \{ v \in V \mid
			\exists k \in \llbracket 0; r \rrbracket,\;
			\exists v_0, \ldots, v_k, \;
			(u = v_0) - v_1 - \cdots - v_{k-1} - (v_k = v)
			\}.
	\]
	\( g(\unG) \) denotes the \textbf{undirected girth} of \( G \), i.e.\ the
	length of a shortest cycle in \( \unG \).
	We refer to it as \( g \) when \( G \) is clear from context.
\end{defi}

The game is defined by the graph \( G \) and initial positions for both players
\( (x_0, y_0) \in V^2 \).
At each timestep \( t \in \N \), we denote \( (x_t, y_t) \) the positions of the
players in the graph.
The strategies of the players are mappings from their current positions \( (x_t,
y_t) \) to a distribution over their neighborhoods \( \ball{1}{x_t} \) and \(
\ball{1}{y_t} \).
We denote \( \varphi_x: V^2 \to \Delta(V) \) the strategy of player~\( x \) and
\( \varphi_y \) for player~\( y \), where \( \Delta(V) \) is the simplex over the
vertices of \( G \).
For a given initial state~\( s_0 \), a distribution over histories of play \(
{\left( h_t \right)}_{t \in \N} \) is naturally induced by \( \varphi_x \) and \(
\varphi_y \): we write \( h \sim (\varphi_x, \varphi_y) \) for a random variable
\( h \) following this distribution when \( (x_0, y_0) \) is clear from context.
Note the game is defined in such a way that the strategies are memoryless: they
only depend on the current state and not on the history of play.

The game ends with probability \( (1-\delta) \) at the end of each round, for
some fixed parameter \( \delta \in \left] 0; 1 \right[ \).
At that point, the payoff of each player is \( 1 \) if they are at a parent of
the other player, \( -1 \) if they are at a child of the other player, and \( 0
\) otherwise (in particular, if both players are at the same vertex).
The game is a zero-sum game, and the expected payoff of player~\( x \) can be
written,
\begin{equation}\label{eq:defipayoff}
	u_x(\varphi_x, \varphi_y) = (1-\delta) \, \E_{h \sim (\varphi_x,
	\varphi_y)} \left[ \sum_{t \in \N} \delta^t \left( \ind{h_t \in E}
	- \ind{h_t \in \revE} \right) \right].
\end{equation}
As is often done in the repeated games literature and in order to simplify
analysis, we often interpret equation~\eqref{eq:defipayoff} as the payoff of a
discounted game: the game is then always infinite, and payoff at round \( t \) is
multiplied by a factor \( \delta^t \).
We also refer to the sum starting at \( t = 1 \) in
equation~\eqref{eq:defipayoff} as the \textbf{continuation payoff} of player~\( x
\).

\begin{defi}
	A pair of strategies is called a \nashe{} if neither player can increase
	their expected payoff by changing strategies.
	Note that by the minmax theorem, there always exists a \nashe{}.
	We call the \textbf{value} of a vertex~\( u \) over a vertex~\( v \) the
	minmax equilibrium value of player~\( x \) when players start at \( (u,
	v) \).
	A strategy is called \textbf{safe} for a player if its expected payoff is
	(weakly) positive, and \textbf{winning} if its expected payoff is
	strictly positive.
\end{defi}

\subsection{Cycle-based and static equilibria}\label{subsec:types}
We begin by considering cycle graphs, in which we characterize all equilibria and
identify three particular types of extremal pure equilibria.
This leads us to define the three types of equilibria we investigate in general
graphs in the following sections.
We also introduce important definitions and lemmas for the rest of the paper by
studying the directed \( 3 \)-path.

\begin{figure}
	\begin{subfigure}{.3\textwidth}
		\centering
		\begin{tikzpicture}
			\graph[clockwise, empty nodes, nodes={circle,fill=black},
			edges={>=stealth}] {subgraph C_n [n=3, ->]};
		\end{tikzpicture}
		\caption{The directed \( 3 \)-cycle.}\label{fig:3cyc}
	\end{subfigure}\hfill
	\begin{subfigure}{.3\textwidth}
		\centering
		\begin{tikzpicture}
			\graph[clockwise, empty nodes, nodes={circle,fill=black}, edges={>=stealth}] {subgraph C_n [n=4, ->]};
		\end{tikzpicture}
		\caption{The directed \( 4 \)-cycle.}\label{fig:4cyc}
	\end{subfigure}\hfill
	\begin{subfigure}{.3\textwidth}
		\centering
		\begin{tikzpicture}
			\graph[grow down, nodes={circle, draw=black}, edges={>=stealth}] {T -> M -> B};
		\end{tikzpicture}
		\caption{The directed \( 3 \)-path.}\label{fig:3path}
	\end{subfigure}
	\caption{Example graphs.}
\end{figure}
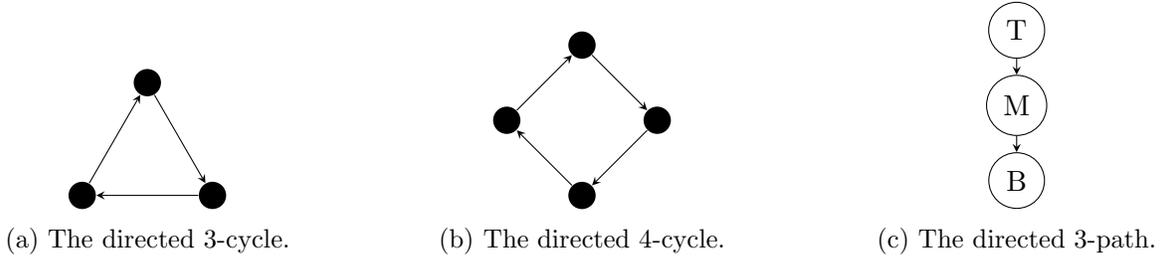

\paragraph{\( 3 \)-cycle}
In Figure~\ref{fig:3cyc} the directed \( 3 \)-cycle is shown: on this graph, our
game reduces to (repeated) rock-paper-scissors game, as the players are
unconstrained in which actions they can choose at each round.
Therefore, there exists a unique \nashe{} where both players uniformly mix over
all three possible movements (stay or move to one of their two neighbors).

\begin{rmk}\label{rmk:3cyckk} Note this example highlights a fundamental
	difference with cops and robber games: both players are mixing between
	outcomes that have strictly
	positive, zero, or strictly negative value for them, meaning players need
	to take a chance of capturing and a risk of being captured.
	This cannot occur in cops and robbers for several reasons: the moves are
	not simultaneous and the roles are asymmetric (the cop cannot be
	captured).
	In simultaneous cops and
	robbers~\citep{DBLP:journals/tcs/KonstantinidisK16}, it would be a
	cop-win graph since the cop will eventually collide with the robber with
	probability~\( 1 \) regardless of starting positions.
\end{rmk}

\paragraph{\( 4 \)-cycle}
The directed \( 4 \)-cycle in Figure~\ref{fig:4cyc} has more equilibria.
If both players start opposite from each other, their optimal strategy is to
randomize between their two neighbors with any distribution that puts at most
\( 1/2 \) probability on their parent.
Indeed, if one player moves counterclockwise to their parent with probability
strictly more than \( 1/2 \), the other player can ensure strictly positive
payoff by remaining at their current vertex; if one player puts any probability on
staying at their vertex then the other player can ensure strictly positive payoff
by moving to their child (clockwise).

If both players start at the same vertex, the optimal strategies are similar: any
mixing between their two neighbors that puts at most \( 1/2 \) probability on
their child is optimal.
In both starting positions the minmax one-round expected payoff is \( 0 \) for
both players, and by induction the overall minmax expected payoff is also \( 0
\).
The set of equilibria consists of all distributions that satisfy the above
conditions, therefore the players will find themselves either at the same vertex
or at opposite vertices at every round if they start in one of these positions.
Note two particular extremal equilibria in this graph are: (i) both players are
at the same vertex and move deterministically counterclockwise to the parent
(together) and (ii) both players are opposite from each other and move
deterministically clockwise to their child at each round.

In a longer cycle, more strategies exist: when players are far from each other,
all actions are equivalent, whereas when the players are at distance~\( 2 \) or
\( 3 \) from each other, one player will have to avoid the other (by moving in
the opposite direction).
In particular, another type of on-path pure equilibrium arises when the cycle is
of length at least \( 6 \): both players can remain static at vertices that are
distance at least \( 3 \) from one another.
The following definition generalizes the three extremal equilibria we have seen
in cycles so far.
\begin{defi}
	A \textbf{walking together equilibrium} (WT) is an equilibrium such that
	for every \( t \in \N \), \( x_t=y_t \) and \( x_{t+1} \ne x_t \) with probability~\( 1 \).
	A \textbf{\( k \)-chase equilibrium}, for \( k \in \llbracket 2; +\infty
	\llbracket \), verifies \( x_{t+k} = y_t \) for all \( t \in \N \).
	A \textbf{static equilibrium} is such that with probability~\( 1 \),
	there is a $t_0$ and vertices \( x_\infty, y_\infty \in V \) such that
	for every $t\ge t_0$, \( (x_t, y_t) = (x_\infty, y_\infty) \).
\end{defi}

Before our final example, we define an important property of certain graphs that
simplifies analysis of equilibria.
In all generality, having negative payoff at a given round could still lead to
compensations later on, for instance a player could accept one immediate round of
negative payoff to ensure many rounds of positive payoff later on.
We define edges and graphs for which this does not have an effect on winning
strategies.
\begin{defi}
	For a given value of \( \delta \), an edge \( u \to v \) is called
	\textbf{decisive} if the value of a player at \( u \) over a player at \(
	v \) is strictly positive.
	A graph \( G \) is called \textbf{edge-decisive} for a given \( \delta \)
	if all of its edges are decisive.
	\( \delta \) will be omitted when clear from context.
\end{defi}

\begin{rmk}\label{rmk:wtcyc}
	In an edge-decisive graph, a walking together equilibrium always steps
	towards parents: \( \forall t,\; (x_{t+1} \to x_t) \in E \).
	This is because when walking together to a child, either player could
	make a profitable deviation by not moving.
	Inversely, a \( 2 \)-chase equilibrium in an edge-decisive graph always
	steps towards children: \( \forall t,\; (y_t \to y_{t+1}) \in E \)
	(otherwise the chased player can stop after having taken an edge in the
	opposite direction).
	In particular, both types of equilibria correspond to a directed cycle in
	the graph~\( G \).
\end{rmk}

\paragraph{\( 3 \)-path}
To illustrate cases where one player has an advantage over the other, we look at
the directed \( 3 \)-path example illustrated in Figure~\ref{fig:3path}, when one
player lies at \( T \) and the other at \( B \).
The one-step game is equivalent to matching pennies (where the two sides of the
pennies are `move' or `stay'): the top player wins if exactly one of them moves
to \( M \) whereas the bottom player wins if either both or neither of them move.
However, the two outcomes where the bottom player wins the one-shot game are not
equivalent: if they both move, the continuation payoff is \( 0 \), since both
players will simply move to \( T \) in the next round.
If neither moves, the game repeats and the top player has some chance of winning
again.
Similarly, if just the top player moves then they gain payoff~\( 1 \) and the
game repeats (since the players have the same two actions each, the top player
going to \( B \) is dominated).
If just the bottom player moves, the game ends at the next round with both
players reaching \( T \).
Regardless, the top player has strictly positive expected payoff starting from
the initial condition.
\begin{lmm}\label{lmm:3path}
	In the \( 3 \)-path illustrated in Figure~\ref{fig:3path}, a player at \(
	T \) has strictly positive payoff over a player at \( B \).
\end{lmm}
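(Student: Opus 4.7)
The plan is to set up the Bellman equations at the three relevant states $(T, M)$, $(M, B)$, and $(T, B)$ and establish positivity through a cascade of arguments by contradiction. The key observation is that in a state where one player holds an immediate $+1$ advantage, a targeted deviation forces positivity without requiring detailed knowledge of the continuation.

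I would start by showing $V(M, B) > 0$. Suppose for contradiction that $V(M, B) \leq 0$. Consider the top player's deviation ``move to $B$'' at $(M, B)$: the resulting row of continuation values is $(V(B, B), V(B, M)) = (0, -V(M, B))$, which by hypothesis has all entries $\geq 0$. Hence the top player guarantees non-negative expected continuation value, and the Bellman equation yields $V(M, B) \geq (1-\delta) \cdot 1 + \delta \cdot 0 = 1 - \delta > 0$, contradicting the hypothesis. An analogous argument, using the deviation ``move to $M$'' at $(T, M)$ (whose row is $(V(M, M), V(M, T), V(M, B)) = (0, -V(T, M), V(M, B))$, all $\geq 0$ by the previous step combined with the negation hypothesis), shows $V(T, M) > 0$.

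Finally, I would establish $V(T, B) > 0$. The Bellman equation at $(T, B)$, where $r(T, B) = 0$, reads
\[
V(T, B) = \delta \cdot \mathrm{val}\begin{pmatrix} V(T, B) & V(T, M) \\ V(M, B) & 0 \end{pmatrix}.
\]
Suppose $V(T, B) \leq 0$. A brief case check shows that, since $V(T, M)$ and $V(M, B)$ are both positive, this $2 \times 2$ matrix admits no pure saddle; its mixed-equilibrium value is then $V(T, M) \cdot V(M, B) / (V(T, M) + V(M, B) - V(T, B))$. The numerator is positive, and the denominator is positive precisely by the hypothesized sign of $V(T, B)$, so the value of the matrix is strictly positive, forcing $V(T, B) = \delta \cdot \mathrm{val} > 0$ --- a contradiction.

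The main technical subtlety is the circular appearance of $V(T, B)$ in its own Bellman equation; the contradiction approach finesses this by combining the sign hypothesis on $V(T, B)$ with the positivity of $V(T, M)$ and $V(M, B)$ obtained in the earlier steps.
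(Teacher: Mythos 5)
Your proof is correct, but it takes a genuinely different route from the paper's. The paper's argument is strategic and constructive: it fixes an explicit strategy for the top player (uniformly mix between staying and stepping toward the opponent, otherwise head back to $T$), observes that against any opponent strategy every round then yields weakly positive payoff for the top player (who is never below the opponent, collisions being treated as a terminal $0$ by symmetry), and that the first round yields strictly positive expected payoff because the opponent cannot match the coin flip with probability $1$. You instead work with the minmax value function and its \textsc{Shapley}/\textsc{Bellman} fixed-point equations, propagating positivity through $(M,B)$, $(T,M)$, and finally $(T,B)$ by contradiction; the key device --- that in a state carrying an immediate $+1$ the ``shadow the opponent'' row has all continuation entries $\geq 0$ under the negation hypothesis, using $V(v,v)=0$ and $V(v,u)=-V(u,v)$ --- is sound, and the $2\times 2$ no-saddle computation at $(T,B)$ checks out: with $a=V(T,B)\leq 0$ and $b,c>0$ the mixed value $bc/(b+c-a)$ is strictly positive, giving the contradiction. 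Your approach presupposes the existence of a value function satisfying the fixed-point equations and the antisymmetry $V(u,v)=-V(v,u)$, both standard for discounted symmetric zero-sum stochastic games and consistent with the paper's framework; in exchange it yields quantitative lower bounds (e.g.\ $V(M,B),V(T,M)\geq 1-\delta$) as a byproduct. The paper's version has the advantage of producing an explicit winning strategy, which is what gets reused, via induction on the depth of the subtree below player~$x$, in the proof of Theorem~\ref{thm:tree}.
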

\begin{proof}
	Fix the strategy of the top player to uniformly randomize between staying
	and moving when in these positions, and to move to (or stay at) \( T \)
	in all other positions.
	Fix the other player's strategy arbitrarily, it is sufficient to show
	this leads to strictly positive payoff for the top player.
	We can assume without loss of generality that the game ends with
	payoff~\( 0 \) when both players reach the same vertex.
	Therefore, every round has (weakly) positive payoff for the top player,
	and the first round has strictly positive payoff for them (since the
	other player cannot match their move with probability~\( 1 \)).
\end{proof}

\begin{rmk}\label{rmk:pathkk}
	Note that unlike cops and robbers, the bottom player always has some
	probability of avoiding capture, by randomizing between moving to the
	middle vertex and staying.
	This is true of any position in the graph: if a player randomizes between
	all of their parents and staying at their current vertex, the other player
	cannot ensure capture in one round (and ensuring capture in one round has
	higher payoff than any strategy that does not, hence even in minmax play
	capture is never ensured).
\end{rmk}

\section{Trees}\label{sec:trees}
A simple generalization of the ideas behind the case of a single path is a tree,
where we find a characterization of positions that have positive minmax value for
a player.
We first define some useful notions to express our results.
\begin{defi}
	For a tree $T$ rooted at $r$ and a vertex $v$, let $T_v$ be the subtree
	of $T$ rooted at $v$.
	In a rooted tree, a directed edge \( u \to v \) is said to be pointing
	\textbf{upwards} if \( u \neq r \) and \( v \) is on the (undirected)
	path from \( u \) to \( r \) -- otherwise, it is a \textbf{downwards}
	edge.
	A rooted directed tree is called \textbf{outgoing} if all of its edges
	are downwards edges.
\end{defi}

The main intuition behind the characterization is to root the tree at the
midpoint of the path between the two players' positions.
If either player can reach an upwards edge in the tree without going through the
root, they are safe -- it ensures the other player would have to go
through their child to reach them.
Otherwise, the other player can reach the root and then start chasing them down
(all edges go downwards) until they reach a leaf, and Lemma~\ref{lmm:3path} for
Figure~\ref{fig:3path} shows they can obtain strictly positive payoff.
Figure~\ref{fig:treewin} shows the two situations where player~\( y \) has a
winning strategy over player~\( x \).

\begin{figure}
	\begin{subfigure}{.49\textwidth}
		\centering
		\includegraphics{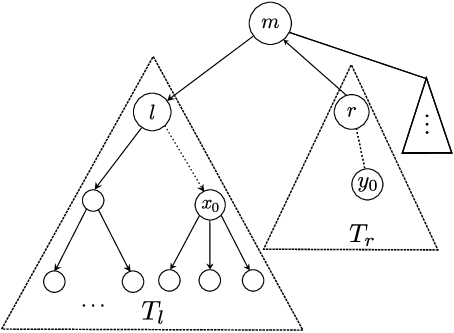}
	\end{subfigure}
	\begin{subfigure}{.49\textwidth}
		\centering
		\includegraphics{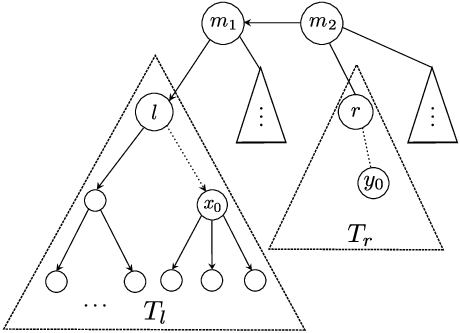}
	\end{subfigure}
	\caption{The two winning configurations for player~\( y \) in a tree.}\label{fig:treewin}
\end{figure}

\begin{thm}\label{thm:tree}
	For a given initial position $(x_0,y_0)$ in a tree, player~\( y \) has strictly
	positive payoff over player~\( x \) if and only if,
	\begin{enumerate}
		\item The path between $x$ and $y$ is of even length, has the
		form $x_0- \cdots -l \leftarrow m \leftarrow r- \cdots -y_0$, where
		$d(x_0, m)=d(y_0, m)$ and the subtree $T_l$ (containing player
		$x$) of $T$ rooted at \( m \) has no upwards edges (left tree in
		Figure~\ref{fig:treewin}); or
		\item The path between $x$ and $y$ is of odd length, has the form
		$x_0- \cdots -l \leftarrow m_1\leftarrow m_2-r- \cdots -y_0$, where
		$d(x_0, m_1)=d(y_0, m_2)$ and the subtree $T_l$ (containing player
		$x$) of $T$ rooted at \( m_2 \) has no upwards edges (right tree
		in Figure~\ref{fig:treewin}).
	\end{enumerate}
\end{thm}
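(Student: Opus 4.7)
The plan is to prove the biconditional via two separate arguments. The forward direction exhibits an explicit randomized strategy for \( y \) with strictly positive expected payoff; the backward direction constructs a safe strategy for \( x \) when neither condition holds. The key structural insight is that the prescribed edge orientations near the midpoint (\( m \to l \) and \( r \to m \) in Case~1, analogously \( m_2 \to m_1 \to l \) in Case~2) create a matching-pennies advantage for \( y \) when the players are simultaneously adjacent to the midpoint from opposite sides, while the outgoing structure of \( T_l \) prevents \( x \) from escaping into \( y \)'s side of the tree.

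For the forward direction in Case~1 (Case~2 is analogous), let \( d = d(x_0, m) = d(y_0, m) \) and consider the following strategy for \( y \): walk deterministically along the path from \( y_0 \) to \( r \) during the first \( d-1 \) rounds, then randomize uniformly between staying at \( r \) and moving to \( m \) (continuing to advance deterministically if \( x \) has lagged deep in \( T_l \)). During the walking phase, the identity \( d(x_t, y_t) = d(x_t, m) + d(y_t, m) \) in a tree together with the bound \( d(x_t, m) \geq d - t \) yields \( d(x_t, y_t) \geq 2(d - t) \geq 2 \) for \( t < d - 1 \), so those rounds contribute zero payoff. At time \( d - 1 \), if \( x \) has walked up to \( l \), the four outcomes of the simultaneous move produce \( y \)-payoffs \( 0, +1, +1, 0 \) for the resulting states \( (l, r), (l, m), (m, r), (m, m) \) (using \( m \to l \) and \( r \to m \)), giving \( y \) expected payoff at least \( 1/2 \) for that round against any mix of \( x \). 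If instead \( x \) stayed deeper in \( T_l \), \( y \) advances to \( m \) and starts chasing within \( T_l \); the key sub-lemma is that in an outgoing rooted subtree with \( y \) at the root and \( x \) at any descendant, \( y \) has strictly positive expected value — this generalizes Lemma~\ref{lmm:3path} to arbitrary outgoing trees, with the randomization-at-adjacency argument as the base case of an induction on \( x \)'s depth.

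For the backward direction, suppose neither condition holds. Then either the midpoint edge orientations differ from the required form — in which case the matching-pennies advantage vanishes or reverses in favor of \( x \) — or \( T_l \) contains an upwards edge \( u \to v \) (with \( v \) an ancestor of \( u \) in the tree rooted at \( m \)). In the latter case, \( x \)'s safe strategy is to walk within \( T_l \) to \( u \) and remain there: the edge \( u \to v \) makes \( x \) the parent of any opponent arriving at \( v \), so \( y \) cannot profitably approach \( u \) from the root side, and alternative approaches are blocked by the acyclic structure of the tree. \textbf{The main obstacle} is formalizing the chase sub-lemma — specifically verifying that \( y \)'s mixed strategy at adjacency maintains strictly positive expected payoff in the infinite discounted horizon even as \( x \) tries to evade by moving to deeper descendants or swapping vertices with \( y \). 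This requires a careful induction on the depth of \( x \) in \( T_l \), solving Bellman's equations at each level to confirm that the value remains strictly positive despite these evasions.
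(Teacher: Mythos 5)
Your proposal follows essentially the same route as the paper's proof: the winning direction walks \( y \) to \( r \), randomizes at adjacency, and then chases down the outgoing subtree via an induction generalizing Lemma~\ref{lmm:3path}, while the converse direction exhibits safe positions for \( x \) at \( l \), at \( m \), or at the tail of an upwards edge in \( T_l \). The one imprecision — the claim of per-round payoff at least \( 1/2 \) at adjacency ignores that \( x \) may retreat deeper into \( T_l \) — is covered by the chase sub-lemma you already flag, which the paper handles with the same induction on the depth of \( x \) below the root.
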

\begin{proof}
	Suppose the path between \( x_0 \) and \( y_0 \) has even length, and is
	of the form\\
	$x_0- \cdots -l - m - r- \cdots -y_0$, where $d(x_0, m)=d(y_0, m)$.
	We root the tree at \( m \).
	We first show that if the edge between \( l \) and \( m \) is directed
	towards \( m \) (meaning \( (l \to m) \in E \)), then player~\( x \) has
	(weakly) positive payoff.
	Indeed, player~\( x \) can reach \( l \) while staying distance at least
	\( 2 \) from player~\( y \) by definition of \( m \).
	Then, by remaining at \( l \), player~\( y \) will be forced to go
	through \( m \) before reaching player~\( x \) or any of its ancestors.
	Moreover, if player~\( y \) reaches \( l \) then both players achieve
	payoff~\( 0 \) by playing a minmax strategy (by symmetry).

	Similarly, if the edge between \( m \) and \( r \) is directed as \( m
	\to r \), then player~\( x \) can safely reach \( l \), then \( m \).
	By then, player~\( y \) is either at \( m \) or within \( T_r \) (since
	they have traveled distance at most \( d(x_0,m) = d(y_0,m) \)).
	If both players are at \( m \) their payoff is \( 0 \) by symmetry (by
	example, both moving to a parent at each step until they reach a maximal
	vertex is an equilibrium).
	Otherwise, by the same reasoning as above, player~\( x \) is safe by
	remaining at \( m \).

	We now assume that the path has form $x_0- \cdots -l \leftarrow m
	\leftarrow r- \cdots -y_0$.
	Suppose there is an upwards edge \( a \to b \) in \( T_l \), then by the
	same reasoning player~\( x \) can reach \( l \) while staying distance at
	least \( 2 \) from player~\( y \).
	From there, player~\( x \) can go down to \( a \), maintaining
	distance at least \( 2 \) from player~\( y \).
	Then, the same reasoning applies to show that remaining at vertex~\( a \)
	is safe for player~\( x \).

	Conversely, if there are no upwards edges in \( T_l \), player~\( y \)
	has a winning strategy by first reaching \( r \), then mixing between
	moving towards player~\( x \) and remaining at their current position.
	By reasoning by induction on the length of the longest path below
	player~\( x \) and as seen in Lemma~\ref{lmm:3path}, this yields strictly
	positive payoff for player~\( y \).

	\bigskip{}
	We now assume that the path between \( x_0 \) and \( y_0 \) is of odd
	length, and of the form\\
	$x_0- \cdots -l - m_1 - m_2-r- \cdots -y_0$.
	We root the tree at \( m_2 \).
	By similar reasoning as earlier, player~\( x \) has a safe strategy if
	the edges are not oriented as \( l \leftarrow m_1 \leftarrow m_2 \).
	Then, player~\( x \) can reach \( l \) safely while player~\( y \)
	reaches \( r \) safely, however then player~\( x \) cannot reach \( m_1
	\), therefore has to find a safe position in \( T_l \).
	Therefore, player~\( y \) has a winning strategy if and only if all edges
	in \( T_l \) are downwards once more.
\end{proof}

\section{Girth at least \( 6 \)}\label{sec:g6}
The first extension of our results on trees (acyclic connected graphs) are graphs
with high girth (only big cycles).
We show that with strong connectivity, players are essentially always safe from
one another -- unless one player is at a parent of the other player in the
initial position.
\begin{thm}\label{thm:char6}
	If \( G \) is strongly connected and \( g \geq 6 \), the minmax value of
	a pair of vertices is \( 0 \) if and only if they are not neighbors.
	For neighboring vertices, the player at the child of the other has payoff in
	the range \( \left[ -\frac{4(1-\delta)}{4-\delta}; -(1-\delta) \right]
	\).
	In particular, there is a static equilibrium at any pair of vertices at
	distance~\( 3 \) from each other, a \( 2 \)-chase equilibrium for any
	starting vertices with a (directed) \( 2 \)-path from one to the other,
	and a WT~equilibrium starting at every vertex.
\end{thm}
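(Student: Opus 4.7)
The plan is to prove the theorem in three stages: establish value $0$ for every non-neighboring pair, verify the three claimed equilibria as instances of that result, and pin down the value for neighboring pairs. The structural lever I will use throughout is that when $g(\unG)\ge 6$, the ball $\ball{2}{u}$ is a tree for every $u$; in particular any two distinct neighbors of $u$ are themselves non-neighbors (otherwise a cycle of length $\le 5$ would close), and strong connectivity forces every vertex to have at least two undirected neighbors (since otherwise its unique neighbor would be both its sole in- and out-neighbor, contradicting that $G$ is oriented).

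For the non-neighboring case I will construct safe strategies for both players by cases on the undirected distance. If $d_{\unG}(u,v)\ge 3$, I have $x$ stay at $u$: every move $y$ makes keeps $d(u,y_1)\ge 2$, so the round payoff is $0$ and the continuation is another non-neighboring pair. If $d_{\unG}(u,v)=2$ via some $w$, I have $x$ move to a neighbor $u'\ne w$ of $u$ (available by degree $\ge 2$); girth excludes $u'$ being a neighbor of $w$ or of $v$, so $(u',y_1)$ is non-neighboring for every move by $y$. Value iteration on these strategies yields payoff $\ge 0$ for $x$, and by symmetry for $y$, so the value is exactly $0$.

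Armed with this, I will verify each equilibrium by checking that every deviation either lands in a non-neighboring (value-$0$) state or is strictly worse for the deviator. For walking together at $v$, both players traverse a sequence of parents $v\to p_1\to p_2\to\cdots$ (available since strong connectivity gives $v$ positive in-degree); a deviation by $x$ to another parent of $v$ or to a child of $v$ lands in a non-neighboring pair by girth, while staying at $v$ while $y$ advances yields strictly negative round payoff and a neighboring continuation. For the $2$-chase from $(x_0,y_0)$ with $x_0\to u\to y_0$, I pick a directed cycle through $y_0$ (it exists by strong connectivity and has length $\ge 6$ by girth) and have $y$ traverse it while $x$ follows two steps behind via $u$; at every round $x_t, y_t$ are joined by a directed $2$-path and hence non-adjacent by girth, and the same girth arguments handle deviations. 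For the static equilibrium at distance $3$, any single move changes the distance by at most $1$, so the players remain at distance $\ge 2$ and the pair stays non-adjacent.

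Finally, for a neighboring pair $v\to u$ with $x$ at the child, the upper bound $V\le -(1-\delta)$ comes from $y$'s strategy of staying at $v$: the round-$0$ payoff is $-(1-\delta)$, and any move by $x$ lands in a state $(x_1,v)$ that is either non-neighboring (by girth, every neighbor of $u$ other than $v$ is non-adjacent to $v$) or equal to $(v,v)$, both yielding continuation $0$. For the lower bound I will analyze the worst-case local configuration, where $u$ has only the parent $v$ and a single child $c$, and $v$ has only the child $u$ and a single parent: there $x$'s mixed strategy with weights $(1/4,1/4,1/2)$ on $\{\text{stay},\text{move to }v,\text{move to }c\}$ makes $y$ indifferent across its three symmetric responses $\{\text{stay},\text{move to }u,\text{move to parent of }v\}$ and yields continuation $V/4$, producing the fixed-point equation $V=-(1-\delta)+\delta V/4$ with solution $-\tfrac{4(1-\delta)}{4-\delta}$. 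The hard part will be this lower bound in general: I must show that this mixed strategy (or a suitable analog) achieves at least $-\tfrac{4(1-\delta)}{4-\delta}$ in every local configuration compatible with the hypotheses, and verify that richer local configurations (additional parents or children of $u$ or $v$) can only improve $x$'s value by furnishing additional escape moves whose outcomes are non-neighboring by girth.
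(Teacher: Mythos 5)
Your proposal is correct and follows essentially the same route as the paper: the distance-$\ge 2$ sidestep argument via girth, verification that all deviations from the three equilibria land in value-$0$ (or worse-for-the-deviator) states, the stay-at-$v$ strategy for the upper bound, and the $(1/4,1/4,1/2)$ mix with the fixed point $V=-(1-\delta)+\delta V/4$ for the lower bound. The step you flag as ``the hard part'' is handled in the paper exactly along the lines you sketch: one first disposes of the case where the child has a second parent (value exactly $-(1-\delta)$), and otherwise groups all children of $u$ into a single row and all parents of $v$ into a single column of a $3\times3$ matrix whose off-worst-case entries are bounded by the worst-case value $-p$ and whose self-referential entries (the mirrored position) equal $+(1+\delta p)$ exactly, so the same mix closes the fixed point for every local configuration.
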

\begin{proof}
	We show the following property: for any pair of vertices that are
	distance at least \( 2 \) from each other, each player can ensure they
	remain distance at least \( 2 \) from each other next round.
	If the vertices are distance at least \( 3 \) from one another it is clear,
	each player can simply remain where they are.
	If the players are distance \( 2 \) from one another, since \( g \geq 6
	\) there is a unique (undirected) path of length \( 2 \) between them.
	Let this path be \( x_0 - m - y_0 \).
	Since the graph is strongly connected, each vertex has degree at least \( 2
	\): \( x_0 \) has some other neighbor \( x_1 \).
	If player~\( x \) moves to \( x_1 \), and player \( y \) moves to any
	neighbor \( y_1 \), it must be that \( d(x_1, y_1) \geq 2 \) to ensure \(
	g \geq 6 \), since there is already an undirected path of length~\( 4 \)
	between \( x_1 \) and \( y_1 \).
	If player \( y \) remains at \( y_0 \) then they are at distance~\( 3 \)
	from one another the next round.
	Therefore, with probability~\( 1 \) over player~\( y \)'s randomization,
	the players remain at distance at least~\( 2 \) from one another.
	This means the minmax value of both players is always \( 0 \) when they
	are distance at least \( 2 \) from one another.

	Note that this implies \( G \) is edge-decisive for all discount factors
	\( \delta \in \left] 0;1 \right[ \): if \( u \to v \) and a player is at
	\( u \) and the other at \( v \), the player at \( u \) can secure
	strictly positive payoff by remaining at \( u \) until the other player
	moves.
	Since \( g \geq 6 \) they will not move to a parent of \( u \)
	immediately, hence the next step they will either be distance~\( 2 \) or
	\( 0 \) from \( u \).
	In either case, the minmax value is \( 0 \) and therefore the player at
	\( u \) can play the minmax and have strictly positive payoff in total.
	As seen in Remark~\ref{rmk:wtcyc}, this means all WT and \( 2 \)-chase
	equilibria take place on directed cycles.

	The static, \( 2 \)-chase and WT~equilibria exist because the graph is
	strongly connected and any unilateral deviation from these equilibria
	must leave the players distance at least \( 2 \) or exactly~\( 0 \) from
	one another.
	This is clear for the static equilibrium for \( d(x_0, y_0) \geq 3 \) and
	the WT equilibria.
	For the \( 2 \)-chase equilibrium, the chased player is always moving to
	a neighbor that is not on the shortest undirected path between the two
	players, therefore the chaser cannot deviate to a vertex closer than
	distance~\( 2 \) by the same girth argument as before.
	The chased player could choose to remain at their current vertex, in
	which case they would get payoff~\( -1 \) (and the continuation payoff
	would be weakly positive for the chaser, since they can safely remain
	until the deviator is distance at least~\( 2 \) from them, or reaches the
	same vertex as them).
	Finally, the chased player could deviate by moving back towards the
	chasing player, in which case both players would be at the same vertex
	and the continuation payoff would also be \( 0 \) by symmetry.
	Therefore, any profitable deviation must lead the deviating player to be
	distance at least \( 2 \) from the non-deviating player, which cannot be
	profitable as shown above -- there are therefore no profitable
	deviations.

	Therefore the only cases in which the payoff of an equilibrium is
	non-zero is if the initial condition has an edge~\( y_0 \to x_0 \) (up to
	permuting the players).
	First note the value for player \( x \) is at most \( -(1-\delta) \),
	given the strategy for player~\( y \) exposed above in the proof of
	edge-decisive of \( G \).

	We now lower bound the value for player~\( x \).
	If \( x_0 \) has another parent then \( x \) can deterministically move
	to this parent, and \( y \) will not move to \( x_0 \) by edge
	decisiveness.
	The players are now distance at least \( 2 \) from one another and the
	continuation payoff is \( 0 \), hence the value for \( x \) in this
	situation is exactly \( -(1-\delta) \).
	Otherwise, if \( x_0 \) has no other parents, let \( -p \) be the
	continuation payoff of player~\( x \) in this position, such that their
	expected payoff is \( -(1-\delta)(1 + \delta p) \).
	Denote \( T \) the set of parents of \( y_0 \) and \( B \) the set of
	children of \( x_0 \), such that we have \( T \to y_0 \to x_0 \to B \).
	Note that moving to a child of \( y_0 \) is dominated for \( y \).
	Letting \( x \) be the row player, the game matrix is,
	\[
		\bordermatrix{& T & y_0 & x_0 \cr
			y_0 & \geq -1-\delta p & 0 & 1+\delta p\cr
			x_0 & 0 & -1-\delta p & 0 \cr
			B & 0 & 0 & \geq -1-\delta p
		}.
	\]
	Since the value of \( x \) is (weakly) increasing in the coefficients of
	this matrix, we deduce the expected continuation payoff for \( x \) is at
	least \( -1/4 ( 1 + \delta p ) \) by mixing between the three as \(
	\left( \frac{1}{4}, \frac{1}{4}, \frac{1}{2} \right) \).

	Since \( -p \) is also the value of these positions for \( x \), we find
	\( -p \geq -\frac{1}{4} (1 + \delta p) \), therefore \( p \leq
	\frac{1}{4-\delta} \).
	This allows to lower bound the total value by \( -(1 - \delta)(1+\delta
	p) \geq -\frac{4(1-\delta)}{4-\delta} \).
\end{proof}

\begin{cor}
	In a strongly connected graph with \( g \geq 6 \) and initial positions
	$(x_0,y_0)$ with no edge between them, the set of \nashes{} is the set of
	mixed strategies $(\varphi_x,\varphi_y)$ such that at each round, $x$
	puts \( 0 \)~probability on going to a child of $\ball{1}{y_t}$ and
	\textit{vice versa}.
\end{cor}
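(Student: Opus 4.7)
I would prove the two directions separately, with the forward direction being significantly simpler than the reverse.

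\textbf{Forward direction (condition $\Rightarrow$ Nash equilibrium):} The condition on each strategy is essentially a \emph{safety} guarantee. Indeed, if $\varphi_y$ puts zero probability on going to a child of $\ball{1}{x_t}$, then for any realized move $x_{t+1} \in \ball{1}{x_t}$ by the opponent, the vertex $y_{t+1}$ drawn from $\varphi_y(x_t, y_t)$ is never a child of $x_{t+1}$ (since every child of $x_{t+1}$ is a child of $\ball{1}{x_t}$, hence forbidden). Thus $y$'s one-round payoff is always non-negative, and by unrolling the discounted sum, $u_y(\varphi_x', \varphi_y) \geq 0$ for any $\varphi_x'$. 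A symmetric argument gives $u_x(\varphi_x, \varphi_y') \geq 0$ for any $\varphi_y'$. Since the game is zero-sum and the value at non-adjacent positions is $0$ by Theorem~\ref{thm:char6}, both inequalities force $u_x(\varphi_x, \varphi_y) = 0$, and neither player can strictly improve by deviating. Hence $(\varphi_x, \varphi_y)$ is a Nash equilibrium.

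\textbf{Reverse direction (Nash equilibrium $\Rightarrow$ condition):} Suppose $(\varphi_x, \varphi_y)$ is a Nash equilibrium and, for contradiction, $\varphi_x$ at some reached state $(x_t, y_t)$ puts positive probability on a vertex $x^*$ that is a child of some $v \in \ball{1}{y_t}$ (the case for $\varphi_y$ being symmetric). Since $v \to x^*$, Theorem~\ref{thm:char6} gives $V(x^*, v) < 0$. I would consider the one-shot deviation where $y$ moves deterministically to $v$ at round~$t$: its $y$-payoff (measured against the equilibrium value~$0$) equals $-\delta\, \E_{x_1 \sim \varphi_x}[V(x_1, v)]$, so for the deviation to not be strictly profitable we need $\E V(x_1, v) \geq 0$. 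Because $V(x^*, v) < 0$ contributes negatively, $\varphi_x$ must also place positive probability on some $w \in \ball{1}{x_t}$ with $w \to v$ (so that $V(w, v) > 0$ balances the sum). The crux is that $g \geq 6$ rules out such a $w$: the undirected edges $x_t - x^*$, $x^* - v$, $v - w$, $w - x_t$ form a $4$-cycle when all four vertices are distinct, contradicting the girth bound. A case analysis on possible coincidences ($w = x_t$, $x^* = x_t$, $x^* = w$, or $v = y_t$) shows each produces either a $2$-cycle, a triangle, or forces $x_t - y_t$ adjacency --- contradicting $g \geq 6$, simplicity of the graph, or the hypothesis that $(x_0, y_0)$ have no edge between them. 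Hence no compensating $w$ exists, $y$'s deviation is strictly profitable, and we obtain a contradiction.

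\textbf{Main obstacle:} The main obstacle is the combinatorial case analysis in the reverse direction --- carefully verifying that every way in which the vertices $x_t, y_t, x^*, v, w$ might coincide yields a short cycle forbidden by $g \geq 6$. The only delicate point is handling reached states where $x_t = y_t$, since there a common parent of $x_t = y_t$ could in principle play both the child and parent roles; invoking the assumption that the initial state has $x_0, y_0$ non-adjacent (and the safety argument from the forward direction, which keeps the game away from parent-child configurations) closes this gap. The forward direction, by contrast, is immediate once the condition is recognized as a one-round safety guarantee.
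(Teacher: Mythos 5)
Your overall approach matches the paper's: the forward direction treats the condition as a one-round safety guarantee, and the reverse direction uses a deterministic one-shot deviation by \( y \) to a parent \( v \) of the offending vertex \( x^* \), with Theorem~\ref{thm:char6} (edge-decisiveness and the value characterization) supplying the strictly positive continuation value at the outcome \( (x^*,v) \). You are in fact more explicit than the paper on a step its proof elides: the paper stops at ``this gives \( y \) strictly positive expected payoff,'' whereas you correctly observe that one must also rule out a compensating vertex \( w \) in the support of \( \varphi_x \) with \( w \to v \), and that this is exactly where \( g \geq 6 \) enters via the closed walk \( x_t - x^* - v - w - x_t \). That part of your case analysis is sound whenever the configuration collapses to a \( 4 \)-cycle, a triangle, or a pair of parallel edges.

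The gap is the degenerate case you flag at the end, and your proposed fix does not close it. When \( v = x_t \) (possible exactly when \( x_t = y_t \), given non-adjacency), the closed walk collapses into two pendant edges and yields no forbidden cycle: a compensating \( w \) with \( w \to x_t \) is perfectly consistent with \( g \geq 6 \). You cannot invoke ``the safety argument from the forward direction'' here, because in the reverse direction you are analyzing an arbitrary \nashe{}, not one already known to satisfy the condition; and non-adjacency of \( (x_0, y_0) \) does not exclude \( x_0 = y_0 \), a case the paper's own proof explicitly admits. Nor is this a removable technicality: in the directed \( 6 \)-cycle with both players at a vertex \( z \), the profile in which each player moves to the unique parent of \( z \) with probability \( 1/2 \) and to the unique child with probability \( 1/2 \) (playing minmax from every other state) is a \nashe{} --- every one-shot deviation, including staying at \( z \), has expected value \( 0 \) because rotational symmetry makes all parent-over-child values equal --- yet it puts positive probability on a child of \( \ball{1}{z} \). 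So the step you would need at same-vertex states is false as stated, and a correct proof (the paper's included, which is silent on this configuration) must either exclude such states or characterize them separately.
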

\begin{proof}
	We first note such distributions always exist: if \( x_0 = y_0 \) then
	any parent of the vertex verifies the condition, and otherwise \( d(x_0,
	y_0) \geq 2 \) and Theorem~\ref{thm:char6} proves the existence of such
	a move.

	If at round $t$ player~$x$ puts nonzero probability on such a vertex $v$,
	then a possible move of player~$y$ is to go to a parent $u$ of $v$
	deterministically.
	This gives $y$ strictly positive expected payoff, as we showed in
	Theorem~\ref{thm:char6} the graph is edge-decisive.
	Therefore, every \nashe{} has the property above.
	Conversely, any pair of distributions that verifies this property is
	clearly a \nashe{} since if a player deviates, they can never reach a
	parent of the other player by definition of their strategy.
\end{proof}

\section{Graphs with no unbalanced small cycles}\label{sec:g4}
For our main result, we combine the results of the two previous section on trees
and graphs with girth at least \( 6 \).
We remove the strong connectivity assumption and replace it with an analysis of the
\emph{block-cut tree} of the graph: biconnected components will be analogous to
the strongly connected graphs of Theorem~\ref{thm:char6} (though they are not
always strongly connected), whereas cut vertices will behave more like tree
vertices seen in Theorem~\ref{thm:tree}.
This results in weakening the assumptions from the previous section in two ways:
strong connectivity is no longer required, and we replace the assumption \( g
\geq 6 \) with the assumption \( g \geq 4 \) and the absence of \emph{small
unbalanced cycles} as subgraphs of the graph.
Let us begin by defining these concepts.
\begin{defi}\label{def:xycut}
	An {\bf\( (x, y) \)-cut vertex} of \( G \) for \( x, y \in V \) is a cut
	vertex of \( G \) such that removing it separates \( x \) and \( y \)
	into two distinct connected components.
\end{defi}

\begin{defi}[\citep{gallai1964elementare,harary1966block}]\label{def:bctree}
	For an undirected graph $G=(V,E)$ define its {\bf block-cut tree} as the
	tree containing a vertex for each maximal biconnected component of \( G \),
	a vertex for each of its cut vertices, and an edge connecting each cut
	vertex to the biconnected components it belongs to.
	We call the \textbf{thinned block-cut tree} of \( G \), denoted \( T(G)
	\), the following transformation of its block-cut tree: for each maximal
	biconnected component of size~\( 2 \) containing two cut vertices, remove
	its vertex from the tree and add an edge between its two cut vertices;
	for all other biconnected components of size~\( 2 \), remove its vertex
	and replace it with a vertex labeled by its non-cut vertex.
	A biconnected component remaining in the thinned block-cut tree
	(equivalently, a biconnected component with strictly more than \( 2 \)
	vertices) is called a \textbf{nontrivial biconnected component}.
\end{defi}
Notice each vertex in the thinned block-cut tree is labeled either by a maximal
nontrivial biconnected component or a vertex of the graph, and all vertices
labeled by a vertex of the graph are either cut vertices or leaves of the
block-cut tree.

\begin{lmm}\label{lmm:xycut}
	Along any shortest undirected path between \( x \) and \( y \), the
	indices containing \( (x, y) \)-cut vertices are always the same, and
	each index always contains the same cut vertex.
	Moreover, in all shortest undirected paths the indices that do not
	contain \( (x, y) \)-cut vertices correspond to a vertex in the common
	biconnected component of the previous and next \( (x, y) \)-cut vertices
	in the path.
\end{lmm}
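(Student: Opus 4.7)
The plan is to leverage the structure of the block-cut tree of \( G \). The \( (x, y) \)-cut vertices are exactly the cut vertices lying on the unique path from the block containing \( x \) to the block containing \( y \) in the block-cut tree; denote them, in the order they appear along this tree path, as \( c_1, c_2, \ldots, c_k \).

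First I would show that any undirected path \( P \) in \( G \) from \( x \) to \( y \) must pass through each \( c_j \), and must visit them in the order \( c_1, c_2, \ldots, c_k \). Passing through each \( c_j \) follows directly from Definition~\ref{def:xycut}, since removing \( c_j \) disconnects \( x \) from \( y \). For the order, observe that whenever \( j < j' \), the vertex \( c_{j'} \) lies in the connected component of \( G - c_j \) containing \( y \) (a standard consequence of the block-cut tree structure), so any simple path from \( x \) to \( y \) must cross \( c_j \) before it can reach \( c_{j'} \).

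Next I would argue that on any shortest undirected path \( P \), the index of \( c_j \) in \( P \) equals \( d_{\unG}(x, c_j) \). If the sub-path of \( P \) from \( x \) to \( c_j \) had length strictly greater than \( d_{\unG}(x, c_j) \), then replacing it by a shorter \( x \)-to-\( c_j \) walk and concatenating with the rest of \( P \) would yield a strictly shorter walk from \( x \) to \( y \), from which one extracts a simple path of strictly smaller length, contradicting shortest-ness. Hence on every shortest path the indices of the \( c_j \) are the fixed values \( d_{\unG}(x, c_1) < \cdots < d_{\unG}(x, c_k) \), and the cut vertex sitting at each of these indices is uniquely determined.

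Finally I would handle the second claim about non-cut indices. Fix two consecutive entries \( c_j \) and \( c_{j+1} \) in the cut sequence along the shortest path (treating \( x \) and \( y \) as boundary anchors for the segments before \( c_1 \) and after \( c_k \), where the relevant biconnected component is the one adjacent to \( x \), resp.\ \( y \), in the block-cut tree). Both \( c_j \) and \( c_{j+1} \) lie in a common biconnected component \( B \) of \( G \). I claim the sub-path between them lies entirely in \( B \): if it visited some vertex \( v \notin B \), then \( v \) lies in a subtree of the block-cut tree attached to \( B \) through some single cut vertex \( c' \in B \), so the sub-path would have to traverse \( c' \) once to leave \( B \) and a second time to return to reach \( c_{j+1} \in B \), contradicting simplicity of the path. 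The main subtlety is this ``no detour outside \( B \)'' step, which is where I would be most careful, but it rests cleanly on the definition of a biconnected component together with the fact that shortest paths are simple.
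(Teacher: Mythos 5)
Your proposal is correct and follows essentially the same route as the paper: the first claim is the same cut-and-paste/optimal-substructure argument pinning each \( (x,y) \)-cut vertex at index \( d_{\unG}(x,c_j) \), and the second claim is the same contradiction that a detour outside the common biconnected component would force the path to traverse a single attaching cut vertex twice, violating simplicity. The only cosmetic difference is that you justify consecutive cut vertices sharing a block by appeal to the block-cut tree path, where the paper gives a one-line contradiction, but the content is identical.
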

\begin{proof}
	The first part is true by optimal substructure: all paths between the
	players must go through this \( (x,y) \)-cut vertex, and if it is not at
	index~\( i \) along some shortest path then it means that the path from
	either player to the cut vertex is shorter in this path, leading to an
	overall shorter shortest path.
	For the second part of the proof, first note that two successive \( (x,y)
	\)-cut vertices \( a \) and \( b \) on a path share a biconnected
	component -- if they did not, there would exist a cut vertex that
	separates their biconnected components, which in turn must be a \( (x,y)
	\)-cut vertex and be on the path between \( a \) and \( b \).
	Assume by contradiction there is a vertex on a shortest path that is not
	a \( (x,y) \)-cut vertex and does not belong to the biconnected component
	of its surrounding \( (x,y) \)-cut vertices.
	This means the shortest path reaches a cut vertex~\( u \) (to leave the
	biconnected component), then another vertex outside the component, but to
	return to the next \( (x,y) \)-cut vertex, its only path is to traverse
	the cut vertex~\( u \) again: this is a contradiction.
\end{proof}

\begin{figure}
	\begin{subfigure}{.25\textwidth}
		\centering
		\begin{tikzpicture}[every node/.style={shape=circle, fill=black}]
			\node (a) at (-0.588, -0.809) {};
			\node (b) at (-0.952, 0.309) {};
			\node (c) at (0, 1) {};
			\node (d) at (0.952, 0.309) {};
			\node[fill=white,draw] (e) at (0.588, -0.809) {};
			\graph[edges={thick,>=stealth}]{
				(a) -> (b) -> (c) -> (d) -> (e);
				(a) -> (e);
			};
		\end{tikzpicture}
		\caption{\( C^5_{4,1} \).}
	\end{subfigure}\hfill
	\begin{subfigure}{.25\textwidth}
		\centering
		\begin{tikzpicture}[every node/.style={shape=circle, fill=black}]
			\node (a) at (-0.588, -0.809) {};
			\node (b) at (-0.952, 0.309) {};
			\node (c) at (0, 1) {};
			\node[fill=white,draw]  (d) at (0.952, 0.309) {};
			\node (e) at (0.588, -0.809) {};
			\graph[edges={thick,>=stealth}]{
				(a) -> (b) -> (c) -> (d);
				(a) -> (e) -> (d);
			};
		\end{tikzpicture}
		\caption{\( C^5_{3,2} \).}
	\end{subfigure}\hfill
	\begin{subfigure}{.25\textwidth}
		\centering
		\begin{tikzpicture}[every node/.style={shape=circle, fill=black}]
			\node (a) at (0, 0) {};
			\node (b) at (0, 1.7) {};
			\node[fill=white,draw] (c) at (1.7, 1.7) {};
			\node (d) at (1.7, 0) {};
			\graph[edges={thick,>=stealth}]{
				(a) -> (b) -> (c);
				(a) -> (d) -> (c);
			};
		\end{tikzpicture}
		\caption{\( C^4_{2,2} \).}
	\end{subfigure}\hfill
	\begin{subfigure}{.25\textwidth}
		\centering
		\begin{tikzpicture}[every node/.style={shape=circle, fill=black}]
			\node (a) at (0, 0) {};
			\node (b) at (0, 1.7) {};
			\node (c) at (1.7, 1.7) {};
			\node[fill=white,draw] (d) at (1.7, 0) {};
			\graph[edges={thick,>=stealth}]{
				(a) -> (b) -> (c) -> (d);
				(a) -> (d);
			};
		\end{tikzpicture}
		\caption{\( C^4_{3,1} \).}
	\end{subfigure}
	\caption{The four unbalanced small cycles, with their minimal vertex
	highlighted.}\label{fig:unbal}
\end{figure}
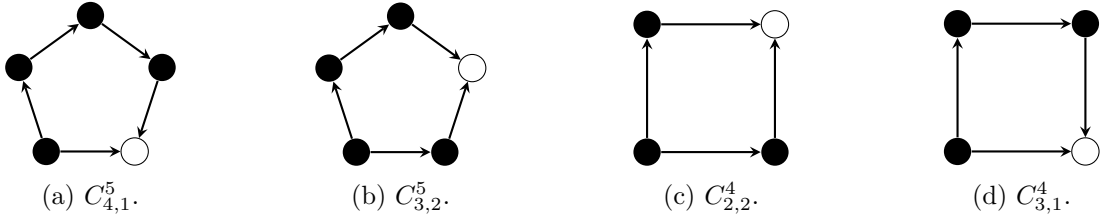

\begin{defi}
	Let \( C^k_{a,b} \) be the length~\( k \) undirected cycle with \( a \)
	consecutive edges in one direction and the remaining \( b \) edges in the
	opposite direction.
	We call the \textbf{unbalanced small cycles} the four cycles \(
	C^5_{4,1}, C^5_{3,2}, C^4_{2,2} \) and \( C^4_{3,1} \), illustrated in
	Figure~\ref{fig:unbal}.
\end{defi}

In Appendix~\ref{app:cycle-directions} we find the remaining 4-cycles and 5-cycles, which are not unbalanced.

We now characterize winning positions in graphs with no small unbalanced cycles.
The proof of the following theorem is deferred to Theorem~\ref{thm:char4app} in
Appendix~\ref{app:g4}, we offer a proof sketch here.
\begin{thm}\label{thm:char4}
	Suppose \( G \) satisfies \( g \geq 4 \) and does not contain any of the
	unbalanced small cycles as subgraphs (in particular, this is verified
	when \( g \geq 6 \)).
	Consider the thinned block-cut tree~\( T(\unG) \) of $G$.
	For a given initial position $(x_0,y_0)$ in $G$, player~\( y \) has strictly
	positive payoff over player~\( x \) if and only if there exists a
	shortest path between \( x \) and \( y \) wich is is either,
	\begin{enumerate}
		\item of even length, of the form $x_0- \cdots -l \leftarrow m
		\leftarrow r- \cdots -y_0$, where $d(x_0,m)=d(y_0,m)$, the midway
		vertex $m$ is an $(x,y)$-cut vertex of $G$ and the subtree~$T_l$
		(containing player $x$) of $T(\unG)$ rooted at \( m \) has no
		nontrivial biconnected components or upwards edges; or
		\item of odd length, of the form $x_0- \cdots -l \leftarrow
		m_1\leftarrow m_2-r- \cdots -y_0$, where $d(x_0,m_1)=d(y_0,m_2)$,
		the vertex $m_1$ is an $(x,y)$-cut vertex of $G$ and the
		subtree~$T_l$ (containing player $x$) of $T(\unG)$ rooted at \(
		m_1 \) has no nontrivial biconnected components or upwards edges.
	\end{enumerate}
\end{thm}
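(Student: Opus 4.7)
The plan is to prove the biconditional by combining the tree-based strategy from Theorem~\ref{thm:tree} with the biconnected-component safety from Theorem~\ref{thm:char6}, using the thinned block-cut tree $T(\tilde{G})$ as the scaffold that links them.

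For the forward direction (the structural condition implies $y$ wins strictly), I would use the natural generalization of the tree strategy. In the even-length case, player $y$ races towards $r$ along the shortest path while player $x$, initially separated from $y$ by the cut vertex $m$ (by Definition~\ref{def:xycut}), cannot safely cross $m$: since $m$ lies at equal distance from both players and $m \to l$, having $x$ reach $m$ while $y$ sits at $r$ would make $x$ a child of $y$. After at most $d(y_0, m)$ rounds, $x$ is therefore confined to the region corresponding to the subtree $T_l$, which by hypothesis is an outgoing rooted directed tree in $G$ (no upwards edges, no nontrivial biconnected components). From $r$, player $y$ chases $x$ through this outgoing tree using the randomized staying/advancing strategy of Lemma~\ref{lmm:3path}, obtaining strictly positive payoff by induction on the longest directed path below $x$. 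The odd case is analogous with $m_1, m_2$ in place of $m$.

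For the reverse direction (failure of the condition implies $x$ has a safe strategy), I would perform a case analysis on the obstruction. By Lemma~\ref{lmm:xycut}, along any shortest path the $(x,y)$-cut vertices appear at fixed indices, and every non-cut-vertex index lies inside a nontrivial biconnected component shared with its adjacent cut vertices. If the midpoint is not an $(x,y)$-cut vertex, it already lies in such a nontrivial biconnected component, which player $x$ can enter and use as a safe haven. Otherwise the midpoint is a cut vertex but $T_l$ contains either an upwards edge $a \to b$ or a nontrivial biconnected component $B$; in either case $x$ safely reaches $b$ (respectively a vertex of $B$) while maintaining distance at least $2$ from $y$, and then uses either the upwards-edge argument from the proof of Theorem~\ref{thm:tree} or the biconnected-component safety to remain at distance $\geq 2$ from $y$ thereafter.

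The main obstacle, and likely the reason the full proof is deferred to Appendix~\ref{app:g4}, will be establishing safety inside nontrivial biconnected components under the weaker hypothesis $g \geq 4$ without unbalanced small cycles, rather than $g \geq 6$ as in Theorem~\ref{thm:char6}. When $g \in \{4, 5\}$, two vertices at distance $2$ can be connected by multiple short undirected paths, giving both players additional moves and effectively shortening girth-based safety arguments. The key combinatorial claim to establish will be that excluding the four cycles $C^5_{4,1}, C^5_{3,2}, C^4_{2,2}, C^4_{3,1}$ precisely prevents the configurations in which one player can be forced to become a child of the other after a single simultaneous move within a biconnected component. I expect this to reduce to a finite case analysis on the admissible orientations of the short undirected paths between nearby vertices, together with a propagation argument showing that once player $x$ is inside a biconnected component under these orientation restrictions, distance $\geq 2$ can be maintained indefinitely, thereby generalizing the distance-preservation argument at the heart of the proof of Theorem~\ref{thm:char6}.
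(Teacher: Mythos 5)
Your plan follows essentially the same route as the paper's proof in Appendix~\ref{app:g4}: the central lemma is safety at distance at least $2$ inside a nontrivial biconnected component, established by a finite case analysis showing that any unsafe configuration forces a forbidden orientation of a $4$- or $5$-cycle, and this is combined with the tree argument of Theorem~\ref{thm:tree} rooted at the midpoint (cut vertex or biconnected component) of the thinned block-cut tree. The only pieces you leave open --- the explicit orientation case analysis and the sub-case where the opponent sits \emph{outside} the biconnected component (handled in the paper by letting $x$ play as if $y$ were at the unique separating cut vertex) --- are exactly what the appendix supplies.
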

\begin{proof}[Proof sketch]
	First note that by Lemma~\ref{lmm:xycut} the criteria are well-defined,
	i.e.\ they do not depend on the chosen shortest path.
	The main idea is to show that both players have a safe strategy when they
	are both in a nontrivial biconnected component and at distance at least
	\( 2 \) from one another.

	Indeed, suppose player~\( x \) is at distance exactly \( 2 \) from
	player~\( y \).
	For staying at their current node or moving towards \( y \) not to be
	safe strategies, there must be a directed \( 2 \)-path from \( y \) to \(
	x \).
	Moreover, \( x \) must have some other neighbor in the biconnected
	component than the one between \( x \) and \( y \): moving to this
	neighbor not being safe must mean \( y \) is a parent of that neighbor or
	is adjacent to a parent of that neighbor.
	The constructed edges so far create a \( 4 \) or \( 5 \) cycle with some
	orientations fixed: one can check that no orientations for the remaining
	edges avoid creating a small unbalanced cycle.
	The reasoning when \( y \) is outside of the biconnected component is
	similar, because \( y \) can only enter the component through a unique
	cut vertex.

	The rest of the characterization is similar to Theorem~\ref{thm:tree},
	with the added subtlety of the case where the root is part of a
	non-trivial biconnected component.
	In these cases, we show that before reaching the midpoint, both players
	will enter its biconnected component, and be distance at least \( 2 \)
	from one another.
	Since we have shown these are both safe positions, we deduce that for a
	player to have a winning strategy, the root must be a cut vertex.
	From there, we have shown that nontrivial biconnected components are
	safe, therefore the connected component containing the losing player must
	be composed of only cut vertices, which makes it a tree.
	By similar reasoning to Theorem~\ref{thm:tree} once more, we show this
	tree must be outdirected from the midway point.
\end{proof}

We now characterize the presence of cycle-based equilibria under the assumptions
of Theorem~\ref{thm:char4}.
We show the weakest necessary condition one could hope for (under edge
decisiveness, which we show holds for \( G \)) is necessary and sufficient:
cycle-based equilibria exist if and only if a directed cycle is present.
\begin{prop}
	If \( G \) satisfies \( g \geq 4 \) and does not contain any of the
	unbalanced small cycles as subgraphs, \( G \) has a WT~equilibrium and a
	\( 2 \)-chase equilibrium if and only if there is a directed cycle in \(
	G \).
	In particular, \( G \) always has either a cycle-based or a static
	equilibrium.
\end{prop}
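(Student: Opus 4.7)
The plan is to establish the equivalence in two directions and then deduce the corollary. For the forward direction, I will first observe that $G$ is edge-decisive: the argument mirrors the corresponding step in the proof of Theorem~\ref{thm:char6}, exploiting that the $g \geq 4$ hypothesis forbids $3$-cycles, so from $(v,u)$ with $u\to v$ the player at $u$ can safely stay and force the opponent's next position to either coincide with $u$ or land at undirected distance at least $2$, where by Theorem~\ref{thm:char4} the continuation is $0$, while the first round contributes $+1$. Once edge-decisiveness is in hand, Remark~\ref{rmk:wtcyc} directly implies that any WT equilibrium steps along parents and any $2$-chase equilibrium steps along children, so each such equilibrium traces out a directed cycle in $G$.

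For the backward direction I will fix a directed cycle $C = c_0 \to c_1 \to \cdots \to c_{k-1} \to c_0$ (with $k \geq g \geq 4$) and construct both equilibria on $C$: the WT candidate has both players start at $c_0$ and step to the parent on $C$ each round, while the $2$-chase candidate has the chaser start at $c_0$, the chased at $c_2$, and both step to the next child on $C$. In both constructions the on-path payoff is $0$, so the verification reduces to checking that no single-round deviation is strictly profitable. A ``stay'' deviation leaves the deviator at a child of the opponent, which by edge-decisiveness is strictly unprofitable. Any other deviation moves the deviator to some neighbor $w$ of their current vertex; the $g \geq 4$ hypothesis (no $3$-cycles) prevents $w$ from being adjacent to the opponent's on-path destination, so the resulting state has distance at least $2$ with the non-deviator still on $C$.

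The heart of the argument, and the main obstacle, is then to show the deviator's continuation value in such a state is at most $0$. I will argue that Theorem~\ref{thm:char4} cannot declare the deviator a winner, since this would require the subtree of $T(\unG)$ rooted at the relevant $(x,y)$-cut vertex $m$ and containing the non-deviator to be an outgoing tree with no nontrivial biconnected component and no upwards edge. But the non-deviator lies on $C$, and the nontrivial biconnected component of $G$ containing $C$ survives inside this subtree: regardless of whether $m$ lies on $C$ or outside of it, enough of $C$ remains in the non-deviator's component after removing $m$ to force either a nontrivial biconnected component or an upwards edge (coming from the directed arc of $C$) in that subtree. The careful case analysis of the location of $m$ relative to $C$ is the main technical hurdle, and is also where the small-unbalanced-cycle assumption is consumed to exclude degenerate configurations.

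For the ``in particular'' clause, the cycle case is immediate from the equivalence just proved. Otherwise $G$ is acyclic (as a directed graph) and, being nonempty and finite, contains a source $s$ with no incoming edges. The profile where both players start at $(s,s)$ and always stay is a static equilibrium: any unilateral deviation moves the deviator to a child $c$ of $s$ (all neighbors of a source are children), which by edge-decisiveness applied to $s \to c$ yields a state with strictly negative value for the deviator, so the deviation is unprofitable.
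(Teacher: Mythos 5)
Your overall strategy matches the paper's: edge-decisiveness plus Remark~\ref{rmk:wtcyc} gives necessity, and sufficiency is proved by placing the WT and \( 2 \)-chase profiles on a directed cycle \( C \) and showing every one-shot deviation lands in a state of non-positive value for the deviator. (The paper cites the safety of nontrivial biconnected components established inside the proof of Theorem~\ref{thm:char4}; you invoke the theorem's statement via the block-cut tree. Both routes handle the distance-\( \geq 2 \) states.)

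The gap is in your claim that ``the \( g \geq 4 \) hypothesis (no \( 3 \)-cycles) prevents \( w \) from being adjacent to the opponent's on-path destination.'' That is only correct when the deviator's current vertex and the opponent's destination are adjacent, i.e.\ for the WT profile and for the \emph{chased} player in the \( 2 \)-chase profile. When the \emph{chaser} deviates -- say from \( c_0 \) while the chased moves \( c_2 \to c_3 \) along the arc \( c_0 \to c_1 \to c_2 \to c_3 \) -- a vertex \( w \) with \( w - c_0 \) and \( w \to c_3 \) closes an undirected \( 5 \)-cycle, not a \( 3 \)-cycle, so girth \( 4 \) does not exclude it; and if such a \( w \) existed the chaser would collect payoff \( 1 \) immediately with the chased then at distance \( 1 \), outside the reach of your block-cut-tree argument. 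What actually rules this out is that either orientation of \( w - c_0 \) turns the \( 5 \)-cycle into \( C^5_{3,2} \) or \( C^5_{4,1} \); likewise the chaser profitably staying at \( c_0 \) would require the chord \( c_0 \to c_3 \), which creates the forbidden \( C^4_{3,1} \). So the unbalanced-small-cycle hypothesis is consumed precisely in this distance-after-deviation step (where you credit only the girth), not in the block-cut-tree case analysis where you placed it -- there, the mere presence of the nontrivial biconnected component containing \( C \) on the non-deviator's side already suffices. Your necessity direction and the ``in particular'' clause (a source vertex yields a static equilibrium in the acyclic case) are fine, except that edge-decisiveness needs the paper's two-case analysis on whether \( u \) lies in a nontrivial biconnected component rather than a bare appeal to Theorem~\ref{thm:char4}.
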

\begin{proof}
	As we saw earlier in the proof of Theorem~\ref{thm:char4}, \( G \) is
	edge-decisive.
	In particular, by Remark~\ref{rmk:wtcyc}, this implies that WT and \(
	2 \)-chase equilibria are only supported by directed cycles.

	A directed cycle is clearly necessary to have either a \( 2 \)-chase or a
	WT~equilibrium.
	Consider a directed cycle of~\( G \), it is contained in a nontrivial
	biconnected component.
	As shown in Theorem~\ref{thm:char6}, any profitable deviation from
	cycle-based equilibria leaves the players distance either \( 0 \) or \( 2
	\) from one another.
	Since we have shown in Theorem~\ref{thm:char4} that the value of such
	pairs of vertices for a player in a nontrivial biconnected component is at
	least \( 0 \), there are no profitable deviations.
\end{proof}

We state a necessary and sufficient condition under the previous assumptions for
there to be a static equilibrium, albeit for concision we state it in negative
form.
The proof of the following proposition is deferred to
Proposition~\ref{prop:nostaticapp} in Appendix~\ref{app:g4}.
\begin{prop}\label{prop:nostatic}
	If \( G \) satisfies \( g \geq 4 \) and contains no unbalanced cycles, \(
	G \) has no static equilibria if and only if the following are all true,
	\begin{enumerate}
		\item \( G \) has exactly one nontrivial
		biconnected component~\( B \);
		\item The thinned BC-tree is an outdirected tree
		rooted at \( B \) (all edges go downwards);
		\item \( B \) is of diameter exactly \( 2 \) and all pairs of
		distance-\( 2 \) vertices have a common neighbor that is a parent
		of one of the two;
		\item Every vertex in \( B \) has a parent.
	\end{enumerate}
\end{prop}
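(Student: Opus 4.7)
The plan is to prove both implications using the one-shot deviation principle together with Theorem~\ref{thm:char4}. We will make heavy use of the fact that $G$ is edge-decisive (established in the proof of Theorem~\ref{thm:char4}): for every edge $u \to v$, the minmax value $V(u,v)$ of $u$ over $v$ satisfies $V(u,v) \geq (1-\delta) > 0$; and every pair of vertices at distance at least $2$ inside the same nontrivial biconnected component has value exactly $0$.

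For the $(\Leftarrow)$ direction, I would assume all four conditions hold and, for any candidate static equilibrium $(x_\infty,y_\infty)$, exhibit a profitable one-step deviation by one of the players. The cases split on the location of the players relative to the unique nontrivial biconnected component $B$. When both are inside $B$, condition~3 restricts their distance to $0$, $1$, or $2$: at distance $0$ with $x_\infty=y_\infty=v$, condition~4 supplies a parent $p$ of $v$, and the deviation of $x$ to $p$ has value $\delta V(p,v) > 0$ by edge-decisiveness; at distance $1$ the player at the child vertex deviates onto the opponent's vertex, improving the per-round payoff from $-1$ to $0$; at distance $2$, condition~3 produces a common neighbor $z$ that is a parent of, say, $u$, and $u$'s opponent deviates to $z$ for strictly positive value. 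When at least one player lies in a subtree outside $B$, conditions~1 and~2 ensure those subtrees are outdirected trees rooted at their cut-vertex anchor, and Theorem~\ref{thm:char4} then gives the in-$B$ player (or the player closer to $B$) a strictly profitable one-step deviation moving toward the outdirected side.

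For the $(\Rightarrow)$ direction, I would prove the contrapositive by constructing a static equilibrium whenever one condition fails. If condition~4 fails, a parentless $v \in B$ has only children as neighbors, and every one-step deviation from $(v,v)$ lands the deviator as the child of the opponent with value $\leq -\delta(1-\delta) < 0$, so $(v,v)$ is a static equilibrium. If condition~3 fails, either $B$ has diameter at least $3$ and any pair at distance $3$ in $B$ forms a static equilibrium (all deviations stay at distance $\geq 2$ inside $B$ with value $0$), or $B$ has diameter $2$ with a bad distance-$2$ pair $(u,v)$: deviations landing on a common neighbor land on a child of the opponent (by the bad-pair property), giving value $\leq -\delta(1-\delta)$, while other deviations preserve value $0$, so $(u,v)$ is a static equilibrium. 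If condition~2 fails, an upwards edge in some subtree provides a safe anchor position for one player that can be paired with a safe vertex of the other player in $B$. If condition~1 fails, either $G$ is a tree and a source vertex $v$ gives $(v,v)$ as a static equilibrium (as in the proof of Theorem~\ref{thm:tree}), or $G$ has at least two nontrivial biconnected components, and placing each player in a different component at distance $\geq 2$ from all cut vertices yields a static equilibrium by within-component safety.

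The main technical obstacle is the $(\Leftarrow)$ case $x_\infty = y_\infty = v \in B$: we must certify that the continuation value $V(p,v)$ after the deviation to a parent $p$ is strictly positive. Edge-decisiveness of $G$, already established in the proof of Theorem~\ref{thm:char4}, is exactly the right ingredient. A secondary subtlety is the case analysis when exactly one player lies in $B$ and the other in a downdirected subtree: condition~2 must be combined with Theorem~\ref{thm:char4}'s characterization to produce a concrete deviation, potentially requiring the in-$B$ player to first move to a cut-vertex neighbor before the subtree's outdirected structure can be exploited.
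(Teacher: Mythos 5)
Your overall architecture mirrors the paper's: for $(\Leftarrow)$ a case analysis on the players' positions relative to the unique nontrivial biconnected component $B$, and for $(\Rightarrow)$ a per-condition construction of a static equilibrium. The $(\Rightarrow)$ direction is essentially correct and matches the paper case by case (parentless vertex in $B$, distance-$3$ pair or bad distance-$2$ pair, tail of an upwards edge paired with a vertex of $B$, source of a tree or two components), modulo a small imprecision: ``distance $\geq 2$ from all cut vertices'' need not be achievable in a diameter-$2$ component, and what you actually need is that the two players sit at distance $\geq 3$ from each other in distinct nontrivial biconnected components. The in-$B$ cases of $(\Leftarrow)$ (distance $0$, $1$, $2$) are also fine and match the paper.

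The genuine gap is in the $(\Leftarrow)$ cases where at least one player lies outside $B$. You claim Theorem~\ref{thm:char4} yields ``a strictly profitable one-step deviation,'' but no one-step deviation works in general: if both players sit deep in outdirected subtrees hanging off $B$, then after one step toward $B$ the midpoint of the path between them typically still lies inside $B$, so by Theorem~\ref{thm:char4} the new position has minmax value $0$ and the one-shot deviation (against a worst-case off-path response) certifies nothing. The actual argument, as in the paper, is a multi-round deviating strategy: the deviator races to $B$ (arriving strictly earlier because the opponent loses the round in which the deviation occurs), then uses the diameter-$2$ property and condition~4 to occupy a neighbor of a parent of the opponent's cut vertex, which blocks the opponent from ever safely entering $B$, and finally descends the opponent's outdirected subtree to collect positive payoff as in Lemma~\ref{lmm:3path}; one must check the deviator is never at a child of the opponent along the way. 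Moreover, you never treat the case where both players are outside $B$ at \emph{equal} distances from it: there the race is a tie, and the paper breaks it using precisely the common-neighbor-parent clause of condition~3 applied to the two cut vertices (one player can reach a parent of the other's cut vertex in the same round the other reaches that cut vertex). Without this tie-breaking argument the forward direction is incomplete, and it is exactly the place where condition~3's second clause earns its keep.
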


We finally show that up to added outgoing branches, the only graph with no static
equilibria with no small unbalanced cycles and \( g \geq 5 \) is the directed \(
5 \)-cycle.
The proof of the following corollary is deferred to Corollary~\ref{cor:wt2capp}
in Appendix~\ref{app:g4}.
\begin{cor}\label{cor:wt2c}
	If \( g \geq 5 \) and small unbalanced cycles are forbidden, the only
	graphs with no static equilibria are composed of a directed \( 5 \)-cycle
	with outgoing edges from its nodes forming a directed outgoing tree
	rooted at the cycle.
	In particular, all graphs with \( g \geq 5 \) and no small unbalanced
	cycles either have a static equilibrium or both a \( 2 \)-chase and a
	WT~equilibrium.
\end{cor}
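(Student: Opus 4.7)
The plan is to apply Proposition~\ref{prop:nostatic} and use the stronger hypothesis \( g \geq 5 \) together with the forbidden unbalanced small cycles to force the unique nontrivial biconnected component to be the directed \( 5 \)-cycle. Note that \( g \geq 5 \) rules out all \( 4 \)-cycles, so any two distance-\( 2 \) vertices of the biconnected component \( B \) have a \emph{unique} common neighbor (two distinct common neighbors would close a \( 4 \)-cycle), and among \( 5 \)-cycles only the fully directed \( C^5_{5,0} \) survives the no-unbalanced-small-cycles restriction. From Proposition~\ref{prop:nostatic}, if \( G \) has no static equilibrium then \( G \) has a unique nontrivial biconnected component \( B \) of diameter \( 2 \), the thinned block-cut tree is outdirected from \( B \), every distance-\( 2 \) pair in \( B \) has a common neighbor that is a parent of one of them, and every vertex of \( B \) has a parent.

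Since \( B \) is biconnected with at least three vertices, it contains a shortest cycle, which must be a directed \( 5 \)-cycle \( C = v_1 \to v_2 \to v_3 \to v_4 \to v_5 \to v_1 \). To show \( V(B) = V(C) \), suppose for contradiction that there is \( w \in V(B) \setminus V(C) \). By connectivity of \( B \) we may assume \( w \) is adjacent to some vertex of \( C \), and by \( g \geq 5 \) this neighbor is unique; relabel so that \( w - v_1 \). I then chase the parent condition around the cycle. The pair \( (w, v_5) \) has unique common neighbor \( v_1 \); since \( v_5 \to v_1 \), the parent condition forces \( v_1 \to w \). The pair \( (w, v_3) \) has a unique common neighbor \( u \), which girth forces off the cycle (the on-cycle candidates \( v_2, v_4 \) would each close a \( 3 \)- or \( 4 \)-cycle through \( v_1 \)). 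The pair \( (u, v_2) \) has unique common neighbor \( v_3 \); since \( v_2 \to v_3 \), this forces \( v_3 \to u \). Finally, the pair \( (u, v_1) \) has unique common neighbor \( w \); since \( v_1 \to w \), this forces \( w \to u \). The five distinct vertices \( v_1, v_2, v_3, u, w \) now carry a \( C^5_{3,2} \) subgraph \( v_1 \to v_2 \to v_3 \to u \leftarrow w \leftarrow v_1 \), contradicting the no-unbalanced-small-cycles assumption. Hence \( V(B) = V(C) \) and \( B \) is the directed \( 5 \)-cycle.

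The thinned block-cut tree being outdirected from \( B \) then forces the remainder of \( G \) to form directed outgoing trees rooted at vertices of the cycle, which is the claimed structure. For the \emph{in particular} statement, a graph satisfying the hypotheses of the corollary either has a static equilibrium or, by the above, contains the directed \( 5 \)-cycle; the preceding proposition (characterizing the existence of WT and \( 2 \)-chase equilibria as equivalent to the existence of a directed cycle under the same hypotheses) then yields both. The main obstacle is the directed-edge chase in the second paragraph, for which each uniqueness claim reduces to a direct exclusion of a \( 4 \)-cycle and each forced direction to combining the parent condition with a known cycle edge; the chain is designed to close exactly on the forbidden \( C^5_{3,2} \).
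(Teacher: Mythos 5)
There is one genuine gap: your justification for the shortest cycle of \( B \) being \emph{directed} rests on the claim that ``among \( 5 \)-cycles only the fully directed \( C^5_{5,0} \) survives the no-unbalanced-small-cycles restriction,'' and that claim is false. The paper's own Appendix~\ref{app:cycle-directions} (Figure~\ref{fig:bal}) shows there are exactly two orientation classes of the \( 5 \)-cycle that are \emph{not} unbalanced: the fully directed one and the orientation with two sources and two sinks (e.g.\ \( a \to b \leftarrow c \to d \leftarrow e \to a \)); the latter is not excluded by the hypothesis, so forbidding unbalanced small cycles alone does not force the cycle to be directed. The conclusion is still true, but it must be derived from item~4 of Proposition~\ref{prop:nostatic}: fix one edge direction \( v_1 \to v_2 \); the pair \( (v_1, v_3) \) has \( v_2 \) as its unique common neighbor (a second one would close a \( 4 \)-cycle), and since \( v_2 \) is not a parent of \( v_1 \) it must be a parent of \( v_3 \); propagating this around the cycle gives the directed orientation. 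This is exactly the mechanism you use three times in your second paragraph, so the fix is immediate, but as written the step is unsupported. (You should also say a word about why the shortest cycle of \( B \) has length exactly \( 5 \): a shortest cycle is isometric, so length \( \geq 6 \) would put two of its vertices at distance \( 3 \), contradicting diameter \( 2 \).)

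Apart from this, your argument follows the paper's strategy (invoke Proposition~\ref{prop:nostatic}, extract a directed \( 5 \)-cycle in the unique nontrivial biconnected component, rule out additional vertices, read off the tree structure from item~2). Your exclusion of an extra vertex \( w \in V(B) \setminus V(C) \) is a correct and somewhat more explicit variant of the paper's: you chase the parent condition through \( w \) and the common neighbor \( u \) of \( (w, v_3) \) until the five edges close up into a forbidden \( C^5_{3,2} \), whereas the paper derives a contradiction with the previously established fact that all neighbors of the cycle are children of it. Both routes are valid; yours has the advantage of terminating in a violation of the stated hypothesis rather than of an intermediate claim.
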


\section{Constructions with no cycle-based or static equilibria}\label{sec:constructions}
We now argue that unbalanced cycles play an important role in the existence of
cycle-based equilibria by exposing constructions without cycle-based or static
equilibria.
We show that even under strong connectivity (a much stronger assumption than
before, meaning all vertices are part of a nontrivial biconnected component and
of a directed cycle) and the absence of \emph{any} \( 4 \)-cycles, these
equilibria do not always exist.
Exhibiting such counter-examples is subtle since ensuring strong connectivity
often creates many new directed cycles, which creates opportunities for
cycle-based equilibria.

We first show that a constant upper bound on \( \delta \) along with a girth
assumption ensures the edge-decisiveness of a graph, which will be used in the
proofs in this section.
The proof of the following lemma is deferred to Lemma~\ref{lmm:gammaapp} in
Appendix~\ref{app:constructions}.
\begin{lmm}\label{lmm:gamma}
	Let \( \gamma_a \) be the unique positive root of \( \gamma^{a-2}+\gamma-1=0
	\) for \( a \ge 4 \).
	If \( g(G) \geq a \), then \( G \) is edge-decisive for all \( \delta <
	\gamma_a \).
\end{lmm}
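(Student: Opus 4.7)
The plan is to show that, when starting at the endpoints of an edge \( u \to v \), the player at \( u \) can guarantee strictly positive payoff simply by never moving. Fix such an edge and set \( x_0 = u \), \( y_0 = v \). I would take \( \varphi_x \) to be the pure strategy ``stay at \( u \) forever'' and lower bound \( u_x(\varphi_x, \varphi_y) \) uniformly over all opponent strategies \( \varphi_y \).

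With \( x_t = u \) for every \( t \), the round-\( t \) contribution in~\eqref{eq:defipayoff} equals \( +1 \) when \( y_t \) is a child of \( u \), \( -1 \) when \( y_t \) is a parent of \( u \), and \( 0 \) otherwise. At \( t = 0 \) this is deterministically \( +1 \) since \( u \to v \). The core geometric step is to argue that \( y_t \) cannot be a parent of \( u \) for any \( t \leq a - 3 \): since player~\( y \)'s position after \( t \) rounds is reachable from \( v \) by an undirected walk of length at most \( t \) in \( \unG \), it suffices to show \( d(v, p) \geq a - 2 \) in \( \unG \) for every parent \( p \) of \( u \). Suppose for contradiction a shortest undirected \( v \)-to-\( p \) path had length \( k < a - 2 \). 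The vertices \( u, v, p \) are pairwise distinct (no loops gives \( u \neq v \) and \( u \neq p \), and \( v \neq p \) because the oriented graph cannot contain both \( u \to v \) and \( v \to u \)). If the path avoids \( u \), gluing it to the edges \( u - v \) and \( p - u \) yields an undirected cycle of length \( k + 2 < a \); if the path meets \( u \) at an interior index, then concatenating an appropriate sub-path with one of the glued edges yields an even shorter cycle. Either case contradicts \( g(\unG) \geq a \).

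Putting this together, the round-\( t \) contribution is non-negative for \( 0 \leq t \leq a - 3 \) (and exactly \( +1 \) at \( t = 0 \)), and at least \( -1 \) for \( t \geq a - 2 \), so
\[
u_x(\varphi_x, \varphi_y) \;\geq\; (1 - \delta) \left[ 1 - \sum_{t \geq a - 2} \delta^t \right] \;=\; 1 - \delta - \delta^{a - 2}.
\]
To finish, observe that \( \gamma \mapsto \gamma^{a - 2} + \gamma - 1 \) is strictly increasing on \( [0, \infty) \), negative at \( 0 \) and positive at \( 1 \), so its unique positive root \( \gamma_a \) lies in \( (0, 1) \) and the inequality \( 1 - \delta - \delta^{a - 2} > 0 \) is equivalent to \( \delta < \gamma_a \).

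I do not foresee any substantive obstacle beyond the geometric step above. The main subtleties to be careful about are that a length-\( t \) trajectory of player~\( y \) is a walk and not necessarily a path (so I use the bound \( d(v, y_t) \leq t \)) and that the closed walk constructed from a hypothetical short \( v \)-to-\( p \) path may revisit \( u \) -- both handled by the case distinction above.
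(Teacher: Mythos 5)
Your geometric claim that \( d(v,p) \geq a-2 \) in \( \unG \) for every parent \( p \) of \( u \) is false, and the case you dismiss (the shortest \( v \)-to-\( p \) path passing through \( u \)) is exactly where it breaks. If \( p \to u \to v \), then \( v - u - p \) is an undirected path of length \( 2 \), so \( d(v,p) \leq 2 \) regardless of the girth; the closed walk you would build there traverses each of the edges \( u-v \) and \( u-p \) twice and is not a cycle, so no contradiction with \( g(\unG) \geq a \) arises. Consequently, against the pure strategy ``stay at \( u \) forever,'' player \( y \) can play \( v \to u \to p \) and then remain at \( p \) forever, giving player \( x \) payoff exactly \( (1-\delta)\bigl(1 + 0 - \sum_{t \geq 2}\delta^t\bigr) = 1 - \delta - \delta^2 \), which is negative for \( \delta \in \left] \gamma_4 ; \gamma_a \right[ \) whenever \( a \geq 5 \) (note \( \gamma_4 < \gamma_a \) since \( \gamma^{a-2} < \gamma^2 \) on \( \left]0;1\right[ \)). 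So your argument only establishes edge-decisiveness for \( \delta < \gamma_4 \), not \( \delta < \gamma_a \).

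The repair is the one the paper uses: player \( x \) stays at \( u \) \emph{unless} player \( y \) arrives at \( u \), in which case the position is symmetric and \( x \) switches to a minmax strategy whose continuation value is \( 0 \). With this modified strategy, the only way for \( y \) to ever make \( x \)'s per-round payoff negative is to reach a parent \( p \) of \( u \) by an undirected walk that avoids \( u \); closing such a walk up with the edges \( u \to v \) and \( p \to u \) does yield a genuine cycle, hence the walk has length at least \( a-2 \) and the first possibly negative round is round \( a-2 \). This recovers your bound \( u_x \geq 1 - \delta - \delta^{a-2} \), and the rest of your algebra (monotonicity of \( \gamma \mapsto \gamma^{a-2}+\gamma-1 \), hence \( \gamma_a \in \left]0;1\right[ \) and the equivalence with \( \delta < \gamma_a \)) goes through unchanged.
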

In particular, when $g=5$ we obtain that $G$ is edge-decisive for all $\delta<\gamma_5\approx0.68233$.

\begin{figure}
	\begin{subfigure}{.24\textwidth}
		\centering
		\begin{tikzpicture}[every node/.style={shape=circle, fill=black}]
	\node (a)[fill=black,text=white,draw] at (0.5872183921658105, -0.8095675141099725) {$a$};
	\node (b)[fill=black,text=white,draw] at (-0.5885451895120709, -0.8095675141099725) {$b$};
	\node (c)[fill=black,text=white,draw] at (-0.9508714314867199, 0.3095860474607925) {$c$};
	\node (d)[fill=black,text=white,draw] at (0.0009066448532779603, 1.0008916914407422) {$d$};
	\node (e)[fill=black,text=white,draw] at (0.9512915839797025, 0.30865728931841013) {$e$};
	\node (f)[fill=black,text=white,draw] at (0.5872183921658105, 1.8796295738690973) {$f$};
	\node (g)[fill=black,text=white,draw] at (-0.5885451895120709, 1.8796295738690973) {$g$};
	\node (h)[fill=white,draw] at (1.2494008343953413, 2.4223855549864036) {};
	\node (i)[fill=white,draw] at (1.5589868818561337, 3.373256986473124) {};
	\node (j)[fill=white,draw] at (0.6081154503694137, 3.682843033933916) {};
	\node (k)[fill=white,draw] at (0.2985294029086213, 2.7319716024471957) {};
	\graph[edges={thick,>=stealth}]{
		(c) -> (b) -> (a) -> (e);
		(c) -> (d) -> (e);
		(e) -> (f) -> (g) -> (c);
		(f) -> (h) -> (i) -> (j);
		(f) -> (k) -> (j);
	};
\end{tikzpicture}
		\caption{A graph with no static equilibria.}\label{fig:nost}
	\end{subfigure}
	\hfill
	\begin{subfigure}{.37\textwidth}
		\centering
		\begin{tikzpicture}[scale=.3,every node/.style={shape=circle, draw=black,
	font=\small, inner sep=0pt, color=lightgray, thick, minimum size=10pt}]
	  % Define nodes
	  \node[minimum size=25pt,color=black,very thick] (Paris) at (0,0) {\color{black}1};
	  \node[minimum size=18pt,color=black,thick] (Brest) at (-6,4) {\color{black}2};
	  \node[minimum size=18pt,color=black,thick] (Bordeaux) at (-6,-5) {\color{black}3};
	  \node[minimum size=18pt,color=black,thick] (Marseille) at (6,-5) {\color{black}4};
	  \node[minimum size=18pt,color=black,thick] (Lille) at (6,4) {\color{black}5};
	  \node[minimum size=18pt,color=black,thick] (LeHavre) at (0,6) {\color{black}6};
	  \node[minimum size=18pt,color=black,thick] (Montpellier) at (0,-9) {\color{black}7};
	  \node[minimum size=18pt,color=black,thick] (Bastia) at (-10,-10) {\color{black}8};
	%   \node (Bastia) at (-10,-10) {};
	  \node (Amiens) at (3,2) {15};
	  \node (Lyon) at (3,-2) {14};
	  \node (Clermont) at (2,-3.8) {134};
	  \node (Tours) at (-2,-2) {13};
	  \node (Nantes) at (-3,2) {12};
	  \node (MiniLeHavre) at (0,3) {16};
	  \node (MiniMontpellier) at (0,-6.5) {71};
	  \node (Toulouse) at (-6,-9) {37};
	  \node (Rodez) at (-3,-5) {34};
	  \node (MiniBastia) at (-3,0) {81};

	  % Draw edges
	  \graph[edges={>=stealth}]{(Brest) -> (LeHavre);
	  (LeHavre) -> (Lille);
	  (Lille) -> (Marseille);
	  (Lille) -> (Amiens);
	  (Amiens) -> (Paris);
	  (Brest) -> (Nantes);
	  (LeHavre) -> (MiniLeHavre);
	  (MiniLeHavre) -> (Paris);
	  (Nantes) -> (Paris);
	  (Tours) -> (Paris);
	  (Clermont) -> (Tours);
	  (Lyon) -> (Paris);
	  (Marseille) -> (Lyon);
	  (Marseille) -> (Rodez);
	  (Rodez) -> (Bordeaux);
	  (Bordeaux) -> (Brest);
	  (Toulouse) -> (Bordeaux);
	  (Montpellier) -> (Toulouse);
	  (Montpellier) -> (Marseille);
	  (Paris) -> (MiniMontpellier);
	  (MiniMontpellier) -> (Montpellier);
	  (Paris) -> (MiniBastia);
	  (Bordeaux) -> (Tours);
	  (Marseille) -> (Clermont);};

	  \draw[->,>=stealth] (MiniBastia) to [out=180, in=55] (Bastia);
	  \draw[->,>=stealth] (Bastia) to [out=90, in=250] (Brest);
	  \draw[->,>=stealth] (Bastia) to [out=0,in=270] (Marseille);
\end{tikzpicture}
		\caption{A graph with no WT~equilibria.}\label{fig:nowt}
	\end{subfigure}
	\hfill
	\begin{subfigure}{.37\textwidth}
		\centering
		\begin{tikzpicture}[scale=.4]
	  % Define nodes
	  \node[n] (1) at (0,12) {1};
	  \node[n] (2) at (3,12) {2};
	  \node[n] (3) at (6,12) {3};
	  \node[n] (4) at (0,9) {4};
	  \node[n] (5) at (3,9) {5};
	  \node[n] (6) at (6,9) {6};
	  \node[n] (7) at (10,10.5) {7};
	  \node[n] (8) at (3,6) {8};
	  \node[n] (9) at (-3,10.5) {9};

	  % Draw edges
	  \draw[diredge] (1) -- (2);
	  \draw[diredge] (2) -- (3);
	  \draw[diredge] (3) -- (4);
	  \draw[diredge] (4) -- (5);
	  \draw[diredge] (5) -- (6);
	  \draw[diredge] (6) -- (7);
	  \draw[diredge] (7) to [out=270,in=50] (8);
	  \draw[diredge] (8) to [out=130,in=270] (9);
	  \draw[diredge] (9) -- (1);
	  \draw[diredge] (6) -- (1);
	  \draw[diredge] (9) -- (4);
	  \draw[diredge] (3) -- (7);
\end{tikzpicture}
		\caption{A graph with no \( 2 \)-chase equilibria.}\label{fig:no2chase}
	\end{subfigure}
	\caption{Constructions without cycle-based or static equilibria.}\label{fig:counter}
\end{figure}
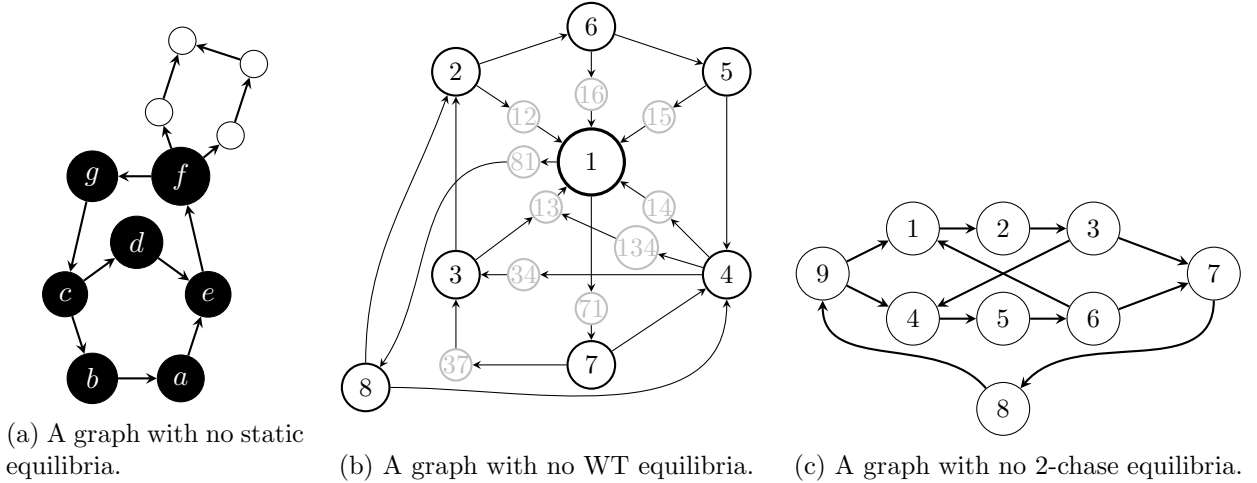

\medskip{}
We begin with a construction (Figure~\ref{fig:nost}) that supports no static
equilibria, whilst only containing one type of small unbalanced cycles (\(
C^5_{3,2} \)) and no cycles of length \( 4 \).
To be relevant, this construction must violate at least one item of
Proposition~\ref{prop:nostatic}: we show it violates all of them but the last,
which cannot be violated by any such construction (a vertex with no parents always
has a static equilibrium).
We note that unlike the other constructions, this construction is not strongly
connected: it can easily be made to be by removing the white vertices while
preserving its properties (except it would no longer violate item~\ref{item:1} of
Proposition~\ref{prop:nostatic}).
The proof of the following theorem is deferred to Theorem~\ref{thm:nostapp} in
Appendix~\ref{app:constructions}.
\begin{thm}\label{thm:nost}
	The graph in Figure~\ref{fig:nost} has $g\ge5$, only contains \(
	C^5_{3,2} \) of the small unbalanced cycles and has no static equilibria
	for every \( \delta \in \left] 0; 1 \right[ \).
	Moreover, it violates every condition of Proposition~\ref{prop:nostatic}
	except for item~\ref{item:5} (which clearly is a necessary condition).
\end{thm}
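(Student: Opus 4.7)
The plan is to verify three claims about the graph $G$ of Figure~\ref{fig:nost}: (i)~$g(\unG) \geq 5$ and every small unbalanced cycle of $G$ is of type $C^5_{3,2}$, (ii)~items~1--4 of Proposition~\ref{prop:nostatic} fail, and (iii)~no static equilibrium exists for any $\delta \in \left]0;1\right[$. Claims (i)--(ii) reduce to direct inspections of the explicit graph, while (iii) is the substantive content.

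For (i), I would first verify the absence of $3$-cycles and $4$-cycles by checking common neighbors of adjacent vertices. Then I would enumerate the $5$-cycles in $G$ and confirm each has a $3$-$2$ edge-direction split, which rules out the fully directed $5$-cycle and $C^5_{4,1}$. For (ii), the witnesses are structural: two distinct nontrivial biconnected components (violating item~1); an upwards-directed edge in the thinned block-cut tree for every choice of root vertex (violating item~2); either a distance-$3$ pair within a nontrivial biconnected component or a distance-$2$ pair lacking a common parent (violating item~3); and a vertex of a nontrivial biconnected component with no incoming edge from within that component (violating item~4).

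For (iii), a static equilibrium must be supported at a pair $(u,v)$ with either $u = v$ or $d_{\unG}(u,v) \geq 2$, because adjacent pairs have nonzero one-shot payoff and cannot sustain equilibrium value $0$. For $u = v$: since every vertex of $G$ has at least one parent -- the one condition of Proposition~\ref{prop:nostatic} that $G$ satisfies, and clearly necessary (else $(u,u)$ is trivially static) -- player~$x$ deviates by moving to a parent $p$ of $u$, so that the next state $(p, u)$ gives one-shot payoff $+1$; the continuation value from $(p,u)$ is bounded below by a strictly positive quantity via an adaptation of the $(\tfrac14, \tfrac14, \tfrac12)$ mixing argument in the last paragraph of the proof of Theorem~\ref{thm:char6}. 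For $d_{\unG}(u,v) \geq 2$: a case analysis on the explicit (small) graph shows that in each such pair, one of the two players has a neighbor that is a parent of the other, and the same continuation bound makes the deviation strictly profitable.

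The principal obstacle is uniformity in $\delta$: Lemma~\ref{lmm:gamma} guarantees edge-decisiveness only for $\delta < \gamma_5 \approx 0.682$, so for larger $\delta$ one cannot directly conclude that reaching a parent of the opponent yields strictly positive value. To handle all $\delta$, I would supply a continuation lower bound at each reached state $(p,u)$ with $p \to u$ by mimicking the $3 \times 3$ matrix argument from Theorem~\ref{thm:char6}: the deviator mixes among staying at $p$, advancing to a further parent of $u$, and retreating to a child of $p$, yielding a bound strictly greater than $-1$ for every $\delta$. The most delicate cases are pairs $(u,v)$ sitting on a common $C^5_{3,2}$, where a one-step parent deviation may land in a state whose direct bound is too weak; in those cases a two-step lookahead along the $5$-cycle is needed to guarantee strict profitability uniformly in $\delta$.
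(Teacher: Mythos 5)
Your outline of parts (i) and (ii) matches the paper's (largely implicit) verification, and your case split for (iii) — same vertex versus distance at least $2$ — is the same decomposition the paper uses. The genuine gap is in how you handle uniformity in $\delta$. You correctly identify that Lemma~\ref{lmm:gamma} only covers $\delta < \gamma_5$, but your proposed fix — a continuation bound ``strictly greater than $-1$'' obtained by mimicking the $3\times3$ matrix argument of Theorem~\ref{thm:char6} — does not suffice. After the deviator reaches a parent of the opponent at round $t$, their payoff from that round on is $(1-\delta)\delta^t\left(1 + \delta\, p\right)$ where $p$ is the normalized continuation value; profitability requires $p > -\frac{1-\delta}{\delta}$, which tends to $0$ as $\delta \to 1$. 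Any fixed bound strictly between $-1$ and $0$ fails for $\delta$ close to $1$. Moreover, the matrix argument of Theorem~\ref{thm:char6} relies on all non-adjacent pairs having minmax value $0$, which is false in the graph of Figure~\ref{fig:nost}: the pairs $(c,a)$, $(d,a)$, $(g,b)$ are non-adjacent yet have strictly positive value for the first player, so the matrix entries you would need are not available.

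The paper closes this gap with a structural observation you are missing: the graph contains the well-directed cycle $c\to d\to e\to f\to g\to c$, on which a player at distance at least $2$ from the opponent can move counterclockwise forever and never be caught, so every position on (or safely reaching) that cycle has continuation value at least $0$. Edge-decisiveness for \emph{all} $\delta\in\left]0;1\right[$ then follows because the parent endpoint of each edge can retreat into this cycle after collecting the $+1$, giving total value at least $(1-\delta)\cdot 1 + \delta\cdot 0 > 0$. The same safe cycle underlies the positive value of $(c,a)$, $(d,a)$ and $(g,b)$: these are not one-step captures (your claim that in each distance-$2$ pair one player has a neighbor that is a parent of the other does not produce immediate profit here), but guessing arguments in which the advantaged player covers a parent of whichever neighbor the opponent is likeliest to move to, with the miss case still yielding nonnegative continuation because the player lands back in or near the safe cycle. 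Your ``two-step lookahead'' remark gestures at this but does not supply the mechanism, and without the safe-cycle argument the proof does not go through for large $\delta$.
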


We now show a construction (Figure~\ref{fig:nowt}) that supports no WT equilibria
for all \( \delta<\gamma_5 \) (in this case the construction is edge-decisive).
This construction is more involved as it contains many directed cycles and
deviations take more rounds to be profitable.
Intuitively, it supports no WT~equilibria because it has a central vertex
(vertex~\( 1 \) in the figure) that has many parents. No matter which parent
the WT~equilibrium prescribes to go to, there is always a parent to which a deviator can profitably deviate.
We then show that every directed cycle that does not go through \( 1 \) verifies
a similar property with vertex~\( 2 \).
The proof of the following theorem is deferred to Theorem~\ref{thm:nowtapp} in
Appendix~\ref{app:constructions}.
\begin{thm}\label{thm:nowt}
	The graph in Figure~\ref{fig:nowt} is strongly connected, satisfies
	\( g\ge5 \), and has no WT~equilibria for every \( \delta<\gamma_5\approx0.68233 \).
\end{thm}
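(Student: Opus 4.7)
The plan is to reduce WT equilibria to directed cycles and then block each such cycle by a one-shot deviation argument localized at one of two distinguished branching vertices. First, since $g(G)\ge 5$ and $\delta<\gamma_5$, Lemma~\ref{lmm:gamma} gives that $G$ is edge-decisive. By Remark~\ref{rmk:wtcyc}, a WT equilibrium in an edge-decisive graph moves both players to a parent at each round; since the definition of a WT equilibrium forces $x_t=y_t$ with probability one, this choice must be deterministic, so the on-path trajectory is eventually periodic and traces some directed cycle $C\subseteq G$ in reverse. It therefore suffices to show that for every directed cycle $C$, at some vertex $v\in C$ there exists an alternative parent $u'$ of $v$ (different from the WT-prescribed next vertex $u\in C$) such that the minmax value of the position $(u',u)$ is strictly positive for the deviator.

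I would first verify by inspection of Figure~\ref{fig:nowt} that the graph is strongly connected, has undirected girth at least $5$, and enjoys the two structural features highlighted in the paragraph preceding the theorem: vertex $1$ has several parents, and every directed cycle that avoids vertex $1$ must pass through vertex $2$, which is symmetrically equipped with several parents. The argument then splits into two parallel cases. Case (i): if $C$ passes through $1$, fix the WT-prescribed parent $u\in C$ of $1$ and exhibit a concrete alternative parent $u'$ of $1$. The immediate payoff of the deviation is $0$ because $g\ge 5$ forbids a triangle through $1$, so $u$ and $u'$ cannot be adjacent; the entire gain of the deviation therefore comes from the continuation at state $(u',u)$. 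I would lower bound this continuation value by exhibiting a specific response strategy from $u'$ whose payoff is estimated round by round using edge-decisiveness and the concrete layout around vertex $1$. Case (ii), when $C$ avoids vertex $1$, proceeds identically with $2$ in place of $1$.

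The main obstacle is the strict positivity of the continuation value at $(u',u)$: it must hold for \emph{every} prescribed $u$ that a directed cycle through $1$ (respectively $2$) could select, and uniformly over $\delta<\gamma_5$. I expect this to require a small case split on the choice of $u$, each case being dispatched by exhibiting a short directed walk from $u'$ reaching a parent of $u$ (or of a position that the non-deviator is forced into), and then using edge-decisiveness together with a Bellman-style unfolding of the first few rounds to convert this one-round advantage into a strictly positive total payoff bound. The non-deviator will randomize off-path, so each subcase must be arranged so that the deviator's chosen response is robust to any best reply; I expect the tight bound $\delta<\gamma_5$ to be precisely what is needed for edge-decisiveness to survive the first post-deviation round and propagate through the continuation.
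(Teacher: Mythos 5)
Your skeleton matches the paper's: edge-decisiveness from Lemma~\ref{lmm:gamma}, the reduction via Remark~\ref{rmk:wtcyc} to deterministic walks along directed cycles, the observation that every well-directed cycle of Figure~\ref{fig:nowt} passes through vertex~$1$ or vertex~$2$, and a deviation mounted at one of those two vertices. The gap is in the step you identify as the ``main obstacle'' and then dispose of too quickly. You reduce profitability of the deviation to the claim that the \emph{minmax} value of the post-deviation state $(u',u)$ is strictly positive for the deviator. That is a much stronger condition than what is needed, and it is almost certainly false here: the graph is strongly connected with $g\ge 5$, so from $u$ the non-deviator can typically retreat into a well-directed cycle and guarantee weakly positive payoff, making the minmax value of most non-adjacent pairs equal to $0$. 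If the argument rested on minmax values of single post-deviation states, no deviation would ever be shown profitable.

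What the paper actually exploits is that one only needs the deviator's \emph{best response against the specific strategy $\varphi_x$ of the candidate equilibrium} to be profitable, and that deviation is allowed to depend on $\varphi_x$ (including its off-path randomization). Concretely, the deviator moves to an alternative parent of $1$ (say $13$), then branches on whether $\varphi_x(12,13)(1)$ exceeds an explicit threshold $\frac{\delta^7}{1+\delta^7-\delta}$, waiting to ambush if the non-deviator is likely to return and otherwise repositioning; in every failure branch the deviator must be shown either to reach a provably safe position or to avoid negative payoff for at least $7$ rounds, which is where the bound $\delta<\gamma_5$ enters quantitatively via sums of the form $p-(1-p)\frac{\delta^7}{1-\delta}$. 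This contingent, multi-round construction (and the partial reduction of the vertex-$2$ case to the vertex-$1$ case by walking together until the cycle reaches vertex~$1$) is the substance of the proof; a ``one-shot deviation to a better parent plus positive minmax continuation'' argument does not go through. You would need to replace your key sufficient condition with a best-response-against-$\varphi_x$ analysis and supply the explicit contingent strategies and payoff bookkeeping.
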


We conclude this section with a construction (Figure~\ref{fig:no2chase}) that
supports no \( 2 \)-chase equilibria.
It is based on two \( C^5_{4,1} \) cycles offering profitable deviations in all
directed cycles, at vertices \( 6 \) and \( 3 \).

\begin{thm}
	The graph in Figure~\ref{fig:no2chase} is strongly connected, satisfies
	$g\ge5$, only contains \( C^5_{4,1} \) of the small unbalanced cycles and
	has no 2-chase equilibria for every $\delta<\gamma_5\approx0.68233$.
\end{thm}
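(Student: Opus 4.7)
The plan is to first verify the structural claims by direct inspection of Figure~\ref{fig:no2chase}, and then rule out 2-chase equilibria by a case analysis on the directed cycles of the graph. Strong connectivity, $g \geq 5$, and the enumeration of small unbalanced cycles can all be checked from the figure: every undirected $4$-cycle is absent (ruling out $C^4_{2,2}$ and $C^4_{3,1}$), and every undirected $5$-cycle carries the orientation of $C^5_{4,1}$ rather than $C^5_{3,2}$.

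For the non-existence of 2-chase equilibria, I would invoke Lemma~\ref{lmm:gamma}: since $g(G) \geq 5$, the graph is edge-decisive for all $\delta < \gamma_5$. By Remark~\ref{rmk:wtcyc}, this forces any 2-chase equilibrium to live on a directed cycle, with the chased player stepping forward along the cycle each round in the direction of its edges. It therefore suffices to show that every directed cycle of $G$ admits a profitable unilateral deviation by the chased player.

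The key observation is the role of the two $C^5_{4,1}$ subgraphs at vertices $3$ and $6$. For a hypothetical 2-chase equilibrium supported on a directed cycle $C$, I would identify a vertex $v \in C \cap \{3, 6\}$ together with its associated $C^5_{4,1}$ structure, and argue that upon being chased through $v$, the chased player can take the short side of $C^5_{4,1}$ rather than the long side prescribed by $C$. This alternate path places the chased player at a parent of the chaser after a small number of rounds, yielding positive continuation payoff which, for $\delta < \gamma_5$, outweighs any immediate loss incurred by leaving $C$. The argument then reduces to verifying by inspection that every directed cycle in $G$ passes through vertex $3$ or vertex $6$, with the two cases handled symmetrically.

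The main obstacle will be the enumeration of directed cycles and the verification that the $C^5_{4,1}$ deviation is profitable in each case. Each verification is a small discounted payoff computation; edge-decisiveness (Lemma~\ref{lmm:gamma}) guarantees that once the chased player reaches a parent of the chaser, the continuation payoff is bounded strictly away from zero, while the condition $\delta < \gamma_5$ ensures this eventual gain dominates the one round of loss incurred when the deviation begins. The delicate point is to rule out any response by the chasing player that would neutralize the deviation before a parent edge is realized; this is where the specific geometry of $C^5_{4,1}$, which fixes the orientation of all five edges once the deviation direction is chosen, is essential.
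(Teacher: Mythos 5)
Your framing is right up to the last step: checking the structural claims by inspection, invoking Lemma~\ref{lmm:gamma} for edge-decisiveness when $\delta<\gamma_5$, using Remark~\ref{rmk:wtcyc} to confine any $2$-chase equilibrium to a directed cycle, and closing with the observation that once positive payoff is gained the girth bound protects the deviator for $g-2=3$ rounds. But the core of your argument has the wrong player deviating. You propose that the \emph{chased} player exploits a $C^5_{4,1}$ by ``taking the short side,'' ending up at a parent of the chaser. Geometrically this is backwards: in $C^5_{4,1}$ the single short edge runs from the unique maximal vertex to the unique minimal vertex, so a chased player who takes it lands at the \emph{minimal} vertex of the $5$-cycle, i.e.\ at a child of two vertices lying on or adjacent to the chaser's trajectory. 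The chaser, trailing two steps behind on the long side, can then step to a parent of that minimal vertex within a round or two, so the deviation is punished rather than rewarded. You also note yourself that the ``delicate point'' is a multi-round race against the chaser's responses; that race is real in your version and you give no way to win it.

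The paper's proof instead has the \emph{chaser} deviate, and the mechanism is interception rather than a shortcut: in an on-path pure equilibrium the chased player's next move from each position is deterministic, so the chaser, sitting two steps behind, can branch off to whichever of the vertices $3$, $6$ or $9$ is a parent of the chased player's prescribed next vertex, collecting positive payoff in a single round (e.g.\ with $x$ at $9$ and $y$ at $7$, player $y$ moves to $3$ if $\varphi_x(9,7)(4)=1$ and to $6$ if $\varphi_x(9,7)(1)=1$). The case analysis then runs over the possible positions of the chased player on any directed cycle, and only the final profitability step uses Lemma~\ref{lmm:gamma} as you describe. To repair your proof you would need to either switch the deviator to the chaser and exhibit these interception moves, or give a genuinely different argument for why the chased player can reach a parent of the chaser despite the chaser's best responses; as written, the proposed deviation is not profitable.
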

\begin{proof}
	Without loss of generality assume that $y$ is the chaser. If $x$ is at
	vertex~\( 9 \), then $y$ is at 7 and can deviate to either 3 or 6, based on whether $\varphi_x(9,7)(4)=1$ or $\varphi_x(9,7)(1)=1$ (respectively), and gain positive payoff.
	If $x$ is at 7 or 8, the same argument shows \( y \) can profitably
	deviate.
	If $x$ is at 6, then $y$ is at 4. Then, if $\varphi_x(6,4)(7)=1$ then $y$ can deviate to 3, and if $\varphi_x(6,4)(1)=1$ then $y$ can deviate to 9 and gain positive payoff.
	If $x$ is at 3, then $y$ is at 1. If $\varphi_x(3,1)(7)=1$ then $y$ deviates to 6, and if $\varphi_x(3,1)(4)=1$ then $y$ will profit from deviating to 9.
	If $x$ is at 1, 2, 4 or 5, then we go back to the previous arguments.

	Finally, by Lemma~\ref{lmm:gamma} with $a=g(G)=5$, after $y$ gains positive payoff, $y$ can avoid incurring negative payoff for at least 3 rounds, making $y$'s deviation overall profitable for every $\delta<\gamma_5$.
\end{proof}

Observe that the graphs in Figures~\ref{fig:nowt} and~\ref{fig:no2chase} are nonplanar by \textsc{Kuratowski}'s theorem~\cite{kuratowski1930probleme}: Figure~\ref{fig:nowt} contains a subdivision of $K_{3,3}$ (the complete bipartite graph with 3 vertices on each side) with the vertices $1,2,4$ and $3,6,8$ on the two sides of $K_{3,3}$; Figure~\ref{fig:no2chase} contains a subdivision of $K_{3,3}$ with the vertices $1,4,7$ and $3,6,9$ on the two sides of $K_{3,3}$.

\section{Outerplanar graphs}\label{sec:outer}
Say that $G$ is \emph{outerplanar} if it is planar and all of its vertices are
part of the unbounded face of $G$.
In this section we prove that all strongly connected outerplanar graphs with $g\ge4$ have both a WT and a 2-chase equilibria. We note that they do not necessarily have static equilibria, for example the 4-cycle.

In this section, suppose that $G$ is outerplanar and strongly connected.
Fix some outerplanar embedding of $G$.
As a planar graph, $G$ consists of bounded faces $C_1,\ldots,C_k$ (which we also
call \emph{minimal (undirected) cycles}).
Since all the vertices of $G$ touch the unbounded face of $G$, the cycles $C_1,\ldots,C_k$ are connected to each other either by a vertex or by an edge. These cycles are minimal in the sense that each $C_i$ bounds exactly one face.

We now prove that $G$ contains a well-directed minimal cycle. The proof of this property works for any planar graph.

\begin{lmm}\label{lmm:wdface}
	All strongly connected planar graphs have a well-directed face.
\end{lmm}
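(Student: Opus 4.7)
The plan is to argue via the oriented planar dual. Fix a planar embedding of $G$ and, for every primal edge $e = u \to v$, orient the corresponding dual edge from the face lying on the right of $e$ (walking from tail to head) into the face on the left. With this convention a face $F$ is bounded by a directed cycle in $G$ precisely when $F$ is a sink of $G^{*}$ (its boundary winds so that $F$ sits on every boundary edge's left) or a source of $G^{*}$ (so that $F$ sits on every boundary edge's right). Thus proving the lemma reduces to exhibiting a source or a sink in $G^{*}$.

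Suppose for contradiction that no face of $G$ is well-directed. Then, viewed as a vertex of $G^{*}$, every face has both positive in-degree and positive out-degree, so $G^{*}$ contains a directed cycle; take a shortest one, $F_1 \to F_2 \to \cdots \to F_k \to F_1$, which is necessarily simple (otherwise a shorter cycle would be obtained by shortcutting).

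Using the planar embedding, realize this dual cycle as a simple closed curve $\gamma$ in the plane: walk through the interior of each $F_i$ in turn, cross the primal edge $e_i$ separating $F_i$ from $F_{i+1}$ exactly once, and avoid all primal vertices. By the Jordan curve theorem, $\gamma$ separates the plane into two nonempty regions, partitioning the primal vertex set into nonempty $V_{\mathrm{in}}$ and $V_{\mathrm{out}}$; a primal edge not among $e_1, \ldots, e_k$ lies entirely inside or entirely outside. Because the dual traversal direction is consistent around $\gamma$, each $e_i$ is oriented so that its head lies on the same designated side of $\gamma$; choosing $V_{\mathrm{in}}$ to be that side, every crossing primal edge runs from $V_{\mathrm{out}}$ to $V_{\mathrm{in}}$. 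Hence no primal edge leaves $V_{\mathrm{in}}$, which contradicts the strong connectivity of $G$.

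The main obstacle is the planar-topological bookkeeping: verifying that a shortest directed cycle in $G^{*}$ yields a genuinely simple Jordan curve, and that the dual orientation convention forces every primal edge crossed by $\gamma$ to cross it in the same direction (i.e.\ with heads on a common side). Once these geometric facts are pinned down the contradiction with strong connectivity is immediate, and the contrapositive yields a well-directed face.
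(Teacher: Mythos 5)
Your approach is genuinely different from the paper's: you work in the oriented planar dual and derive a contradiction from a directed dual cycle, whereas the paper runs a descent argument on nested directed cycles in the primal (take a directed cycle $C$, and while the region it bounds is not a single face, produce a strictly smaller directed cycle inside it). The core of your argument --- a simple directed cycle in $G^{*}$ yields, via the Jordan curve $\gamma$, a nonempty vertex bipartition all of whose crossing edges are oriented the same way, contradicting strong connectivity --- is sound; indeed it proves the clean statement that the oriented dual of a strongly connected plane digraph is acyclic. The left/right bookkeeping you worry about does work out: crossing $e_i$ from its right face to its left face means $e_i$ passes from one fixed Jordan region to the other, consistently for all $i$, and the two endpoints of $e_1$ witness that both regions contain vertices.

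The genuine gap is earlier, in the equivalence you open with: ``a face $F$ is bounded by a directed cycle precisely when $F$ is a source or a sink of $G^{*}$.'' Only one direction is true in general. If $\unG$ has a cut vertex on the boundary of $F$, the boundary of $F$ is a \emph{non-simple} closed walk (e.g.\ two directed triangles sharing a vertex: the outer face's boundary visits the shared vertex twice). Such a face can be a sink of $G^{*}$ --- every boundary edge has $F$ on its left, so the boundary is a directed closed \emph{walk} --- without being bounded by a directed cycle. Consequently ``no face is well-directed'' does not imply ``every dual vertex has positive in- and out-degree,'' and your contradiction does not launch. Equivalently, after correctly concluding that $G^{*}$ is acyclic and hence has a sink, you still only get a face whose boundary is a directed closed walk, not a directed cycle; extracting an actual well-directed face from it requires a further descent into the disks bounded by the simple cycles of that walk --- which is essentially the induction the paper performs. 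This matters for the paper's application, since strongly connected outerplanar graphs routinely have cut vertices. (A second, minor, loose end: you need the source or sink to be a \emph{bounded} face; this is easily fixed since an acyclic dual with at least two faces has both a source and a sink, and they are distinct.) The argument is fully correct as written only under the additional hypothesis that $\unG$ is $2$-connected, where all face boundaries are simple cycles.
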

\begin{proof}
	Since it is strongly connected, $G$ contains a well-directed cycle~$C$.
	Fix some planar embedding of \( G \).
	Let $H_C$ be the subgraph of $G$ containing \( C \) and all vertices and
	edges inside of $C$ in the planar embedding.
	Then $H_C$ is also planar and strongly connected, by definition.
	If $H_C$ is a bounded face, the proof is done.
	Otherwise, there is a vertex $v\in V(H_C)\setminus V(C)$ that also neighbors some $u\in V(C)$.
	Suppose without loss of generality that $u\to v\in E(H_C)$.
	By strong connectivity, there is a directed path from $v$ to $u$ inside $H_C$ (possibly containing edges of $C$). By taking a shortest such path, we can assume that it does not contain $u\to v$.
	By then concatenating this path with the edge $u\to v$, we construct a
	well-directed cycle $C'$ in $H_C$.
	Let $H_{C'}$ be the subgraph corresponding to $C'$ and edges and vertices
	contained in \( C' \) in the embedding.
	Then we have $H_{C'}\subset H_C$, because $H_{C'}\subseteq H_C$ and
	there is an edge $(u\to v)\in E(C)\setminus E(H_{C'})$.
	Similarly to $H_C$, we also have that $H_{C'}$ is planar and strongly connected. Now we apply the whole argument above to $H_{C'}$ instead of $H_C$.
	Since $\card{E(H_{C'})} < \card{E(H_C)}$, and $G$ is finite, this process
	will eventually terminate.
	Upon termination we reach a minimal cycle, concluding the proof.
\end{proof}

\begin{thm}\label{thm:outerplanar}
	If $G$ is outerplanar then it supports both a WT and a 2-chase
	equilibria.
\end{thm}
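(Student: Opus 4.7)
The plan is to construct both equilibria on a single well-directed minimal cycle of $G$, whose existence follows from Lemma~\ref{lmm:wdface} since $G$ is strongly connected and planar. Write \( C = v_0 \to v_1 \to \cdots \to v_{k-1} \to v_0 \) for such a cycle. For the WT equilibrium, prescribe both players to start at the same vertex of \( C \) and at each round move together to their \( C \)-parent. For the \( 2 \)-chase equilibrium, start with the chaser \( y \) two steps behind the chased player \( x \) along \( C \) and have both move to their \( C \)-child each round. On-path, both equilibria live on the directed cycle \( C \), so their structure is inherited from the cycle-graph analysis of Section~\ref{subsec:types}.

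The main work is to verify that no deviation is profitable. On-\( C \) deviations are controlled by the usual cycle analysis: moving against the prescribed direction lands the deviator at a child of the non-deviator, giving immediate payoff \( -1 \) followed by a non-positive continuation from the resulting distance-\( 2 \) configuration. The critical case is deviations that leave \( C \), where outerplanarity becomes essential. Because \( C \) bounds a face in the outerplanar embedding, every connected subgraph attached to \( C \) shares with \( C \) either a single vertex or a single edge; otherwise, two internally disjoint paths from the subgraph to \( C \) together with portions of \( C \) would either violate outerplanarity or produce a cycle shorter than \( g \). Consequently, if a deviator leaves \( C \) from vertex \( v_i \), they can only re-enter \( C \) at a vertex in \( \{v_{i-1}, v_i, v_{i+1}\} \), as noted in the informal discussion of Theorem~\ref{thm:outerplanar} in Section~\ref{subsec:results}.

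The non-deviator's off-path response is to remain on \( C \), moving so as to stay at \( C \)-distance at least \( 2 \) from this three-vertex re-entry window; this is feasible because \( |C| \geq g(G) \geq 4 \). Since any detour through an attached subgraph cannot shorten the \( G \)-distance between two \( C \)-vertices (such a shortcut would itself yield a cycle contradicting \( g(G) \geq 4 \)), the non-deviator is genuinely at distance at least \( 2 \) in \( G \). The main obstacle I anticipate is verifying that the non-deviator never lands at a child of the deviator during the off-\( C \) excursion: this rests on the fact that the deviator is confined to the subgraph attached near \( v_i \), whose only \( C \)-attachment vertices lie inside the re-entry window, while the non-deviator stays on \( C \) outside this window (with the boundary case \( |C| = 4 \) handled by a direct check of the finitely many deviation patterns). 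Once the deviator returns to \( C \), the resulting configuration is either at distance \( \geq 2 \) in \( G \) or has the deviator at a child of the non-deviator, and in either case the immediate cost of leaving \( C \), combined with edge-decisiveness (Lemma~\ref{lmm:gamma}), makes the overall deviation unprofitable.
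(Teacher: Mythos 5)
Your skeleton matches the paper's: play both profiles on a well-directed minimal face obtained from Lemma~\ref{lmm:wdface}, and use outerplanarity to confine a deviator who leaves $C$ at $v_i$ to a bridge whose attachment vertices lie among $v_{i-1},v_i,v_{i+1}$. But the two steps you use to close the argument do not hold as stated. First, the counter-strategy ``stay at $C$-distance at least $2$ from the three-vertex re-entry window'' is infeasible in exactly the range you invoke: a vertex of $C$ at distance at least $2$ from all of $v_{i-1},v_i,v_{i+1}$ exists only when $\card{C}\ge 6$, so $g\ge 4$ does not make it feasible, and $C$ may well be a $4$- or $5$-face (the well-directed $4$-cycle is the paper's own running example). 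What is true is only the weaker fact that a non-deviator at $v_j\notin\{v_{i-1},v_i,v_{i+1}\}$ cannot be a child of an off-cycle vertex of the deviator's bridge; that protects the non-deviator while the deviator stays off $C$, but says nothing about the round of re-entry, nor about a subsequent excursion into a \emph{different} bridge attached near the non-deviator's own position. That last case --- the deviator reaching an off-cycle parent $A'$ of the non-deviator's vertex through an adjacent face $C'$ --- is precisely where the paper's proof spends most of its effort, with an explicit reactive counter-strategy (the non-deviator retreating to $N_2$ and shuttling between $N_2$ and $N_1$ as a function of the deviator's observed position). You flag this as ``the main obstacle'' but do not resolve it.

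Second, your concluding step infers unprofitability from ``the resulting configuration is at distance $\ge 2$ in $G$.'' Nothing available here gives non-negative value to distance-$\ge 2$ configurations in strongly connected outerplanar graphs with $g\ge 4$: Theorem~\ref{thm:char6} needs $g\ge 6$, and Theorem~\ref{thm:char4} forbids unbalanced small cycles, which outerplanar graphs with girth $4$ or $5$ may contain. The paper closes the loop differently: it shows the non-deviator can steer every deviation back into a position of the $2$-chase or WT profile on $C$ itself (whose value is $0$, by mutual reference between the two cases) without ever occupying a child of the deviator in the interim; that bootstrapping is the real content of the proof and is absent from yours. Relatedly, invoking Lemma~\ref{lmm:gamma} imports the restriction $\delta<\gamma_4\approx 0.618$, which the theorem does not carry --- the paper's argument needs no bound on $\delta$ precisely because the non-deviator never incurs a single round of negative payoff.
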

\begin{proof}
	By Lemma~\ref{lmm:wdface}, let $C$ be a well-directed and minimal cycle
	in $G$.
	We show that if either player deviated from any position in a \( 2
	\)-chase or WT~equilibrium on \( C \), they either fail to secure
	strictly positive payoff, or end up in another position in either the \( 2
	\)-chase equilibrium or the WT~equilibrium.
	This shows no deviation is strictly profitable, since a profitable
	deviation would need to leave this set of positions.

	\paragraph{2-chase equilibrium}
	We begin with a \( 2 \)-chase equilibrium on \( C \), such that $y$ is
	chasing \( x \).

	Suppose $y$ deviates from a \( 2 \)-chase equilibrium. Let $N$ be the vertex from
	which $y$ deviates, let $B$ be the vertex $y$ deviates to.
	Let $P\to N\to N_1\to N_2\to A\to A_1\subseteq E(C)$ be a segment of $C$ on which the players walk in a 2-chase profile
	(so possibly $[A=P\wedge A_1=N]\vee A_1=P$ holds).
	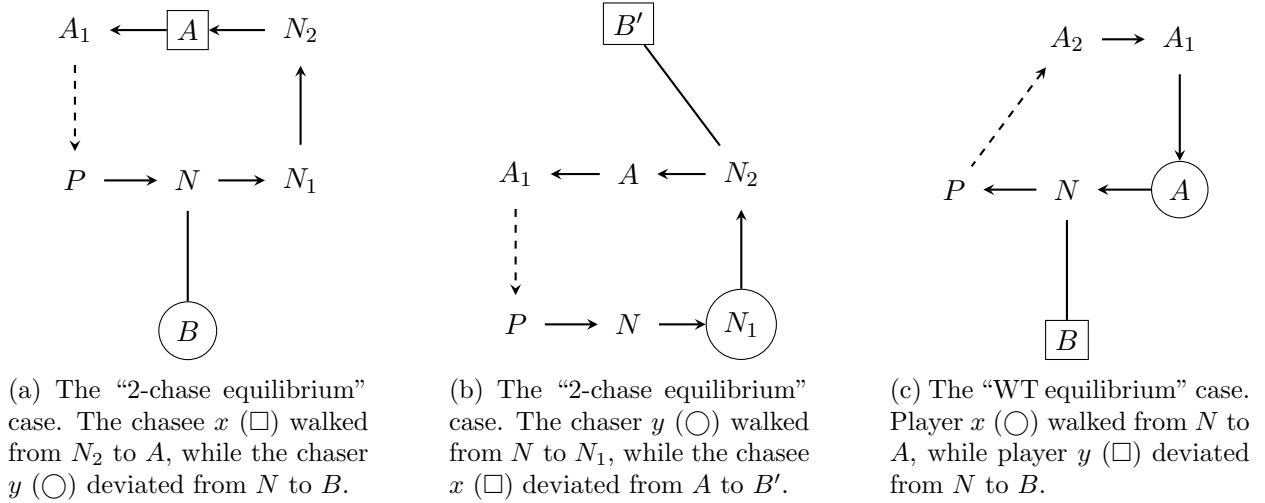
\begin{figure}
		\begin{subfigure}{.29\textwidth}
			\centering
			\begin{tikzpicture}[every node/.style={shape=circle, minimum size=0.1cm}]
				\node (P) at (0,0) {$P$};
				\node (N) at (1.5,0) {$N$};
				\node (N1) at (3,0) {$N_1$};
				\node (N2) at (3,2) {$N_2$};
				\node[shape=rectangle,draw] (A) at (1.5,2) {$A$};
				\node (A1) at (0,2) {$A_1$};
				\node[draw] (B) at (1.5,-2) {$B$};
				\graph[edges={thick,>=stealth}]{
					(P) -> (N) -> (N1) -> (N2) -> (A) -> (A1) ->[dashed] (P);
					(N) -- (B);
				};
			\end{tikzpicture}
			\caption{The ``2-chase equilibrium'' case. The chasee $x$ ($\square$) walked from $N_2$ to $A$, while the chaser $y$ ($\bigcirc$) deviated from $N$ to $B$.}
		\end{subfigure}\hfill
		\begin{subfigure}{.29\textwidth}
			\centering
			\begin{tikzpicture}[every node/.style={shape=circle, minimum size=0.1cm}]
				\node (P) at (0,-2) {$P$};
				\node (N) at (1.5,-2) {$N$};
				\node[draw] (N1) at (3,-2) {$N_1$};
				\node (N2) at (3,0) {$N_2$};
				\node (A) at (1.5,0) {$A$};
				\node (A1) at (0,0) {$A_1$};
				\node[shape=rectangle, draw] (B) at (1.5,2) {$B'$};
				\graph[edges={thick,>=stealth}]{
					(P) -> (N) -> (N1) -> (N2) -> (A) -> (A1) ->[dashed] (P);
					(N2) -- (B);
				};
			\end{tikzpicture}
			\caption{The ``2-chase equilibrium'' case. The chaser $y$ ($\bigcirc$) walked from $N$ to $N_1$, while the chasee $x$ ($\square$) deviated from $A$ to $B'$.}
		\end{subfigure}\hfill
		\begin{subfigure}{.29\textwidth}
			\centering
			\begin{tikzpicture}[every node/.style={shape=circle, minimum size=0.1cm}]
				\node (P) at (0,0) {$P$};
				\node (N) at (1.5,0) {$N$};
				\node[draw] (A) at (3,0) {$A$};
				\node (A1) at (3,2) {$A_1$};
				\node (A2) at (1.5,2) {$A_2$};
				\node[shape=rectangle, draw] (B) at (1.5,-2) {$B$};
				\graph[edges={thick,>=stealth}]{
					(P) <- (N) <- (A) <- (A1) <- (A2) <-[dashed] (P);
					(N) -- (B);
				};
			\end{tikzpicture}
			\caption{The ``WT equilibrium'' case. Player $x$ ($\bigcirc$) walked from $N$ to $A$, while player $y$ ($\square$) deviated from $N$ to $B$.}
		\end{subfigure}
		\caption{Illustrations for Theorem~\ref{thm:outerplanar}.}
	\end{figure}

	So $x$ is at $A$ and $y$ is at $B$. Suppose that $x$ counters the deviation by staying in place. Let $A'$ be an ancestor of $A$ to which $y$ gets to gain positive payoff (there could be multiple ancestors, as $y$ can use a mixed strategy). To preserve outerplanarity, if $y$ reenters the cycle $C$ then it does so at $P$, $N$ or $N_1$. There are two cases for $A'$.

	If $A'=N_2$, then $y$ gets to $N_2$ via $N_1$ or $A$.
	First, suppose it is the $N_1$ case. Then $x$ will be at $A$ and $y$ will be at $N_1$, so it is as if $y$ never deviated. Since the payoff in a 2-chase equilibrium is 0, this makes $y$'s deviation not winning.
	Now suppose it is the $A$ case. $x$ is also at vertex~$A$, so $x$ can now start playing according to a WT strategy on the cycle $C$. From here we can apply the argument in the ``WT equilibrium'' case, because at least one round has passed since $y$'s deviation.

	If $A'\ne N_2$, then $A'$ belongs to some other cycle $C'$ that is adjacent to $C$. Since $G$ is outerplanar, $C$ and $C'$ share the vertex $A$, and possibly exactly one of the vertex-edge pairs $A_1,(A\to A_1)$ or $N_2,(N_2\to A)$. Therefore, $y$ will reach $A'$ by exiting $C$ from either $N_2$, $A$ or $A_1$. Observe that the case of $A$ falls into the other two cases, and that the case of $N_2$ falls into the argument of the previous paragraph. So the case of $A_1$ is left.
	Here, upon observing $y$ exiting from $C$ to $C'$ by moving away from
	$A_1$, player $x$ can move from $A$ to $N_2$, and stay there.
	While $x$ stays at $N_2$, player $y$ has to reenter $C$ via $A_1$
	or $A$. Since $g\ge4$, only one of $d(y,A)=1,d(y,A_1)=1$ can hold at any
	given round (particularly when $y\notin V(C)$). So the counter-strategy
	of $x$ from this point is as follows: if $y$ is at vertex~$A'$, move to
	$N_1$. If $y$ is at vertex~$A$, move to $N_2$ and continue according to the chaser strategy of the 2-chase profile on $C$. If $y$ is in $C'$ and not at $A'$, move to $N_2$. If $y$ is at $N$, go to $A$ and continue according to the original chasee strategy.
	To see that $x$ does not incur negative payoff, observe that as long as $y$ is not back in $C$, the two players are at distance at least 2 from each other. When $y$ is back in $C$, it can approach $x$ from either $A$ or $N_1$. In both cases, $x$ starts playing according to a 2-chase profile on $C$. Since this profile yields 0 payoff for both players, this shows that its deviation is not profitable.

	Suppose $x$ deviates from the \( 2 \)-chase strategy profile on \( C \).
	Suppose $x$ deviates from $N_2$ to $B'$, and $y$ goes from $N$ to $N_1$.
	Then, again by outerplanarity, $x$ can reach an ancestor of $y$ only by
	reaching $N$, $N_1$ or $N_2$.
	In the case of $N$, observe that by outerplanarity, $x$ has to go through $P$ to reach $N$ (or $N$, but we can just focus on the first time $x$ visits $N$ after its deviation). Since $g\ge4$, we have $N_2\to N\notin E(G)$. Furthermore, since $x$ deviates from a chasee strategy of a 2-chase profile on $C$, it will reach $P$ at least one round after deviating. Therefore, $y$ can observe $x$ reaching $P$ after its deviation, and then start playing the chasee strategy of a 2-chase profile on $C$. Since the players get 0 payoff from a 2-chase profile, the deviation of $x$ is not profitable.
	In the case of $N_1$, note that we reach a WT state, hence the deviation is not profitable once again.
	The case of $N_2$ is left. Observe that here, $y$ can play the same counter-strategy that $x$ plays when $y$ deviates and exits $C$ via $A_1$ (in the previous paragraph), making $x$'s deviation here unprofitable.
	Overall, $x$ and $y$ have no profitable deviations, showing that the 2-chase
	profile on $C$ is a \nashe{}.

	\paragraph{WT equilibrium}
	Define the WT equilibrium to walk on $C$ from each vertex to its parent.

	Suppose without loss of generality that player $y$ deviates from a WT
	walk. Let $N$ be the last vertex before deviation.
	Let $P\to N\to A\to A_1\to A_2$ be a segment of $C$ on which the players walk in a WT profile.
	Suppose that $x$
	counters the deviation by staying at $A$ until $y$ is at a neighbor of an
	ancestor of $A$. Suppose that $y$ reaches an ancestor $A'$ of $A$, and
	suppose that $y$ moves to $A'$ from a vertex $A_3$ ($y$ can use a mixed strategy during deviation).

	Suppose first that $A'=A_1$. Then since $g\ge4$, player $y$ can reach $A_1$ from either $P$ or $A$ after deviating from $N$. If $y$ reaches $A$, then it is as if $y$ has never deviated, so $x$ keeps playing accordingly, and $y$'s deviation is unprofitable because the payoff from a WT profile is 0. Otherwise, $y$ goes to $P$. Then, $x$ can start playing the chaser strategy in the 2-chase profile on $C$. By the ``2-chase equilibrium'' case, this shows that $y$'s deviation is not winning.

	Now suppose that $A'\ne A_1$. Here we follow the same reasoning as in the ``2-chase equilibrium'' case when ``$A\ne N_2$'' (using the notation from that case). Then $A'\notin V(C)$.
	Again, to reach $A'$, player $y$ needs to exit $C$ through $N$ to a cycle
	$C'\ne C$ that is adjacent to $C$, or through one of $A$ or $A_1$. The cases of $A$ and $A_1$ were already proved, so the case of $N$ is left.
	There, observe that $x$ can play the same counter-strategy that it plays in the ``2-chase equilibrium'' case, when $y$ deviates and exits $C$ via $A_1$ (in the notation of that case), making $t$'s deviation unprofitable here too.
	Overall, $y$ does not have a winning deviation, and this is a \nashe{}, concluding the proof.
\end{proof}

In light of this result and given the constructions of
Section~\ref{sec:constructions}, we conjecture the following,
\begingroup\setlength{\topsep}{4pt}
\begin{conj}
	All strongly connected planar graphs with \( g \geq 4 \) have either a
	static, a \( 2 \)-chase or a WT~equilibrium.
\end{conj}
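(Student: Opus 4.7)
The plan is to generalize the argument of Theorem~\ref{thm:outerplanar}. By Lemma~\ref{lmm:wdface}, any strongly connected planar graph $G$ contains a well-directed minimal face $C$, and this face is the natural candidate support for both a 2-chase and a WT profile. The task would be to show that at least one of these profiles (or a static equilibrium elsewhere in $G$) constitutes a \nashe{}.

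Following the template of Section~\ref{sec:outer}, I would analyze the possible deviations from such a profile on $C$ and show that, with an appropriate counter-strategy, the non-deviator can either force the state back into the same profile on $C$ or deny the deviator strictly positive expected payoff. The outerplanar proof exploited the fact that every face $C' \ne C$ adjacent to $C$ shares at most an edge with $C$, restricting a deviator's re-entry into $C$ to the exit vertex and its two neighbors on $C$. For a general planar embedding this is false, so the first step is to control the geometry of shortcuts through the plane on either side of $C$. I would attempt a key auxiliary lemma: for any two vertices $u, v \in V(C)$ and any simple undirected path $\pi$ in $G$ from $u$ to $v$ avoiding $E(C)$, the length of $\pi$ is at least the distance from $u$ to $v$ along the directed traversal of $C$. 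Minimality of $C$ combined with $g \ge 4$ should drive this: any shorter $\pi$, together with an arc of $C$, would partially enclose a region in the embedding that forces a face strictly nested inside $C$, contradicting minimality.

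The main obstacle I anticipate is that even when such a length bound holds, a deviator could sometimes exactly match the cycle length while snatching one round of strictly positive payoff during the shortcut and then rejoining the cycle. Ruling this out would require combining the geometric lemma with edge-decisiveness (Lemma~\ref{lmm:gamma}) to bound the discounted value of any such excursion, extending the discounted-payoff arguments of Section~\ref{sec:constructions} to cover all planar re-entry patterns. An induction on the number of faces, in which one peels off an innermost face nested inside $C$ and either finds a well-directed face there to recurse on, or reduces the problem to a graph with strictly fewer faces, seems the most promising organizational device.

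If a class of shortcuts resists this analysis, the fallback is to show that $G$ then violates some item of Proposition~\ref{prop:nostatic}, whence $G$ admits a static equilibrium directly. The hard part of the proof will be stitching together these two regimes --- the cycle-based equilibrium on $C$ and the static-equilibrium fallback --- in a single exhaustive case analysis, especially in the presence of unbalanced small cycles, for which the characterizations of Section~\ref{sec:g4} are unavailable and the constructions of Section~\ref{sec:constructions} already show that cycle-based equilibria may fail in the nonplanar setting.
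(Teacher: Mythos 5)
This statement is stated in the paper as an open conjecture; the paper offers no proof of it, so there is nothing to compare your attempt against. What you have written is, by your own framing, a plan rather than a proof, and it does not close the conjecture. The starting point is reasonable and matches the paper's motivation (Lemma~\ref{lmm:wdface} supplies a well-directed minimal face in any strongly connected planar graph, and Theorem~\ref{thm:outerplanar} is the template one would hope to generalize), but the load-bearing step is the auxiliary lemma you propose, and that lemma is false. Minimality of the face~\( C \) only controls the region \emph{inside} \( C \): it rules out chords and nested structure on the bounded side, but says nothing about the other side of \( C \) in the embedding. A long well-directed minimal face with a short path attached on its outside, joining two far-apart vertices of \( C \), is planar, strongly connected (with suitable orientations), has \( g \geq 4 \), and keeps \( C \) a minimal face, yet violates your claimed length bound. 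The enclosed region your argument appeals to lies outside \( C \), so it does not force a face nested inside \( C \) and there is no contradiction with minimality. This is exactly the gap that outerplanarity closes in Theorem~\ref{thm:outerplanar}: putting every vertex on the unbounded face forbids such shortcuts and confines a deviator's re-entry to the exit vertex and its neighbors on \( C \). Without that hypothesis the re-entry pattern is essentially unconstrained, and neither the discounted-payoff bounds of Lemma~\ref{lmm:gamma} nor the fallback to Proposition~\ref{prop:nostatic} (whose hypotheses -- \( g \geq 4 \) plus no unbalanced small cycles -- need not hold here, as you note) is shown to cover the remaining cases.

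A secondary issue: your induction ``peel off an innermost face nested inside \( C \)'' has no content once \( C \) is minimal, since a minimal face bounds exactly one face and contains nothing nested inside it; the difficulty lives entirely in the faces adjacent to and outside \( C \). The proposal correctly identifies where the difficulty is, but the conjecture remains open after it.
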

\endgroup

\section{Computing equilibria}\label{sec:comp}
For every pair of vertices, the value of minmax play when the players are situated
at these vertices must verify a system of \textsc{Bellman} equations:
\begin{equation}\label{eq:bellman}
	V_x(u, v) = \max_{s_x \in \Delta \left( \ball{1}{u} \right)} \; \min_{s_y
	\in \Delta \left( \ball{1}{v} \right)} \left( \E_{\left( u', v' \right)
	\sim s_x \times s_y} \left[ (1-\delta) r_x(u', v') + \delta V_x(u', v')
	\right] \right).
\end{equation}
This can be computed using value iteration~\citep{shapley1953stochastic}, thus
leading to efficient computation of optimal player strategies.
Once \( V_x \) is computed for all pairs of vertices, all mixed equilibria can be
deduced by computing the set of all minmaxes satisfying each state's equation in
equation~\eqref{eq:bellman}.

We now show that the same can be said of pure equilibria, using
Algorithm~\ref{alg:pure}.
Recall the \textbf{strong product graph} \( \unG \boxtimes \unG \) is defined as
the graph with vertices \( V^2 \) and edges \( E' \) such that \( ((u,v),(u',v'))
\in E' \) if and only if \( u=v \) and \( (u',v') \in \unG \) or \( (u,v) \in
\unG \) and \( u'=v' \) or \( (u,v) \in \unG \) and \( (u',v') \in \unG \).

\begin{algorithm}
\caption{Algorithm computing pure equilibria.}\label{alg:pure}
\begin{algorithmic}
	\Require{The strong product graph \( \unG \boxtimes \unG \).}
	\Ensure{A subgraph~\( \mathcal{F}_\infty \) of \( \unG \boxtimes \unG \)
	indicating possible states in a pure equilibrium.}
	\Ensure{A labeling~\( \ell: V^2 \to V \) for each state of possible
	moves.}
	\State{\( t \gets 0 \)}
	\State{\( \ell(u, v) \gets \ball{1}{u} \) for \( (u,v) \in V^2 \)}
	\State{\( \mathcal{F}_0 \gets \revE \)}
	\While{\( \mathcal{F}_t \neq \mathcal{F}_{t-1} \)}
	\State{\( \mathcal{F}_{t+1} \gets \mathcal{F}_t \)}
	\For{\( (u,v) \in V^2 \setminus \mathcal{F}_{t+1} \)}
	\For{\( u' \in \ell(u,v) \)}
	\If{\( \exists v' \in \ball{1}{v} \mid (u', v') \in \mathcal{F}_t \)}
	\State{\( \ell(u,v) \gets \ell(u,v) \setminus \{ u' \} \)}
	\EndIf{}
	\EndFor{}
	\If{\( \ell(u,v) = \emptyset \)}
	\State{\( \mathcal{F}_{t+1} \gets \mathcal{F}_{t+1} \cup \{ (u,v) \} \)}
	\EndIf{}
	\EndFor{}
	\State{\( t \gets t+1 \)}
	\EndWhile{}
	\State{\Return{\( \mathcal{F}_\infty = \mathcal{F}_{t-1} \cup \{ (v,u)
	\mid (u,v) \in \mathcal{F}_{t-1} \} \)}}
\end{algorithmic}
\end{algorithm}

\begin{prop}\label{prop:algpure}
	Pure \nashes{} all have payoff~\( 0 \) at every round after the initial
	condition, and can be found using Algorithm~\ref{alg:pure}.
\end{prop}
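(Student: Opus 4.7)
The plan is to interpret Algorithm~\ref{alg:pure} as a backward-induction computing a set $\mathcal{F}_\infty$ of ``losing'' states, and to show that pure \nashes{} correspond exactly to deterministic trajectories avoiding $\mathcal{F}_\infty$ from time~$1$ onward. The iterate $\mathcal{F}_t$ should be read as the set of states from which, using a pure memoryless strategy, player~$y$ can force player~$x$ into a child position within at most $t$ rounds against any pure response of~$x$; the base case $\mathcal{F}_0 = \revE$ captures immediate losses, and the inductive step adds $(u,v)$ whenever every move $u' \in \ball{1}{u}$ of $x$ is matched by some $v' \in \ball{1}{v}$ of $y$ with $(u', v') \in \mathcal{F}_t$. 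The labeling $\ell(u,v)$ records $x$'s moves that are not immediately punishable, and the symmetric closure at the end accounts for states where $y$ is the losing player.

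For the forward direction, I would show that any state $(u,v) \notin \mathcal{F}_\infty$ admits pure moves $u' \in \ell(u,v)$ and $v' \in \ell(v,u)$ that jointly yield a next state $(u',v') \notin \mathcal{F}_\infty$. This is immediate from the construction: taking $u' \in \ell(u,v)$ ensures $(u', v'') \notin \mathcal{F}_\infty$ (from $x$'s perspective) for every $v'' \in \ball{1}{v}$, and symmetrically for $y$. Iterating produces a deterministic trajectory in $V^2 \setminus \mathcal{F}_\infty$ with round-payoff $0$ at every round (since $\revE \cup E \subseteq \mathcal{F}_\infty$), and the resulting strategies form a pure \nashe{}: any deviation either remains safe (with $0$ payoff) or enters $\mathcal{F}_\infty$, where the non-deviator can follow up with the algorithm's punishing pure strategy to guarantee a negative payoff for the deviator.

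For the converse direction, I would show that any pure \nashe{} trajectory avoids $\mathcal{F}_\infty$ for all $t \geq 1$, which directly gives $0$ round-payoff at every such round. Suppose toward contradiction that a pure \nashe{} visits $(u_t, v_t) \in \mathcal{F}_k \setminus \mathcal{F}_{k-1}$ for some minimal $t \geq 1$, with WLOG $x$ the losing player; the preceding state $(u_{t-1}, v_{t-1})$ is outside $\mathcal{F}_\infty$ by minimality of~$t$, so $\ell(u_{t-1}, v_{t-1}) \neq \emptyset$. An induction on $k$ then shows that the pure \nashe{} continuation value $V_x(u_t, v_t)$ is strictly negative: $y$'s pure best response at any state in $\mathcal{F}_j$ must push the play into $\mathcal{F}_{j-1}$ (otherwise $y$ would have a profitable one-shot deviation to a punishing move), so iterating forces $x$ to reach $\mathcal{F}_0$ in at most $k$ rounds and incur a $-1$ round-payoff discounted by a bounded power of $\delta < 1$. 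Comparing with $x$'s alternative of moving to some vertex in $\ell(u_{t-1}, v_{t-1})$, which yields a next state in $V^2 \setminus \mathcal{F}_\infty$ with continuation value at least~$0$, player~$x$ has a strictly profitable one-shot deviation at time $t-1$, contradicting the \nashe{} property.

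The main obstacle is rigorously executing the inductive argument in the converse direction, specifically showing that~$y$'s pure \nashe{} strategy must shadow the algorithm's punishing chain through $\mathcal{F}_k, \mathcal{F}_{k-1}, \ldots, \mathcal{F}_0$. The subtlety is that memoryless pure strategies force consistent behavior across revisits of the same state, so one must argue that $y$'s \nashe{} move at each state in $\mathcal{F}_j$ is forced to be a punishing move, using bounded-payoff estimates together with $\delta < 1$ to ensure the corresponding one-shot deviation would be strictly profitable. Together with the forward direction, this characterizes pure \nashes{} exactly as deterministic trajectories in $V^2 \setminus \mathcal{F}_\infty$ (from time~$1$) built from $\ell$, proving both claims of the proposition.
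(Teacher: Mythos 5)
Your proposal matches the paper's proof in all essentials: the same reading of \( \mathcal{F}_t \) as the set of states from which the second player can force a win within \( t \) rounds against a pure opponent, the same punishment argument showing equilibrium play must avoid \( \mathcal{F}_\infty \), and the same observation that trajectories built from \( \ell \) outside \( \mathcal{F}_\infty \) are \( 0 \)-payoff pure equilibria. The only cosmetic difference is that the paper first proves the zero-payoff claim directly (against a pure strategy the opponent either mirrors a move to a parent or stays put), whereas you derive it from \( \mathcal{F}_\infty \)-avoidance; the ``main obstacle'' you flag in the converse direction is precisely the step the paper settles with the one-line remark that a winning move is always better than a non-winning one.
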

\begin{proof}
	We show that if one player uses a deterministic strategy, the other can
	indefinitely avoid negative payoff.
	Indeed, if the opponent is moving to a parent of the player's current
	vertex, they can also move to that parent and ensure \( 0 \)~payoff.
	If they are not moving to a parent of the player's current vertex, they can
	remain at their current vertex and ensure \( 0 \)~payoff as well.
	Therefore each player's best responses ensure weakly positive payoff,
	which means any pure equilibrium is \( 0 \)-payoff.

	Next, we show Algorithm~\ref{alg:pure} is correct.
	For this, we show that \( \mathcal{F}_t \) contains states in which the
	second player can win in \( t \) moves.
	This is clear for \( t = 0 \), and the induction step follows from the
	fact that \( (u,v) \) leads to a win of the second player within \( t+1
	\) moves if and only if for each possible move of the first player, the
	second player has a response that leads to a win within \( t \) moves.
	Since a winning move by the second player (when it exists) is always
	better than a non-winning move, this means that any state in \(
	\mathcal{F}_t \) for finite \( t \) cannot be part of an equilibrium
	play.
	By symmetric reasoning, no state in \( \mathcal{F}_\infty \) can be in
	any equilibrium play.
	Note this also implies the \textsf{While} loop terminates: any winning
	position for the second player is winning in at most \( \card{V^2} < +
	\infty \) moves.

	Finally, the states which are not in \( \mathcal{F}_\infty \) are part of
	equilibrium play since each player has a move such that all moves of the
	other player lead to weakly positive payoff for them.
	Hence, any profile which traverses these states constitutes a pure
	equilibrium.
\end{proof}

\subsection*{Acknowledgments}
The authors would like to thank Laurel~Britt and Stephen~Morris for valuable
insights.

\printbibliography{}

\newpage
\appendix
\section{Cycle directions}\label{app:cycle-directions}
Denote $D_5=\langle \sigma,\tau\mid\sigma^5=\tau^2=\tau\sigma\tau\sigma=e \rangle$ the dihedral group with $2\cdot5=10$ elements. Fix a labeling of the vertices of the
5-cycle and let $\mathcal{O}_5$ be the set of orientations of the 5-cycle. Observe
that $\card{\mathcal{O}_5}=2^5=32$. Let $D_5$ act on $\mathcal{O}_5$ in the natural way.
By the \textsc{Cauchy}-\textsc{Frobenius} theorem, the number of orbits in $\mathcal{O}_5$ with respect to $D_5$'s action, namely the number of distinct orientations, up to the permutations in $D_5$ that do not affect our results,
\begin{align*}
	\card{\mathcal{O}_5/D_5}=\frac{1}{10}\sum_{i\in\s{0,1}}\sum_{j\in\s{0,1,\ldots,4}}\card{\mathcal{O}_5^{\tau^i\sigma^j}}
\end{align*}
Observe that only the two well-directed orientations are fixed with respect to $\sigma^j$. Moreover, $\mathcal{O}_5^{\tau\sigma^j}=\emptyset$ for every $j$. Therefore,
\begin{align*}
	\card{\mathcal{O}_5/D_5}=\frac{32+4\cdot2}{10}=4
\end{align*}

Similarly, we consider $D_4$ acting on $\mathcal{O}_4$. $\card{\mathcal{O}_4^{\tau^i\sigma^j}}$ is 16 for $i=j=0$, 2 for $j\in\s{1,3}$, and 4 for $j=2$. Therefore,
\begin{align*}
	\card{\mathcal{O}_4/D_4}=\frac{16+4\cdot2+2\cdot4}{8}=4
\end{align*}

Therefore, there are two 4-cycles and two 5-cycles which are not unbalanced, which are described in Figure~\ref{fig:bal}.

\begin{figure}
	\begin{subfigure}{.25\textwidth}
		\centering
		\begin{tikzpicture}[every node/.style={shape=circle, fill=black}]
			\node (a) at (-0.588, -0.809) {};
			\node[fill=white,draw] (b) at (-0.952, 0.309) {};
			\node (c) at (0, 1) {};
			\node[fill=white,draw] (d) at (0.952, 0.309) {};
			\node (e) at (0.588, -0.809) {};
			\graph[edges={thick,>=stealth}]{
				(a) -> (b) <- (c) -> (d) <- (e) -> (a)
			};
		\end{tikzpicture}
		\caption{\( C^5_{4,1} \).}
	\end{subfigure}\hfill
	\begin{subfigure}{.25\textwidth}
		\centering
		\begin{tikzpicture}[every node/.style={shape=circle, fill=black}]
			\node (a) at (-0.588, -0.809) {};
			\node (b) at (-0.952, 0.309) {};
			\node (c) at (0, 1) {};
			\node (d) at (0.952, 0.309) {};
			\node (e) at (0.588, -0.809) {};
			\graph[edges={thick,>=stealth}]{
				(a) -> (b) -> (c) -> (d) -> (e) -> (a)
			};
		\end{tikzpicture}
		\caption{\( C^5_{3,2} \).}
	\end{subfigure}\hfill
	\begin{subfigure}{.25\textwidth}
		\centering
		\begin{tikzpicture}[every node/.style={shape=circle, fill=black}]
			\node (a) at (0, 0) {};
			\node (b) at (0, 1.7) {};
			\node (c) at (1.7, 1.7) {};
			\node (d) at (1.7, 0) {};
			\graph[edges={thick,>=stealth}]{
				(a) -> (b) -> (c) -> (d) -> (a)
			};
		\end{tikzpicture}
		\caption{\( C^4_{2,2} \).}
	\end{subfigure}\hfill
	\begin{subfigure}{.25\textwidth}
		\centering
		\begin{tikzpicture}[every node/.style={shape=circle, fill=black}]
			\node (a) at (0, 0) {};
			\node[fill=white,draw] (b) at (0, 1.7) {};
			\node (c) at (1.7, 1.7) {};
			\node[fill=white,draw] (d) at (1.7, 0) {};
			\graph[edges={thick,>=stealth}]{
				(a) -> (b) <- (c) -> (d) <- (a)
			};
		\end{tikzpicture}
		\caption{\( C^4_{3,1} \).}
	\end{subfigure}
	\caption{The \( 4 \) and \( 5 \)-cycles which are not
	unbalanced.}\label{fig:bal}
\end{figure}
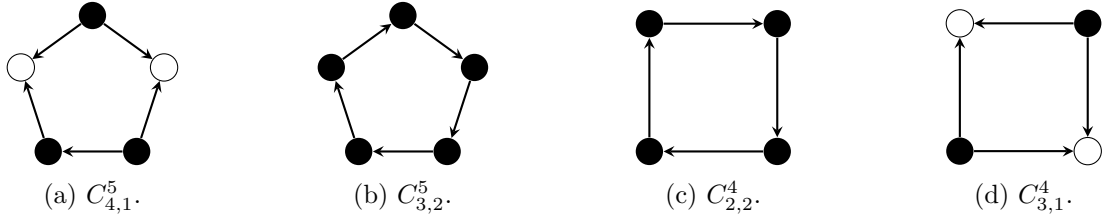

\section{Proofs for graphs with no unbalanced small cycles}\label{app:g4}
\begin{thm}\label{thm:char4app}
	Suppose \( G \) satisfies \( g \geq 4 \) and does not contain any of the
	unbalanced small cycles as subgraphs (in particular, this is verified
	when \( g \geq 6 \)).
	Consider the thinned block-cut tree~\( T(\unG) \) of $G$.
	For a given initial position $(x_0,y_0)$ in $G$, player~\( y \) has strictly
	positive payoff over player~\( x \) if and only if there exists a
	shortest path between \( x \) and \( y \),
	\begin{enumerate}
		\item of even length, of the form $x_0- \cdots -l \leftarrow m
		\leftarrow r- \cdots -y_0$, where $d(x_0,m)=d(y_0,m)$, the midway
		vertex $m$ is an $(x,y)$-cut vertex of $G$ and the subtree~$T_l$
		(containing player $x$) of $T(\unG)$ rooted at \( m \) has no
		nontrivial biconnected components or upwards edges; or
		\item of odd length, of the form $x_0- \cdots -l \leftarrow
		m_1\leftarrow m_2-r- \cdots -y_0$, where $d(x_0,m_1)=d(y_0,m_2)$,
		the vertex $m_1$ is an $(x,y)$-cut vertex of $G$ and the
		subtree~$T_l$ (containing player $x$) of $T(\unG)$ rooted at \(
		m_1 \) has no nontrivial biconnected components or upwards edges.
	\end{enumerate}
\end{thm}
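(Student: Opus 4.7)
My plan is to combine the tree reasoning of Theorem~\ref{thm:tree} with a new technical lemma establishing that nontrivial biconnected components are safe zones, in analogy to how strongly connected components of girth~\( \geq 6 \) are safe in Theorem~\ref{thm:char6}. Well-definedness of the criterion will follow from Lemma~\ref{lmm:xycut}: both the identity of the midway cut vertex (or the pair \( m_1, m_2 \)) and the structure of the relevant subtree \( T_l \) of \( T(\unG) \) are invariant under the choice of shortest path, so it makes sense to speak of ``the'' midpoint.

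The central step is to prove that whenever both players lie in the same nontrivial biconnected component~\( B \) of \( \unG \) and are at distance at least \( 2 \) from each other, each player has a move that keeps the distance at least \( 2 \) next round, hence obtains payoff \( 0 \) in that round. Distance~\( \geq 3 \) is immediate by staying put. For distance exactly \( 2 \), I argue by contradiction: suppose player~\( x \) has no safe move. Since \( B \) is nontrivial, \( x \) has at least one neighbor in \( B \) other than the midpoint \( m \) on the undirected \( x \)--\( y \) path. Failure of the three candidate safe actions (staying, moving to \( m \), and moving to any other \( B \)-neighbor) forces specific oriented edges among vertices in \( \ball{2}{x} \cup \ball{2}{y} \). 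A careful case analysis, using \( g \geq 4 \) to rule out triangles and to guarantee that the forced edges close into cycles of length \( 4 \) or \( 5 \), shows that one of the configurations \( C^4_{2,2}, C^4_{3,1}, C^5_{3,2} \) or \( C^5_{4,1} \) must appear~--~contradicting the hypothesis. The same argument, with \( y \) replaced by the unique \( (x,y) \)-cut vertex through which \( y \)'s component connects to \( B \) (given by Lemma~\ref{lmm:xycut}), handles the case where \( y \) lies outside \( B \).

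Given this safety lemma, the forward direction of the theorem follows by a reachability argument along a shortest \( x_0 \)--\( y_0 \) path in the thinned block-cut tree. If the stated losing condition fails for player~\( x \), then \( T_l \) either contains a nontrivial biconnected component or an upwards edge; in the first case \( x \) walks to that component and invokes the safety lemma, and in the second case \( x \) walks to the tail of the upwards edge and parks at the endpoint above, exactly as in the proof of Theorem~\ref{thm:tree}. Lemma~\ref{lmm:xycut} guarantees that \( y \) must traverse the midpoint cut vertex before interfering, and \( d(x_0,m) = d(y_0,m) \) ensures \( x \) always arrives first. Conversely, when the losing condition does hold, \( T_l \) is a pure outdirected tree of cut vertices with the midpoint as root, so upon reaching the midpoint \( y \) can pursue \( x \) deep into \( T_l \) and apply Lemma~\ref{lmm:3path} inductively along the longest descending path, exactly as in Theorem~\ref{thm:tree}, to extract strictly positive expected payoff; edge-decisiveness, needed to carry this induction across rounds, follows from the safety lemma itself.

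The principal obstacle will be the case analysis inside the biconnected-component safety lemma. One must track which oriented edges are forced by the failure of each candidate move, verify that \( g \geq 4 \) prevents these edges from coinciding or forming triangles, and then check that every remaining orientation of the implied \( 4 \)- or \( 5 \)-cycle matches one of the four forbidden unbalanced configurations of Figure~\ref{fig:unbal}. Extra care is needed when \( x \) has several other neighbors in \( B \) (each of which must independently fail to be safe) and when the ``other player'' is accessed only via a cut vertex, since then the forced edges straddle two biconnected components and one must verify that the resulting small cycle lies entirely inside a single block of \( \unG \).
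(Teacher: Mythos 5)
Your proposal follows essentially the same route as the paper's proof: well-definedness via Lemma~\ref{lmm:xycut}, a safety lemma showing that a player in a nontrivial biconnected component at distance at least \(2\) can preserve that distance (proved by forcing oriented edges that would close into one of the four forbidden unbalanced \(4\)- or \(5\)-cycles), reduction of the out-of-component case to the unique separating cut vertex, and then the tree-style endgame of Theorem~\ref{thm:tree} with Lemma~\ref{lmm:3path} and edge-decisiveness. The approach and key lemma match the paper's argument.
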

\begin{proof}
	We first show the criteria are well-defined, i.e.\ they do not depend on
	the chosen shortest path.
	By Lemma~\ref{lmm:xycut}, the midway vertex is either always a \( (x, y)
	\)-cut vertex or never, and the endpoints of the midway edge each are
	either always a \( (x, y) \)-cut vertex or never.
	This also means that the second condition cannot be verified for \( x_0
	\) over \( y_0 \) and \( y_0 \) over \( x_0 \) simultaneously: if it
	were, then both endpoints of the midway edge would be \( (x, y) \)-cut
	vertices, meaning the unique midpoint edge between them would be in both
	directions.

	We then show that both players have a safe strategy when they are both
	in a nontrivial biconnected component and at distance at least \( 2 \)
	from one another.
	For this, we show player~\( x \) always can play a move that ensures
	player~\( y \) is either distance at least \( 2 \) or at a child of \( x
	\) at the end of each round.
	If player~\( y \) is distance~\( 3 \) or more, player~\( x \) can remain
	where they are.
	If player~\( y \) is distance exactly~\( 2 \), they must have a
	neighbor that is a parent of player~\( x \) (otherwise player~\( x \) can
	remain).
	Player~\( x \) could move to this parent, unless player~\( y \) has a
	directed edge to it as well (because \( g \geq 4 \)).
	We have so far established a directed two-path from player~\( y \) to
	player~\( x \).
	Consider another neighboring vertex to player~\( x \) within its
	biconnected component (it has at least \( 2 \)): if this neighbor is not
	a safe move it must be a child of player~\( y \), or a child of a
	neighboring vertex to player~\( y \).
	In the former case, we have exposed a \( 4 \)-cycle with the directions
	of three of its edges contradicting each other.
	In the latter case, we have exposed a \( 5 \)-cycle, along with
	directions for \( 3 \) of its edges.
	One can easily check that there are no possible directions for the
	remaining two edges without introducing one of the two forbidden
	subgraphs \( C^5_{4,1} \) or \( C^5_{3,2} \).

	Now if player~\( y \) is outside of the biconnected component, the same
	reasoning applies except player~\( x \) can pretend that player~\( y \)
	is at the unique cut vertex in the biconnected component separating it
	from player~\( y \).
	There are three cases: if \( x \) is on the cut vertex~\( u \) separating
	them from player~\( y \), then since \( y \) is distance at least \( 2
	\), player~\( x \) can move to an arbitrary neighbor of the cut vertex
	within the biconnected component and still remain distance~\( 2 \) from
	\( y \).
	The next case is if \( x \) is at a neighbor of \( u \), once more
	player~\( y \) is distance at least~\( 2 \) away and therefore in the
	next move player~\( x \) can move to another neighbor in the biconnected
	component that is distance~\( 2 \) from \( u \) (and thus from \( y \)).
	Finally, if player~\( x \) is distance at least \( 2 \) from \( u \),
	they can remain where they are since we have shown that by the time that
	\( y \) reaches \( u \), \( x \) will have a safe strategy.

	\bigskip{}
	We now show that $G$ is edge-decisive. Consider an edge \( u \to v \), we split two cases depending on whether
	\( u \) is in a nontrivial biconnected component or not.
	If \( u \) is in such a component, then the player at \( u \) can wait
	for the other player to leave \( v \).
	Since \( g \geq 4 \), the other player will be distance \( 2 \) or \( 0
	\), and by the argument above the minmax value will be \( 0 \).
	If \( u \) is not in a nontrivial biconnected component, then it is a cut
	vertex and it has \( 1 \) or \( 2 \) neighbors which are both cut
	vertices.
	The only path the other player has to \( u \) or any of its parents is
	through \( v \), therefore the player can stay at \( u \) until the other
	player reaches \( u \), after which the payoff will be \( 0 \).

	\bigskip{}
	We now show the characterization: consider a shortest undirected path
	between \( x_0 \) and \( y_0 \) in~\( G \), find its midpoint~\( m \) or
	its middle edge~\( m_1 \to m_2 \).

	\paragraph{First case: midpoint vertex}
	Assume the shortest path is of even length, let \( m \) be a midway
	vertex.
	If \( m \) is a \( (x, y) \)-cut vertex~\( c_m \), root~\( T \) at \( c_m
	\), otherwise root it at \( b_M \) where \( M \) is the nontrivial
	biconnected component of the \( (x,y) \)-cut vertices surrounding \( m \)
	(as shown in Lemma~\ref{lmm:xycut}, \( b_M \) is uniquely defined even if
	\( m \) is not).

	If the root is a \( (x, y) \)-cut vertex~\( c_m \), the same proof as
	Theorem~\ref{thm:tree} shows that player~\( y \) has a winning strategy
	over player~\( x \) if and only if \( T_l \) is a tree of cut vertices
	with all of its edges going downwards (since we have shown nontrivial
	biconnected components are safe).

	If the root is a biconnected component~\( b_M \), both players have a
	safe strategy.
	Indeed, when player~\( x \) reaches \( b_M \), they do so at the \( (x,
	y) \)-cut vertex on their side of \( m \) on the path, and they do so
	strictly before the midpoint (because the midpoint is not a \( (x, y)
	\)-cut vertex), hence player~\( y \) is at distance at least~\( 2 \) from
	them.
	As shown above, being in a nontrivial biconnected components while the
	other player is at distance at least~\( 2 \) ensures \( 0 \)~payoff.

	\paragraph{Second case: midpoint edge}
	If the players are at an odd distance from one another, let \( m_1 \) and
	\( m_2 \) be the endpoints of an edge at the midpoint between the two
	players on a shortest path, where \( m_1 \) is on the side of player~\( x
	\) of the path.
	If \( m_1 \) is not a \( (x, y) \)-cut vertex, then player~\( x \) can
	ensure payoff~\( 0 \) by reaching the \( (x,y) \)-cut vertex of the
	biconnected component of \( m_1 \) strictly before \( m_1 \), at which
	point player~\( y \) is distance at least~\( 3 \) from them.
	Therefore, for a player to have negative payoff their midpoint edge end
	must be a \( (x, y) \)-cut vertex: assume \( m_1 \) is a \( (x, y) \)-cut
	vertex, and root the tree at \( m_1 \).

	If player~\( y \) has no shortest path to \( m_1 \) that goes through an
	ancestor of \( m_1 \), then player~\( x \) can safely reach \( m_1 \),
	and at which point player~\( y \) will be distance at least~\( 2 \) from
	them (to avoid negative payoff).
	If player~\( y \) has some longer path leading to an ancestor of \( m_1
	\), this means that \( m_1 \) is part of a nontrivial biconnected
	component and therefore player~\( x \) is safe.
	Otherwise, all edges on a path from \( x \) to \( y \) now are outgoing
	from \( m_1 \), and it is safe for player~\( x \) to remain at \( m_1 \).

	Conversely, if player~\( y \) has a shortest path to \( m_1 \) that goes
	through an ancestor of \( m_1 \), call it \( m_2 \to m_1 \), then
	player~\( x \) cannot safely reach \( m_1 \) (and staying at \( m_2 \) is
	safe for player~\( y \) since \( m_1 \) is an \( (x,y) \)-cut vertex).
	Player~\( x \) is therefore safe if and only if they can find a
	nontrivial biconnected component or an upwards edge in their maximal
	subtree~\( T_l \) (including the edges connecting them to \( m_1 \)).
\end{proof}

\begin{prop}\label{prop:nostaticapp}
	If \( G \) satisfies \( g \geq 4 \) and contains no unbalanced cycles, \(
	G \) has no static equilibria if and only if the following are all true,
	\begin{enumerate}
		\item\label{item:1} \( G \) has exactly one nontrivial
		biconnected component~\( B \);
		\item\label{item:2} The thinned BC-tree is an outdirected tree
		rooted at \( B \) (all edges go downwards);
		\item\label{item:3} \( B \) is of diameter exactly \( 2 \);
		\item\label{item:4} In \( B \), all pairs of distance-\( 2 \)
		vertices have a common neighbor that is a parent of one of the two;
		\item\label{item:5} Every vertex in \( B \) has a parent.
	\end{enumerate}
\end{prop}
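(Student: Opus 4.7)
The plan is to prove the biconditional by case analysis over the five conditions, leveraging edge-decisiveness of \( G \) (established in the proof of Theorem~\ref{thm:char4app}) and the fact shown there that any pair of distinct non-adjacent vertices inside a nontrivial biconnected component has minmax value~\( 0 \). The central tool is the following sufficient condition for a static equilibrium: if \( d(u, v) \geq 2 \), no neighbor of \( u \) is a parent of \( v \), and no neighbor of \( v \) is a parent of \( u \), then \( (u, v) \) supports a static equilibrium. Under these hypotheses every one-shot deviation produces round payoff \( \leq 0 \), and the continuation value from the post-deviation state is also \( \leq 0 \) for the deviator: either it places the deviator at a child of the opponent (negative value by edge-decisiveness), or it preserves distance \( \geq 2 \) inside a nontrivial biconnected component (value \( 0 \) by the safety argument), or it moves the deviator into a losing tree position (opponent wins by Theorem~\ref{thm:char4app}).

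For the forward direction I assume all five conditions hold and exhibit a profitable deviation from every candidate \( (u, v) \). Conditions~\ref{item:1} and~\ref{item:2}, together with Theorem~\ref{thm:char4app}, imply that any position with at least one of \( u, v \) outside the unique nontrivial biconnected component \( B \) has nonzero minmax value, so the winning player strictly prefers their winning strategy over the \( 0 \)-payoff static profile. Hence both \( u, v \in B \), and condition~\ref{item:3} forces \( d(u, v) \in \{0, 2\} \) (the case \( d(u, v) = 1 \) yields nonzero per-round payoff and so is not static). If \( u = v \), condition~\ref{item:5} supplies a parent \( v' \) of \( v \) to which \( y \) deviates for round payoff \( +1 \) and nonnegative continuation; if \( d(u, v) = 2 \), condition~\ref{item:4} supplies a common neighbor \( m \) that is a parent of one of the two, so the other player deviates to \( m \) for the same gain.

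For the backward direction I construct an explicit static equilibrium from each failed condition, verifying the sufficient condition above. If condition~\ref{item:5} fails at a parentless vertex \( v \), the pair \( (v, v) \) works since no move of either player can reach a parent of the opponent. If condition~\ref{item:4} fails at a distance-\( 2 \) pair \( (u, v) \), all common neighbors of \( u \) and \( v \) are children of both by definition of the failure, so the sufficient condition holds. If condition~\ref{item:3} fails, any pair in \( B \) at distance \( \geq 3 \) satisfies the sufficient condition vacuously. If condition~\ref{item:2} fails, an upwards edge in the thinned block-cut tree yields, via the safety argument of Theorem~\ref{thm:char4app}, a pair of mutually safe positions at distance \( \geq 2 \); choosing the positions deep enough in their respective subtrees meets the sufficient condition. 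If condition~\ref{item:1} fails, then either \( G \) is a tree and admits a parentless vertex \( v \) giving the pair \( (v, v) \), or \( G \) has \( \geq 2 \) nontrivial biconnected components and one chooses a pair of vertices deep inside two distinct components.

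The main obstacle will be verifying the continuation portion of the sufficient condition in the backward direction when the deviator moves into the tree part of the graph. In the constructions for conditions~\ref{item:1} and~\ref{item:2}, one must appeal to the full statement of Theorem~\ref{thm:char4app} to bound the post-deviation minmax value non-positively, tracking whether the post-deviation midpoint is a cut vertex and whether the deviator's subtree contains a nontrivial biconnected component or an upwards edge, in line with the losing criterion of that theorem.
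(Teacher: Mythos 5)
Your backward direction matches the paper's: for each failed condition you exhibit essentially the same static pair (parentless vertex, the violating distance-$2$ pair, a distance-$\geq 3$ pair in $B$, the tail of an upwards edge paired with a vertex of $B$, two components or a source of a tree), and the verification via edge-decisiveness and the safety of nontrivial biconnected components is the paper's argument.

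The forward direction, however, has a genuine gap. You claim that conditions~\ref{item:1} and~\ref{item:2} together with Theorem~\ref{thm:char4app} force every position with a player outside $B$ to have nonzero minmax value, and you use this to reduce to $u,v\in B$. This is false. Simplest counterexample: both players at the same vertex $w$ outside $B$ have value $0$ by symmetry. Less trivially, take $B$ a directed $5$-cycle with pendant edges $v_1\to a$ and $v_3\to b$; all five conditions hold, yet the shortest path $a-v_1-v_2-v_3-b$ has midpoint $v_2$, which is not an $(a,b)$-cut vertex, so by Theorem~\ref{thm:char4app} neither player has a winning strategy and the value of $(a,b)$ is $0$. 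These positions are nevertheless not static equilibria, but the reason is not that someone already has a winning strategy; it is that the deviator can exploit the opponent's commitment to staying put: the one-round tempo gained by deviating, combined with items~\ref{item:3}--\ref{item:5}, lets the deviator reach a parent of (or a neighbor of a parent of) the opponent's cut vertex no later than the opponent can retreat to it, after which the opponent is trapped in an outdirected tree and loses by the argument of Lemma~\ref{lmm:3path}. This chase-with-tempo analysis — split by whether one or both players are outside $B$ and whether their distances to $B$ are equal — is the substantive half of the paper's forward proof and is entirely absent from your plan. Relatedly, your standalone "sufficient condition" for a static equilibrium is not sufficient as stated: the trichotomy for the post-deviation continuation value omits the case in which the deviator lands at distance $\geq 2$ but the \emph{opponent} is the one in a losing tree position, so the deviator's continuation value is strictly positive. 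The lemma needs the extra hypothesis that both players' positions are individually safe (which does hold in all of your backward-direction applications, so those survive).
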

\begin{proof}
	Assume the conditions are all true: there is clearly no static
	equilibrium where both vertices are in distinct biconnected components from
	item~\ref{item:1}, and none where both players are in the same
	biconnected component by items~\ref{item:3} and~\ref{item:4}.
	Moreover, there is no maximal vertex in the graph both players can stay at
	together by points~\ref{item:2} and~\ref{item:5}.
	The two cases left are when both vertices are outside of \( B \) or when one
	is inside and one outside.
	If one is outside of \( B \) and one inside, then by item~\ref{item:3}
	and item~\ref{item:5}, the one inside can deviate to a neighbor of a
	parent of the cut vertex separating the player outside from \( B \).
	This means the player outside can no longer safely reach the cut vertex,
	and therefore the player inside can walk down towards the player outside
	and secure positive payoff as in Lemma~\ref{lmm:3path}.
	In both cases the deviation is profitable by item~\ref{item:2} and the
	proof of Lemma~\ref{lmm:3path}.
	If both players are outside \( B \), if they are different distances from
	\( B \) then the closest one can reach \( B \) at least two rounds before
	the other player if the former deviates (they are at least one vertex closer and the other player does not move the round they deviate).
	After reaching \( B \), the reasoning from the previous case holds: the
	deviating player can move to a neighbor of a parent of the cut vertex of
	the other player, and gain positive payoff by Lemma~\ref{lmm:3path}
	again.
	If both players are the same distance from \( B \), then either of them
	can deviate to reach \( B \) one round before the other player.
	If their cut vertices are neighbors, the player whose cut vertex is
	parent to the other player's cut vertex can profitably deviate (they
	reach their cut vertex one round before the other player).
	Otherwise, by item~\ref{item:4}, one player's cut vertex has a neighbor
	that is a parent of the other player's cut vertex.
	By assuming the former player deviates, they can reach the parent of the
	other player's cut vertex at the same time as the other player can reach
	their cut vertex.

	Conversely, suppose each of the above points is not verified, we expose a
	static equilibrium in each case.
	\begin{enumerate}
		\item If \( G \) has two or more nontrivial biconnected
		components, a static equilibrium is to stay at distance at least
		\( 3 \) from each other, in distinct biconnected components (this
		is always possible because nontrivial biconnected components are
		always diameter at least \( 2 \)).
		If \( G \) has no nontrivial biconnected components, it is a tree
		and has a topologically maximal vertex (for the edge direction
		ordering) at which both players can remain in a static
		equilibrium.
		\item If there is an edge in the thinned BC-tree that is not
		oriented away from \( B \), a static equilibrium is for one
		player to be anywhere in \( B \) (at distance at least \( 2 \))
		and the other at the origin of this upwards edge.
		\item If \( B \) is of diameter~\( 3 \) or more, a static
		equilibrium is for both players to be at two vertices of distance at
		least~\( 3 \) from one another in \( B \).
		This is true because a player in a nontrivial biconnected
		component always has a safe strategy when the other player is
		distance at least \( 2 \) by Theorem~\ref{thm:char4}.
		\item If point~\ref{item:4} is false, then a static equilibrium
		exists at the two violating vertices in \( B \): a deviation
		cannot use any path of length \( 2 \) between them (by
		edge-decisiveness), therefore after deviating the players are
		distance at least \( 2 \) from one another.
		\item If a vertex in \( B \) does not have a parent, it is
		topologically maximal in \( G \) and a static equilibrium is for
		both players to stay at this vertex.\qedhere
	\end{enumerate}
\end{proof}

\begin{cor}\label{cor:wt2capp}
	If \( g \geq 5 \) and small unbalanced cycles are forbidden, the only
	graphs with no static equilibria are composed of a directed \( 5 \)-cycle
	with outgoing edges from its nodes forming a directed outgoing tree
	rooted at the cycle.
	In particular, all graphs with \( g \geq 5 \) and no small unbalanced
	cycles either have a static equilibrium or both a \( 2 \)-chase and a
	WT~equilibrium.
\end{cor}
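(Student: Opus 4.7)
The plan is to leverage Proposition~\ref{prop:nostatic} and focus on the structural constraints on the unique nontrivial biconnected component $B$ imposed by the stronger girth hypothesis $g\ge 5$. By that proposition, if $G$ has no static equilibrium then $G$ has exactly one nontrivial biconnected component $B$, the thinned block-cut tree is outdirected rooted at $B$, $B$ has diameter exactly~$2$, every vertex of $B$ has a parent, and every pair of vertices at distance~$2$ in $B$ has a common neighbor that is a parent of one of them. The remaining task is to show that $B$ must be a directed $5$-cycle.

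First I would note that because $g(B)\ge 5$, any two vertices at distance~$2$ in $B$ have a \emph{unique} common neighbor, since two distinct common neighbors would create a $4$-cycle. Fix any vertex $v\in V(B)$. Any two distinct neighbors $a,b$ of $v$ are non-adjacent (else a triangle) and share $v$ as their unique common neighbor; applying the condition on distance-$2$ pairs, $v$ must be the parent of at least one of $a,b$. This forces $v$ to have at most one in-neighbor, for if two neighbors of $v$ were both parents of $v$, then $v$ would be a parent of neither, contradicting the condition applied to that pair. Combined with the condition that every vertex has a parent, $v$ has in-degree exactly~$1$. Summing over vertices gives $|E(B)|=|V(B)|$, so the average degree of $B$ is~$2$; since $B$ is biconnected with more than two vertices its minimum degree is at least~$2$, so $B$ is $2$-regular, hence a cycle. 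Girth $\ge 5$ forces length $\ge 5$, and since $C_n$ has diameter $\lfloor n/2\rfloor$, the diameter-$2$ condition forces $n\le 5$, giving $B\cong C_5$. The in-degree-$1$/out-degree-$1$ orientation at each vertex then makes $B$ a directed $5$-cycle.

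The outdirected thinned block-cut tree property then yields that all vertices outside $B$ form directed outgoing trees rooted at vertices of $B$, establishing the first claim. For the ``in particular'' part, any graph with $g\ge 5$ and no small unbalanced cycles either admits a static equilibrium, or else has the above structure and therefore contains a directed cycle; invoking the preceding proposition in the excerpt (which, under $g\ge 4$ with no small unbalanced cycles, produces both a WT and a $2$-chase equilibrium from any directed cycle) concludes the proof. The main obstacle is the degree-counting argument in the second paragraph: bounding in-degree by~$1$ from the condition on distance-$2$ pairs crucially relies on the uniqueness of common neighbors granted by girth, and the global conclusion that $B$ is $2$-regular requires combining this bound with the biconnectivity minimum-degree condition.
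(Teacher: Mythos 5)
Your proposal is correct, and it reaches the conclusion by a genuinely different route from the paper. The paper's proof starts from a distance-$2$ pair in $B$, uses biconnectivity to produce a second internally disjoint path and hence an undirected $5$-cycle, orients that cycle edge by edge via the common-neighbor condition (item~\ref{item:4} of Proposition~\ref{prop:nostaticapp}), and then rules out any additional vertex of $B$ by a contradiction argument that chases a $2$-path from the putative extra vertex to a far vertex of the cycle. You instead prove a local degree constraint -- every vertex of $B$ has in-degree exactly $1$ inside $B$, because girth~$\geq 5$ makes the common neighbor of any two neighbors of $v$ unique, so item~\ref{item:4} forbids two in-neighbors while item~\ref{item:5} supplies one -- and then conclude globally by counting: $\card{E(B)} = \card{V(B)}$ plus minimum degree~$2$ from biconnectivity forces $2$-regularity, hence a single cycle, whose length is pinned to $5$ by girth and diameter and whose orientation is forced by the in/out-degree split. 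Your argument is cleaner and avoids the paper's somewhat delicate case analysis for excluding extra vertices of $B$; the paper's construction, on the other hand, is more self-contained in that it never invokes the handshake-style count. The only detail you elide is that the parent guaranteed by item~\ref{item:5} must lie \emph{inside} $B$ for the edge count to close: this follows from items~\ref{item:1} and~\ref{item:2}, since any neighbor of $v \in B$ outside $B$ spans a trivial biconnected component whose edge must point away from $B$ in the outdirected thinned block-cut tree, so it cannot be a parent of $v$. With that one sentence added, the proof is complete.
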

\begin{proof}
	We know the graph has exactly one nontrivial biconnected component \( B
	\) by Proposition~\ref{prop:nostatic} above.
	Start with two vertices in \( B \) that are distance~\( 2 \) from one
	another.
	By biconnectivity, there exists another shortest path between them, of
	length at most \( 3 \) (by upper bound on the diameter of \( B \)) and at
	least \( 3 \) (by lower bound on the girth of \( G \)).
	We have therefore found an undirected \( 5 \)-cycle: fix the direction of
	one edge \( a \to b \) without loss of generality.
	Then, letting \( c \) be the other neighbor of \( b \) in the undirected
	cycle, item~\ref{item:4} above implies that \( b \to c \) must be
	directed this way.
	By repeating this reasoning we find the \( 5 \)-cycle to be directed.

	Assume by contradiction there is some other vertex~\( u \) in \( B \) which is outside of the
	5-cycle, without loss of generality a neighbor of a vertex of
	the cycle.
	By item~\ref{item:4} again, \( u \) must be a child of its neighbor in
	the cycle.
	By item~\ref{item:3} and \( g \geq 5 \) there must be a \( 2 \)-path from
	\( u \) to one of the two furthest vertices in the \( 5 \)-cycle from \(
	u \).
	Finally, reasoning with item~\ref{item:4} we find the edges in this path
	continue in the same direction, exposing a parent of a vertex in the \( 5
	\)-cycle: this is a contradiction as we have shown above neighbors of the
	cycle must be children of the cycle's vertex.

	Item~\ref{item:2} in Proposition~\ref{prop:nostatic} shows that the
	thinned BC-tree indeed is a directed outgoing tree rooted at \( B \).
	Finally, there is a walking together and a \( 2 \)-chase equilibrium in
	the directed \( 5 \)-cycle.
\end{proof}

\section{Proofs for constructions with no cycle-based or static
equilibria}\label{app:constructions}
\begin{lmm}\label{lmm:gammaapp}
	Let \( \gamma_a \) be the unique positive root of \( \gamma^{a-2}+\gamma-1=0
	\) for \( a \ge 4 \).
	If \( g(G) \geq a \), then \( G \) is edge-decisive for all \( \delta <
	\gamma_a \).
\end{lmm}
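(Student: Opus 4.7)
The plan is to establish a positive lower bound on the Nash value $V^*(u,v)$ for any directed edge $u\to v$ in $G$ whenever $\delta<\gamma_a$ and $g(G)\ge a$. To do this I would exhibit a single strategy for $x$ and show its payoff against every response of $y$ is at least $(1-\delta)-\delta^{a-2}$, which is strictly positive exactly when $\delta<\gamma_a$.

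\textbf{Strategy for $x$.} Define $\hat\varphi_x$ so that, at any state $(u,z)$ with $z\neq u$, player $x$ remains at $u$; at the state $(u,u)$ and at every other state, $x$ plays a Nash equilibrium strategy $\varphi_x^*$ for the full discounted game. By the minmax theorem for discounted stochastic games such a $\varphi_x^*$ exists, and because the state $(u,u)$ is symmetric under swapping the players we have $V^*(u,u)=0$, so $\varphi_x^*$ guarantees $x$ a continuation value of at least $0$ from $(u,u)$ against any response by $y$.

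\textbf{Key girth-based claim.} Let $\tau$ be the (random) first time at which $y$ reaches $u$, possibly $+\infty$. I claim $r_t\ge 0$ for every $t<\min(\tau,a-2)$. Indeed, $r_t=-1$ forces $y_t$ to be a parent of $u$; since $y$ has not yet visited $u$, the walk $v=y_0,y_1,\ldots,y_t$ avoids $u$, and appending the edges $u-v$ and $u-y_t$ produces a closed walk $u,v,y_1,\ldots,y_t,u$ of length $t+2$ in $\tilde G$ that touches $u$ only at its endpoints. Collapsing any internally revisited vertex yields a simple cycle of length at most $t+2$, so $t+2\ge g(\tilde G)\ge a$, hence $t\ge a-2$.

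\textbf{Summing the payoff.} Since $r_0=1$, $r_t\ge 0$ for $1\le t<\min(\tau,a-2)$, $r_t\ge -1$ for $a-2\le t<\tau$, and the continuation from time $\tau$ onwards is nonnegative, the worst case is $\tau=\infty$ with $r_t=-1$ for all $t\ge a-2$. This yields
\[
V^*(u,v)\;\ge\;(1-\delta)\left[1-\sum_{t=a-2}^{\infty}\delta^t\right]=(1-\delta)-\delta^{a-2},
\]
which is strictly positive exactly when $\delta^{a-2}+\delta<1$, i.e.\ when $\delta<\gamma_a$.

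\textbf{Main obstacle.} The delicate point is to treat $\hat\varphi_x$ as a bona-fide memoryless strategy even though the intuition behind it is memoryful (``stay until $y$ hits $u$, then switch to minmax''). After the first visit to $(u,u)$ the minmax-driven play can send the state back to some $(u,z)$ with $z\neq u$, where our ``stay at $u$'' prescription may disagree with $\varphi_x^*(u,z)$. I would resolve this by applying the minmax theorem to the subgame rooted at $(u,u)$ to obtain a memoryless guarantee that the continuation value from $(u,u)$ is at least $0$, and then by a value-iteration / fixed-point argument verify that the overall guarantee $(1-\delta)-\delta^{a-2}$ is preserved through any recurrences of the two regimes.
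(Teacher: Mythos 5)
Your proposal is correct and follows essentially the same route as the paper: the same ``stay at $u$ until $y$ arrives, then play the symmetric minmax from $(u,u)$'' strategy, the same girth bound showing $y$ needs at least $a-2$ rounds to reach a parent of $u$ without passing through $u$, and the same payoff bound $1-\sum_{t\ge a-2}\delta^t>0$ for $\delta<\gamma_a$. You spell out two points the paper leaves implicit (the simple-cycle extraction behind the $t\ge a-2$ claim, and the fact that the guarantee of a possibly history-dependent strategy still lower-bounds the stationary minmax value), but these are refinements of the same argument rather than a different one.
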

\begin{proof}
	Suppose player $x$ is at $u$, player $y$ is at $v$, and $(u\to v)\in E$.
	We show that $x$ can avoid incurring negative payoff for at least $a-2$
	rounds. Let the strategy of $x$ from that point be: if $y$ is not at the
	same vertex as $x$, then stay in place. Otherwise, if $x$ and $y$ are at
	the same vertex, play the minmax strategy (which yields payoff~\( 0 \) by
	symmetry).

	The only possibility for \( y \) to have positive expected continuation
	payoff is by reaching a parent of \( u \) without going through \( u \).
	Since $x$ starts playing the strategy above when $d(x,y)=1$, player $y$ will need at least $g-2$ rounds to gain positive payoff, in particular at least $a-2$ rounds.
	Therefore, the payoff of player $x$ will be at least (up to the
	$(1-\delta)$ factor),
	\begin{align*}
		1-\sum_{k=a-2}^\infty\delta^k=1-\frac{\delta^{a-2}}{1-\delta}
	\end{align*}
	which is strictly positive for all $\delta<\gamma_a$, by definition.
\end{proof}

\begin{thm}\label{thm:nostapp}
	The graph in Figure~\ref{fig:nost} is strongly connected, has $g\ge5$,
	only contains \( C^5_{3,2} \) of the small unbalanced cycles and has no
	static equilibria for every \( \delta \in \left] 0; 1 \right[ \).
	Moreover, it violates every condition of Proposition~\ref{prop:nostatic}
	except for item~\ref{item:5} (which clearly is a necessary condition).
\end{thm}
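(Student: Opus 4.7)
The plan is to verify each claim essentially by direct inspection of the graph in Figure~\ref{fig:nost}, combined with a case analysis of potential static equilibrium positions.

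First I would verify the structural claims. Enumerating all $3$- and $4$-subsets of vertices and checking that no triangle or undirected $4$-cycle appears establishes $g\ge5$; in particular the absence of $C^4_{2,2}$ and $C^4_{3,1}$ is then automatic. To confirm that $C^5_{3,2}$ is the only unbalanced small cycle present, I would enumerate the $5$-cycles of the undirected graph and, for each, compare its induced orientation against the four orientation classes listed in Appendix~\ref{app:cycle-directions}, checking that $C^5_{4,1}$ never appears. Connectivity (strong or merely weak, depending on which version of the statement is targeted) is verified by exhibiting explicit directed walks or by a depth-first search on the directed graph.

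Next I would verify the violation of items~\ref{item:1}--\ref{item:4} of Proposition~\ref{prop:nostatic}. Once the thinned block-cut tree is constructed, each of these is a local combinatorial check: item~\ref{item:1} by counting nontrivial biconnected components (the white vertices appear to create additional trivial biconnected structure); item~\ref{item:2} by finding an edge in the tree that does not point outwards from the main biconnected component $B$; item~\ref{item:3} by computing the diameter of $B$; and item~\ref{item:4} by exhibiting a distance-$2$ pair of vertices in $B$ with no common parent-neighbor. Item~\ref{item:5} is to be verified to hold (every vertex has a parent), so it is \emph{not} violated, as the theorem states.

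The main content, and the principal obstacle, is establishing that no static equilibrium exists for \emph{every} $\delta\in\left]0;1\right[$. Because the graph contains $C^5_{3,2}$, Proposition~\ref{prop:nostatic} cannot be invoked as a black box, and I must redo the deviation analysis by cases on a candidate static position $(x_\infty,y_\infty)$. The case split is: (i)~$x_\infty=y_\infty$, ruled out since item~\ref{item:5} holds and so the vertex has a parent to which a player profitably deviates in one step; (ii)~both players in $B$ at distance $\ge 2$, where I would use the specific failure of items~\ref{item:3} or~\ref{item:4} at this pair to locate a neighbor of one player that is a parent of the other, then apply a Lemma~\ref{lmm:3path}-style estimate for positive payoff; (iii)~one player in $B$ and one in a pendant subtree, where the in-$B$ player deviates to a vertex neighboring a parent of the separating cut vertex, and the pendant player, whose subtree is forced to be out-directed by the violation of item~\ref{item:2} being on the opposite side, cannot race back in time; (iv)~both players in pendant subtrees, where the cut-vertex race argument from the proof of Proposition~\ref{prop:nostatic} applies verbatim, as it depends only on the tree structure outside $B$ and not on the absence of unbalanced cycles inside $B$.

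The hard step will be case~(ii): $C^5_{3,2}$ contains distance-$2$ pairs without a common parent-neighbor, which is exactly the configuration that would \emph{support} a static equilibrium under the hypotheses of Proposition~\ref{prop:nostatic}. I would resolve this by showing that every such pair in $B$ has, using edges of $B$ beyond those of the $C^5_{3,2}$ alone, a deviation by one of the players that reaches a parent of the other in one step and then rests there. This collapses the analysis back to Lemma~\ref{lmm:3path}, whose estimate is strictly positive for all $\delta\in\left]0;1\right[$; in particular the argument does not need edge-decisiveness, so the uniformity in $\delta$ comes for free rather than requiring the $\delta<\gamma_5$ restriction of Lemma~\ref{lmm:gamma}.
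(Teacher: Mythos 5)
Your high-level plan (structural checks by enumeration, then a case analysis of candidate static positions) is the right shape, but the crucial step fails. The claim that the uniformity in \( \delta \) ``comes for free'' because the deviation analysis ``does not need edge-decisiveness'' is exactly backwards for this construction. When a player deviates to a parent \( u \) of the opponent's resting vertex \( v \) and the opponent's off-path behavior is adversarial, the deviation is profitable against \emph{all} counter-strategies precisely when the minmax value of \( u \) over \( v \) is strictly positive --- which is the definition of the edge \( u \to v \) being decisive. A \emph{reach-a-parent-and-rest} strategy with a Lemma~\ref{lmm:3path}-style estimate only works when the opponent is trapped in an out-directed tree; inside the nontrivial biconnected component the opponent sits on a well-directed \( 5 \)-cycle and can circulate back to a parent of the deviator, so resting gives no bound on the continuation. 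This is why the paper's proof devotes most of its length to (a)~identifying the safe directed cycle \( c\to d\to e\to f\to g\to c \) as an anchor, (b)~computing via explicit one-round game-matrix/mixing arguments which distance-\( \geq 2 \) pairs have strictly positive value (only \( (c,a) \), \( (d,a) \), \( (g,b) \) among dark vertices), and (c)~proving edge-decisiveness of \emph{every} edge for all \( \delta \in \left] 0; 1 \right[ \) by bespoke arguments (Lemma~\ref{lmm:gamma} would only give \( \delta < \gamma_5 \)). Your proposal supplies no substitute for (b) and (c), and without them the distance-\( 2 \), distance-\( 3 \), and same-vertex cases are not closed for \( \delta \) near \( 1 \).

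A secondary mismatch: your cases~(iii)--(iv) assume the part of the graph outside \( B \) consists of pendant out-directed subtrees to which the cut-vertex race argument of Proposition~\ref{prop:nostatic} applies verbatim, and you guess the white vertices create only ``trivial biconnected structure.'' In fact the white vertices form a second \emph{nontrivial} biconnected component (an undirected \( 5 \)-cycle oriented as \( C^5_{3,2} \)) --- that is how item~\ref{item:1} is violated --- so the configurations with one or both players among the white vertices are not covered by a tree argument. The paper handles them by showing the dark player can block re-entry (reaching \( f \) or a neighbor of \( e \)), after which the white cycle effectively collapses to a directed path and only then does Lemma~\ref{lmm:3path} apply. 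You would need to add this blocking step; as written, your case analysis does not match the actual graph.
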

\begin{proof}
	We begin by showing some useful facts: the only pairs of dark vertices of
	distance at least \( 2 \) in which the first has a positive value other
	the second are \( (c,a), (d,a) \) and \( (g,b) \).
	Indeed, for every other pair of dark vertices, we notice that each player
	can reach the well-directed cycle \( c\to d\to e\to f\to g\to c \)
	safely, after which they can always move counterclockwise in this cycle
	to avoid the other player when they are distance \( 2 \) from them.

	For \( (c,a) \), suppose player \( x \) is at \( a \) and player \( y \)
	at \( c \).
	Player~\( y \) can move to a parent of every vertex
	in \( \ball{1}{a} \): by moving to the parent of whichever one \( x \) is
	likeliest to move to, they ensure strictly positive payoff as long as \(
	d \to e \), \( c \to b \) and \( b \to a \) are decisive edges.
	The former two clearly are as the top of the edge is in the well-directed
	cycle, and \( b \to a \) is because the player at \( b \) can safely move
	to \( c \) next round and be in the well-directed cycle.
	If \( x \) selects one of the two less likely moves, then the
	continuation payoff is at least \( 0 \) for \( y \) as they will be
	distance~\( 2 \) or \( 0 \) from one another in a pair of vertices that is
	not dominating for player~\( x \).

	The positive value of \( (d,a) \) follows from this: the
	player at \( d \) can move to \( c \) if the player at \( a \) is
	likelier to move to \( \{b,a\} \), and otherwise can remain at \( d \) if
	the player at \( a \) is likelier to move to \( e \).
	This works as long as the edges \( d \to e \) and \( c \to b \) are
	decisive, which is the case for the same reasons as above.
	In the case the randomness does not lead to the right outcome for the
	player at \( d \), the continuation payoff is once more (weakly) positive
	for the same reasons.

	The positive value of \( (g,b) \) follows from a similar reason: the
	player at \( g \) can move to \( c \) if the player at \( b \) is
	likelier to move to \( \{b,a\} \), and otherwise can remain at \( g \) if
	the player at \( b \) is likelier to move to \( c \).
	This works as long as the edges \( g \to c \) and \( c \to b \) are
	decisive, which is the case for the same reasons as above.
	In the case the randomness does not lead to the right outcome for the
	player at \( g \), the continuation payoff is once more (weakly) positive
	for the same reasons.

	We now show edge decisiveness of the graph for all \( \delta \in \left]
	0; 1 \right[ \) : edges that have their top in the safe cycle have
	already been shown to be decisive, as well as \( b \to a \).
	Edges that involve a white vertex are also decisive as the player at the
	top can safely reach \( f \), which is part of the safe directed cycle.
	The last remaining edge is \( a \to e \): suppose \( y \) is at \( b \)
	and \( x \) at \( a \).
	Note \( x \) has a safe strategy since they are in the safe directed
	cycle (they can move to \( d \) then stay distance at least \( 2 \) from
	\( y \) in the directed cycle).
	If \( x \) puts higher probability on \( f \) than \( d \) then \( y \)
	can move to \( e \) and obtain positive expected payoff.
	Therefore \( x \) puts higher probability on \( d \) than \( f \) (and
	potentially some probability on \( e \)).
	\( y \) can therefore move to \( b \): if \( x \) moves to \( d \) or \(
	e \) then \( y \) can reach \( c \) safely next round.
	In the event that \( x \) has moved to \( f \), \( y \) can safely move
	to \( a \) then either the initial situation repeats or they can move to
	\( e \) the next round.

	\bigskip{}
	We can now show that every pair of vertices does not support a static
	equilibrium.
	There are no static equilibria where both players are at the same vertex,
	since every vertex has a parent and we have shown edge decisiveness.

	For pairs of vertices distance~\( 2 \) from one another, they must not
	share a neighbor that is a parent to one of the two by edge decisiveness
	of this graph.
	The only such pair of dark vertices is \( (d,a) \): the player at \( d \)
	can however profitably deviate to \( c \), so this is not a static
	equilibrium either.
	There is such a pair of white vertices, but one player can deviate to \( f
	\) in one round after which they have positive expected payoff for the
	same reason as in the case of \( (c,a) \).

	For distance-\( 3 \) pairs of dark vertices, there are \( (g,a) \) and \(
	(f,b) \).
	The former is not a static equilibrium because the player at \( g \) can
	deviate to \( c \) and the latter neither because the player at \( f \)
	can deviate to \( g \).

	Finally, if one player is at a white vertex and another at a dark vertex, the
	player at a dark vertex can always reach \( f \) or a neighbor of \( e \)
	in one step.
	After this, the player in white can no longer reach \( f \) without
	having negative payoff (if they have not already incurred it), and the
	white cycle \( C^5_{3,2} \) reduces to a directed path, which allows us
	to conclude by Lemma~\ref{lmm:3path}.
\end{proof}

\begin{thm}\label{thm:nowtapp}
	The graph in Figure~\ref{fig:nowt} is strongly connected, satisfies
	\( g\ge5 \), and has no WT~equilibria for every \( \delta<\gamma_5\approx0.68233 \).
\end{thm}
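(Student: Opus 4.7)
The plan is to reduce the problem to analyzing WT profiles on directed cycles and then to rule out each such cycle via a local deviation analysis. First I would apply Lemma~\ref{lmm:gammaapp} with $a = 5$, which yields that $G$ is edge-decisive for every $\delta < \gamma_5$. Combined with Remark~\ref{rmk:wtcyc}, this implies that any WT~equilibrium walks toward parents along some directed cycle of $G$. Strong connectivity and the girth bound $g \geq 5$ are verified by direct inspection of Figure~\ref{fig:nowt}.

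The core of the argument is a case analysis on the directed cycles of the graph. I would verify, by enumerating the directed cycles, that every directed cycle in $G$ passes through vertex~$1$ or vertex~$2$, and then show that both distinguished vertices admit the ``many parents'' property described in the proof sketch: for each possible parent $p$ of $1$ (respectively $2$) lying on such a cycle, there is an alternative parent $p'$ such that, when the WT profile prescribes the joint move to $p$, a deviator can move to $p'$ and obtain a profitable one-round gain within a few rounds, either by positioning themselves as an ancestor of the opponent's next scheduled WT vertex, or by forcing the opponent (who continues the original WT) into a position where the deviator is a parent.

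To lift such a bounded-horizon advantage to a strictly positive discounted deviation payoff, I would re-use the strategy from the proof of Lemma~\ref{lmm:gammaapp}: once the deviator has reached a position where they are a parent of the non-deviator, they can play the ``stay in place unless the opponent arrives'' strategy, guaranteeing no negative payoff for at least $g - 2 = 3$ subsequent rounds. The resulting bound $1 - \delta^{g-2}/(1-\delta) > 0$ for $\delta < \gamma_5$ makes the deviation profitable, exactly as in the proof of the analogous theorem for 2-chase equilibria.

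The main obstacle I expect is the combinatorial case analysis itself. Enumerating all directed cycles in Figure~\ref{fig:nowt} and checking, for each cycle and each candidate designated parent $p$, that a suitable alternative $p'$ exists with the ancestor property, is the delicate step; vertex~$2$ is the harder case since cycles avoiding vertex~$1$ are less obviously covered by the ``central vertex'' intuition. Once this structural verification is complete, the payoff estimate is uniform and follows from the edge-decisiveness bound of Lemma~\ref{lmm:gammaapp}.
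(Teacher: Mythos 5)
Your high-level skeleton matches the paper's: edge-decisiveness from Lemma~\ref{lmm:gammaapp}, Remark~\ref{rmk:wtcyc} to restrict WT walks to directed cycles, the observation that every directed cycle passes through vertex~\(1\) or~\(2\), and a local deviation at those vertices. But your lifting step has a genuine gap. In a WT profile both players occupy the \emph{same} vertex, and \(g\geq 5\) forbids the deviator from reaching a parent of the non-deviator's next position in a single round (that would close a triangle or a \(4\)-cycle). So, unlike the \(2\)-chase construction, the deviator needs at least two rounds to threaten the opponent, and in the intervening round the state is off-path, where \(\varphi_x\) is an \emph{arbitrary} distribution --- your parenthetical ``the opponent (who continues the original WT)'' is exactly the assumption you are not entitled to. To rule out every WT equilibrium you must exhibit a profitable deviation against \emph{every} off-path completion of \(\varphi_x\); this is why the paper's proof splits on quantities such as \(\varphi_x(12,13)(1)\) and has the deviator either wait or commit depending on a probability threshold. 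The gain is therefore realized only with some probability (as low as \(1/5\) at vertex~\(1\), which has many parents), and the deviator is gambling, not guaranteeing a one-round win.

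Consequently the payoff estimate cannot be the uniform bound \(1-\delta^{g-2}/(1-\delta)>0\) you propose. That inequality (via Lemma~\ref{lmm:gamma}) is only the \emph{final} step, applied after a positive payoff has already been secured. The missing work is bounding the deviator's losses in the branches where the gamble \emph{fails}: the paper shows that in each such branch the deviator can either avoid negative payoff for at least \(7\) rounds (by an explicit, graph-specific sequence of retreats through the safe cycles) or switch to a strategy that is safe for the rest of the game, leading to inequalities of the form \(\tfrac{1}{5}-\tfrac{4}{5}\cdot\tfrac{\delta^{7}}{1-\delta}>0\) and to the threshold \(\tfrac{\delta^{7}}{1+\delta^{7}-\delta}\) that governs the wait-or-commit decision. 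Without this failure-branch analysis the expected value of the deviation is not controlled, and the argument does not go through; this analysis, not the cycle enumeration, is the delicate core of the proof.
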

\begin{proof}
	We show that the vertices labeled with 1 and 2 cannot be in a cycle induced by a WT-equilibrium walk, by showing that $y$ has a winning deviation.
	This is sufficient by Remark~\ref{rmk:wtcyc} and because all
	well-directed cycles in the graph contain either vertex~\( 1 \) or
	vertex~\( 2 \).

	\paragraph{Vertex~\( 1 \)}
	Suppose $x$ and $y$ are at vertex~\( 1 \). If $\varphi_x(1,1)(12)=1$
	(i.e., the distribution $x$ plays according to when $x$ and $y$ are at
	$1$ puts probability \( 1 \) on moving to \( 12 \)), $y$
	has the following winning deviation: first go to $13$.
	If $\varphi_x(12,13)(1)>\frac{\delta^7}{1+\delta^7-\delta}$, then \( y
	\) stays at $13$ for one round. Otherwise, \( y \) goes to \( 3 \).
	Then, if $x$ is at vertex~\( 2 \), $y$ gains positive payoff.
	If $x$ is at vertex~\( 12 \),
	then depending on whether $x$ is likeliest to go from 12 to 1, 12 or 2;
	go to 13, 2, or 3 respectively.
	If \( x \) is at vertex~1, then either avoid incurring negative payoff for 7 rounds, or play a safe strategy for the rest of the game, as described below.

	We show that this deviation has positive payoff for \( y \).
	We consider the states from which $y$ gains positive payoff within one
	round with positive probability but does not, and show that $y$ can avoid
	incurring negative payoff for long enough. Let $p=\varphi_x(12,13)(1)$.
	We split along two cases depending on the value of $p$. When $x$ is at 12 and
	$p>\frac{\delta^7}{1+\delta^7-\delta}$,
	player~$y$ does not gain positive payoff in the following round if and only if
	$x$ stayed at 12 went or to 2 (and $y$ stayed at 13). If \( x \) stayed at 12, nothing has changed. Otherwise, \( x \) went to 2. Then, observe that $y$ can either avoid incurring negative payoff for at least 7 rounds, or play a safe strategy for the rest of the game, as follows:
	\begin{enumerate}
		\item Go from 13 to 3.
		\item If $x$ is at 12, stay at 3 until $x$ is either at 3, 7 or at one of 14, 5, 8.
		\begin{enumerate}
			\item If $x$ is at 3, then play a minmax strategy (which guarantees a payoff of 0 by symmetry).
			\item If $x$ is at 7, go to 2, and stay there until $x$ is at 37 or 4.
			\begin{enumerate}
				\item If $x$ is at 37, walk on the path $2-6-5$.
				\item If $x$ is at 4, walk to 6. If $x$ is then at 8, walk to 5, otherwise stay.
			\end{enumerate}
			\item\label{item:x-at-14-5-8} If $x$ is at either 14, 5 or 8, walk to 37, and stay there until $x$ is at 71 or 4.
			\begin{enumerate}
				\item If $x$ is at 71, go on the path $37-3-2$.
				\item If $x$ is at 4, go to 7 and stay there until $x$ is at 1 or 7. If $x$ is at 1, go to 37. If $x$ is at 7, play a minmax strategy (which guarantees a payoff of 0 by symmetry).
			\end{enumerate}
		\end{enumerate}

		\item If $x$ is at 2, then because $y$ is at 3 and gains positive
		payoff, so in this case is handled by the edge-decisiveness of $G$ (Lemma~\ref{lmm:gamma}).
		\item If $x$ is at 3, then play a minmax strategy (which guarantees a payoff of 0 by symmetry).
		\item If $x$ is at 8, follow step~\ref{item:x-at-14-5-8}, with the following extension. If $x$ reached 71, then $y$ walks on the path $37-3-2-6-5$, with stopping on vertices,
	\end{enumerate}

	Observe that the strategies in cases 3 and 4 are safe for the rest of the game. Combining the analyses of steps 1, 2 and 5, $y$'s expected payoff is at least (up to a $\delta^t(1-\delta)$ factor for some $t\in\N$),
	\[
		p-(1-p)\sum_{k=7}^\infty\delta^k=p-(1-p)\frac{\delta^7}{1-\delta}>\frac{\delta^7}{1+\delta^7-\delta}-\frac{\delta^7}{1+\delta^7-\delta}=0.
	\]

	In the other case of $p$, we have $p\le\frac{\delta^7}{1+\delta^7-\delta}$. Then, in the last step of $y$'s deviation (as described above), if $y$ does not gain positive payoff in the last round of that deviation description, then observe that $y$ can either avoid incurring negative payoff for at least 7 rounds or play a safe strategy for the rest of the game (one can verify this by considering each of the initial positions $(1,3),(12,3),(1,2),(2,2),(12,13),(2,13)$, where $(u,v)$ means that $x$ is at $u$ and $y$ is at $v$).
	Therefore, $y$'s expected payoff is at least (up to a $\delta^t(1-\delta)$ factor for some $t\in\N$),
	\[
		\frac{1}{3}-\frac{2}{3}\cdot\frac{\delta^7}{1-\delta},
	\]
	which is strictly positive for every $\delta<\rho$.

	By Lemma~\ref{lmm:gamma} with $a=g(G)=5$, after $y$ gains positive
	payoff, because $G$ is edge-decisive, $y$ will have overall positive payoff, making the deviation profitable.

	A similar argument works if $\varphi_x(1,1)(v)=1$ for $v\in\s{14,15,16}$: to see this, construct a similar deviation of $y$ (depending on the distribution $\varphi_x(1,1)$), then one can verify that if $y$ attempts to gain positive payoff in a certain round but fails, then $y$ can avoid incurring negative payoff for at least 7 rounds, or play a safe strategy for the rest of the game.

	Now suppose that $\varphi_x(1,1)(13)=1$. Then $y$ has the following winning
	deviation: go from vertex~\( 1 \) to \( 14 \). Then, if
	$\varphi_x(13,14)(1)>\frac{\delta^7}{1+\delta^7-\delta}$ then stay at
	14 for one round. Otherwise, go to 4. Then, if $x$ is at vertex~\( 134
	\), then $y$ gains positive payoff. If $x$ is at vertex~\( 3 \), then depending on whether $x$ is likeliest to go from 3 to 13, 2, 3, 34 or 37, go to 134, 8, 34, 4, 7 respectively. If $x$ is at 1, then avoid incurring negative payoff for 7 rounds or play a safe strategy for the rest of the game, similarly to the $\varphi_x(1,1)(12)=1$ case above.

	A similar argument to the case of $\varphi_x(1,1)(12)=1$ shows that if $y$ gains positive payoff at a
	given round with positive probability but does not, then $y$ can avoid incurring negative payoff for at least 5 rounds. In total, $y$'s expected payoff here is at least (up to a $\delta^t(1-\delta)$ factor for some $t\in\N$),
	\[
		\frac{1}{5}-\frac{4}{5}\cdot\frac{\delta^7}{1-\delta}>0\iff\delta<\rho
	\]
	by definition.

	\paragraph{Vertex~\( 2 \)}
	Suppose that $x$ and $y$ are at vertex~\( 2 \). If $\varphi_x(2,2)(3)=1$, then there
	are two cases. If $\varphi_x(3,3)(37)=1$, then $y$ can keep walking with $x$,
	and by the definitions of WT~equilibria and edge-decisiveness, they will reach
	vertex~\( 1 \), from which we go back to the previous case.
	If $\varphi_x(3,3)(34)=1$, then $y$ has a winning deviation which is the
	analogous deviation to the winning deviation from the case of vertex~\( 1
	\) treated above (here, the deviation of $y$ begins with going to 37).
	Observe that here too, if $y$ fails to gain positive payoff in a certain round, then $y$ can avoid incurring negative payoff for at least 7 rounds, or play a safe strategy for the rest of the game. Therefore, the same constraint $\delta<\rho$ is sufficient here too. Overall, $y$ has a
	winning deviation from vertex~\( 2 \), when $\varphi_x(2,2)(3)=1$.

	Now suppose that $\varphi_x(2,2)(8)=1$. Then $y$ can keep walking with $x$ and
	they will reach vertex~\( 1 \) (by definition of $\varphi_x$), from which
	we go back to the first case.

	Again, Lemma~\ref{lmm:gamma} implies that when $y$ gains positive payoff, $y$'s deviation will remain profitable (when $y$ plays an optimal deviation).

	Therefore, vertices~\( 1 \) and \( 2 \) cannot be part of any
	WT~equilibrium, concluding the proof.
\end{proof}
\end{document}